\newcommand {\R}{{\mathbb R}} 
\newcommand {\PP}{{\mathbb P}}
\newcommand {\EE}{{\mathbb E}}
\newcommand {\e}{{\varepsilon}}
\newcommand {\dd}{{\rm d}}
\newtheoremstyle{dotless}{}{}{\itshape}{}{\bfseries}{}{ }{}
\theoremstyle{dotless}
\newtheorem{theorem}{Theorem}[section]
\newtheorem{corollary}[theorem]{Corollary}
\newtheorem{assumption}[theorem]{Assumption}
\newtheorem{remark}{Remark}[section]
\newtheorem{definition}{Definition}[section]
\normalfont\fontsize{12pt}{0pt}\selectfont\bfseries}
\normalfont\fontsize{10pt}{0pt}\selectfont\bfseries}
\normalfont\fontsize{9pt}{0pt}\selectfont\bfseries}
\title{\bf Estimation of the average number of continuous\\
crossings for non-stationary non-diffusion processes}
\author{Romain Aza\"is\,$^a$ and Alexandre Genadot\,$^b$}
\date{\small
$^a$ Inria Nancy -- Grand Est, Team BIGS and Institut \'Elie Cartan de Lorraine, Nancy, France\\
$^b$ Institut de Math\'ematiques de Bordeaux and Inria Bordeaux -- Sud Ouest, Team CQFD, France
}
\begin{document}

\maketitle

\abstract{Assume that you observe trajectories of a non-diffusive non-stationary process and that you are interested in the average number of times where the process crosses some threshold (in dimension $d=1$) or hypersurface (in dimension $d\geq2$). Of course, you can actually estimate this quantity by its empirical version counting the number of observed crossings. But is there a better way? In this paper, for a wide class of piecewise smooth processes, we propose estimators of the average number of continuous crossings of an hypersurface based on Kac-Rice formulae. We revisit these formulae in the uni- and multivariate framework in order to be able to handle non-stationary processes. Our statistical method is tested on both simulated and real data.}\\
~\\
\textbf{Keywords:} Piecewise deterministic process; Average number of crossings; Plug-in estimators

\section{Introduction}

We consider a random but non-diffusive trajectory $X$ that models some physical or biological phenomenon.
In order to set the ideas down, $X$ may be
issued from a piecewise deterministic Markov process (PDMP) \cite{Da93}
but our theoretical results will be established for a more general class of stochastic models.
In this paper, we are not interested in the estimation of the parameters that govern the underlying model that has generated the trajectory $X$ but in a functional of this trajectory.
Indeed, in numerous situations, the main quantity of interest is related to the average number of crossings of some level (in dimension $1$) or hypersurface (in dimension $d\geq2$) within a given time window.
For instance, if $X$ models the cumulative exposure to some food contaminant \cite{bertail2008}, the toxic effects depend on the time spent beyond a critical threshold. In reliability, $X$ may describe the size of a crack in a certain material \cite{doi:10.1177/1748006X16651170}. Exceeding a dangerosity threshold may lead to rupture and thus to a dramatic event. Unfortunately, the trajectory $X$ is often observed on a discrete temporal grid which time step size is imposed by the measuring devices. In this context, some crossings may be missed in such a way that the crude Monte Carlo-type estimator that only counts the number of observed crossings generally underestimates the theoretical quantity of interest. However, modern datasets often contain more information than the location at the measure time: they may provide the instantaneous velocity. This is typically the case when one studies spatial trajectories captured by a GPS device like the terrestrial and marine movements of lesser black-backed gulls investigated in \cite{Garthe16}. This additional information is crucial because it may give a clue on the likelihood of an unobserved crossing between two successive measure times. The main objective of this article is to show that one can take into account the velocity measurement to better estimate the average number of crossings.

\smallskip

\noindent
In the present article, we deal with piecewise smooth processes (PSP's) as defined in \cite{BL12}. They form a very general class of non-diffusion stochastic processes composed of deterministic trajectories following some differential equation punctuated by random jumps at random times. PDMP's are a particular case of PSP's because of the particular link between the inter-jumping times and the deterministic path that ensures the Markov property to hold. It should be noted that we do not impose any kind of Markov assumption in this paper. Kac-Rice formulae give a concise relation between the average number $C_S(H)$ of crossings of $X$ with a hypersurface $S$ within the time window $[0,H]$ and some features of the underlying stochastic model. They have already been stated for non-stationary one-dimensional PSP's \cite{DM15} and for stationary multidimensional PSP's \cite{BL12}. These papers \cite{BL12,DM15} and our approach are different and complementary. In the applications (typically spatial trajectories captured by GPS), the stochastic process of interest $X$ is often both multidimensional and non-stationary. In Theorem \ref{th:kr:multidim}, we establish under mild conditions the following Kac-Rice formula for multidimensional non-stationary PSP's,
\begin{equation}
\label{eq:csh1}
C_S(H)=\int_S|(r(x),\nu(x))|\int_0^H p_{X(s)}(x)\dd s\,\sigma_{d-1}(\dd x),
\end{equation}
where $r$ is the velocity of the deterministic motion, $\nu$ is a field of unit normals of $S$, $p_{X(t)}$ denotes the density of $X(t)$ and $\sigma_{d-1}$ stands for the Hausdorff measure (see Remark \ref{rem:hausdorff}). The strategy developed to state \eqref{eq:csh1} also gives a fresh look at Kac-Rice formula for one-dimensional processes. Consequently, we also provide a new proof of this formula for one-dimensional PSP's in Corollary \ref{cor:dmf}. In addition, we investigate in Corollaries \ref{crois:dim:1} and \ref{crois:dim:d} the Euclidean-mode setting often used in applications of hybrid Markov models (for instance in \cite{doi:10.1177/1748006X16651170}), which has never been studied from that perspective in the literature to the best of our knowledge.

\smallskip

\noindent
The distribution $p_{X(t)}$ appearing in the Kac-Rice formula \eqref{eq:csh1} is generally unknown but can be estimated (for example by kernel methods) from a dataset of trajectories observed within the time window $[0,H]$. In a wide range of applications, the deterministic motion is assumed to be known because it has been postulated by scientific laws, in particular in physical or biological models. This allows us to propose a new strategy for estimating the average number of crossings $C_S(H)$. If $\widehat{p}_{X(t)}$ denotes an estimate of $p_{X(t)}$, the number of crossings $C_S(H)$ can be approximated by the plug-in estimator
$$\widehat{C}_S(H) = \int_S|(r(x),\nu(x))|\int_0^H \widehat{p}_{X(s)}(x)\dd s\,\sigma_{d-1}(\dd x).$$
In the simulation study presented in Section \ref{s:simu}, we show that $\widehat{C}_S(H)$ better estimates $C_S(H)$ than the Monte Carlo estimator that only returns the empirical mean of observed crossings within the interval $[0,H]$, in particular when the time step size is large. In the real data application given in Section \ref{s:realdata}, we investigate the spatial trajectories of lesser black-backed gulls studied in \cite{Garthe16}. We do not assume any model for the velocity $r$ but we directly estimate the scalar product $|(r(x),\nu(x))|$ appearing in \eqref{eq:csh1} from instantaneous velocity measurements provided in the dataset \cite{Garthe16data}. Estimated Kac-Rice formulae allow us to approximate the average depth of marine and terrestrial trips of the birds within a one-day window, which helps to describe their daily habits.

\smallskip

\noindent
The paper is organized as follows. Section \ref{s:kr} is devoted to Kac-Rice formulae for non-stationary PSP's. The theoretical framework is presented in Subsection \ref{ss:kr:def}, while results for one-dimensional (multidimensional, respectively) processes are given in Subsection \ref{ss:freshlook} (Subsection \ref{ss:crossingsurf}, respectively). The proofs of the results stated in Section \ref{s:kr} have been deferred until Appendix \ref{app:proof2}. Section \ref{sec:sf} is dedicated to the statistical inference procedures. The proofs of the estimation results established in this section have been deferred until Appendix \ref{app:proof3}. A simulation study on PDMP's is provided in Section \ref{s:simu} through three examples: stationary one-dimensional telegraph process in Subsection \ref{tele:1d}, piecewise deterministic simulated annealing in Subsection \ref{ss:pdsa} and non-stationary two-dimensional telegraph process in Subsection \ref{tele:2d}. Finally, real data experiments are investigated in Section \ref{s:realdata}. 

%%%%%%%%%%%%%%%%%%%%%%%%%%%%%%%%%%%%%%
%%%%%%%%%%%%%%%%%%%%%%%%%%%%%%%%%%%%%%

\section{Kac-Rice formulae for piecewise smooth processes}
\label{s:kr}

\subsection{Definitions and background material}\label{ss:kr:def}

\subsubsection{A class of piecewise smooth processes}\label{psp}

A piecewise smooth process (PSP) $X$ on an open domain $\cal X$ of $\R^d$, endowed with the Euclidean norm $\|\cdot\|$ and associated scalar product $(\cdot,\cdot)$, involves continuous-time deterministic motions punctuated by random jumps at random times. Its dynamic is governed by a marked point process $(T_n,M_n)_{n\geq0}$ on $\R_+\times {\cal X}$ and a vector field $r : {\cal X}\to \R^d$, the $T_n$'s being the jump times of the continuous-time trajectory. The sequence of the jump times is assumed to be almost-surely increasing, i.e., $\PP(T_0=0< T_1\leq T_2\leq\ldots)=1$, and almost-surely going to infinity, i.e., $\PP(\lim_{n\to\infty} T_n=\infty)=1$. It should be noticed that the distribution of the $T_i$'s can depend on the marks. The following assumption on the jump times is quite classical in the context of hybrid models (see \cite[Standard conditions (24.8) and Remark (24.9)]{Da93}).

\begin{assumption}\label{assumption:time:mark}
The expected number of jump times over any bounded interval is finite.
\end{assumption}

\noindent
Between two successive jump times $T_n$ and $T_{n+1}$, the process $X$ evolves according to the vector field $r : {\cal X}\to \R^d$, which is assumed to be continuously differentiable on ${\cal X}$, and such that for any $\zeta\in {\cal X}$, the initial value problem
$$
\frac{\dd x}{\dd t}(t)=r(x(t)),\quad x(0)=\zeta,
$$ 
has a unique continuously differentiable solution $\phi(\zeta,\cdot)$ on $\R$. This solution satisfies the flow property,
$$
\forall\,(\zeta,t,s)\in\mathcal{X}\times\R^2,~\phi(\zeta,t+s)=\phi(\phi(\zeta,t),s).
$$
More precisely, the dynamic of the PSP $X$ can be described in an algorithmic fashion as follows. At time $T_0=0$, the process starts at point $X(0)=M_0\in {\cal X}$ and,
$$
\forall\,0\leq t<T_1,~X(t)=\phi(M_0,t-T_0).
$$
At time $T_1$, we set $X(T_1)=M_1$, and so on. We call $X$ a PSP. The process $X$ has the elementary properties to possess a c\`adl\`ag and piecewise continuously differentiable version and to have a number of jumps almost-surely finite and integrable over a finite time window. For technical purposes, the following regularity assumption is made on the density of the process.
\begin{assumption}
\label{assumption:density:new}
For any $t\geq0$, $X(t)$ admits a density $p_{X(t)}$ with respect to the Lebesgue measure on $\R^d$. 
\end{assumption}

\noindent
For a large class of applications, the considered PSP possesses a ``Euclidean-mode'' description. In such a case, we consider the process $(X,Y)$ where $X$ is the Euclidean variable, evolving in ${\cal X}$, and $Y$ is the mode variable, evolving in some discrete state space $\cal Y$. For any $y\in{\cal Y}$, we consider a vector field $r_y : {\cal X}\to \R^d$ which is continuously differentiable on ${\cal X}$. Then, for any $\zeta\in {\cal X}$, the initial value problem
$$
\frac{\dd x}{\dd t}(t)=r_y(x(t)),\quad x(0)=\zeta,
$$ 
has a unique continuously differentiable solution $\phi_y(\zeta,\cdot)$ on $\R$ satisfying the flow property
$$
\phi_y(\zeta,t+s)=\phi_y(\phi_y(\zeta,t),s).
$$
The process is then described as in the previous section except that the marks are now in the product space ${\cal X}\times {\cal Y}$, $M_n=(M_{X,n},M_{Y,n})$, $n\geq0$. At time $T_0=0$, the process starts at point $(X(0),Y(0))=M_0\in {\cal X}\times{\cal Y}$, and,
$$
\forall\,0\leq t<T_1,~(X(t),Y(t))=\left(\phi_{M_{Y,0}}(M_{X,0},t-T_0) , M_{Y,0}\right).
$$
At time $T_1$, we set $(X(T_1),Y(T_1))=M_1$, and so on. In this setting, $Y$ has a piecewise constant evolution and can be seen as some random parametrization of the evolution of $X$.

\subsubsection{A $\mathcal{C}^1$-hypersurface}\label{sbs:def:cross}

Let $S$ be a $\mathcal{C}^1$ and compact hypersurface being the boundary of a bounded domain ${\cal D}$ of $\cal X$ in dimension $d\geq2$. Following \cite[Chapter 1 1.2\,The concept of defining function]{K12}, we denote by $\rho: \R^d\to \R$ a $\mathcal{C}^1$-defining function for $\cal D$,
\begin{itemize}
\item $\rho$ is negative on $\cal D$, i.e., ${\cal D}=\{x\in{\cal X}\,:\,\rho(x)<0\}$,
\item $\rho$ is positive on the complementary of the closure of $\cal D$, i.e., $\bar{{\cal D}}^c=\{x\in{\cal X}\,:\,\rho(x)>0\}$,
\item and the gradient of $\rho$ is positive on $\partial{\cal D}$, i.e., $\nabla \rho(x)\neq0$ for all $x\in\partial{\cal D}=S$.
\end{itemize}
Then, for $x\in S$, $\nu(x)=\nabla\rho(x)/\|\nabla\rho(x)\|$ is the outward unit normal of $S$ at point $x$, well-defined and continuous in a neighborhood of $S$. Notice that on $S$, $n$ is independent of $\rho$. In dimension $d=1$, $S$ is reduced to a level, $S=\{x\}$ and $\nu\equiv1$ by convention.

\subsubsection{The interplay between the piecewise smooth process and the $\mathcal{C}^1$-hypersurface.}

Let $S$ be a surface as defined in the previous section. In this paper, we are interested in continuous crossings of $S$ as defined below.

\begin{definition}
Let $Z$ be an $\cal X$-valued process. We say that $Z$ has a continuous crossing of $S$ at time $s>0$, if $Z(s^-)=Z(s)\in S$ and there is $\delta>0$ such that $Z(t)\notin S$ for $t\in (s-\delta,s+\delta)\setminus\{s\}$. $S$ is then a crossing surface for $Z$.
\end{definition}

\noindent
$S$ is referred to as a crossing surface in the sequel. We denote by $c^Z_S(H)$ the (possibly random) number of continuous crossings within the interval $[0,H]$. In addition, $C^Z_S(H)$ denotes its expectation, $C^Z_S(H) = \EE[c^Z_S(H)]$. When there is no ambiguity, we will simply write $c_S(H)$ and $C_S(H)$, the dependency on $Z$ being implicit.

\smallskip

\noindent
Let $X$ be a PSP with flow $r$ as defined in the previous section. The assumption stating the relation between the deterministic behavior of $X$ and the crossing surface is the following.

\begin{assumption}
\label{assump:dd:tangent}
We assume that the trajectories of the flow are not tangent to $S$, 
\begin{equation}
\label{nt}
\forall\,x\in S,~(r(x),\nu(x))\neq 0.
\end{equation}
For any $x\in{\cal X}$, let $c^{\phi(x,\cdot)}_S(H)$ be the number of crossings of $S$ by the continuous deterministic process $\phi(x,\cdot)$ within the interval $[0,H]$. We assume that
\begin{equation}\label{fnt}
\sup_{x\in {\cal X}}c^{\phi(x,\cdot)}_S(H)<\infty.
\end{equation}
\end{assumption}

\noindent
Under condition \eqref{nt}, for any starting point $x\in{\cal X}$, the number $c^{\phi(x,\cdot)}_S(H)$ is always finite. To not overcomplicate proofs, we assume in condition \eqref{fnt} that this finiteness is uniform in $x$. These hypotheses are not difficult to verify in applications. In dimension $1$, these conditions take the following simple form, $r(x)\neq0$, where $S=\{x\}$. 

\smallskip

\noindent
For a real-valued smooth deterministic process, the following formula gives the number of crossings of a given level.
\begin{theorem}[Kac's counting formula \cite{Kac43}]
Let $x\in\R$, $H>0$, and $f : [0,H]\to\R$ a continuously differentiable function such that 
\begin{itemize}
\item either $f(0)\neq x$ or $f(H)\neq x$,
\item whenever $f(t)=x$ we have $f'(t)\neq0$. 
\end{itemize}
Then,
\begin{equation}\label{kac:formula}
c^f_x(H)=\lim_{\delta\to0}\frac{1}{2\delta}\int_0^H|f'(t)|\mathbb{1}_{|f(t)-x|\leq \delta}\dd t .
\end{equation}
\end{theorem}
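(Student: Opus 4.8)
The plan is to show that, under the stated hypotheses, the level set $\{t\in[0,H]:f(t)=x\}$ is finite and that the right-hand side of \eqref{kac:formula} is in fact \emph{exactly} equal to $c^f_x(H)$ for every sufficiently small $\delta$, so that passing to the limit is trivial.

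First I would record the structural consequences of the two assumptions. Writing $g=f-x$, the second hypothesis (``$g(t)=0\Rightarrow g'(t)\neq0$'') forces every zero of $g$ to be simple, hence isolated; since $[0,H]$ is compact, there are only finitely many of them, say $t_1<\cdots<t_N$, and at each $t_i$ the function $g$ strictly changes sign, so each $t_i$ is a genuine continuous crossing in the sense of the definition, giving $c^f_x(H)=N$. The endpoint hypothesis is used here to guarantee that no zero sits on the boundary $\{0,H\}$, so that all $N$ crossings are interior. By continuity of $f'$ one then selects, around each $t_i$, an open interval $U_i\subset(0,H)$ on which $f'$ keeps a constant nonzero sign, i.e. on which $f$ is a strictly monotone $\mathcal{C}^1$-diffeomorphism onto an open interval containing $x$ in its interior.

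Next I would localize the integral. On the compact set $K=[0,H]\setminus\bigcup_i U_i$, which contains no zero of $g$, the continuous function $|g|$ is bounded below by some $m>0$; hence for every $\delta<m$ the indicator $\mathbb{1}_{|f(t)-x|\leq\delta}$ vanishes on $K$, and the integral reduces to the sum of its contributions over the disjoint neighborhoods $U_i$. On each $U_i$ I would carry out the change of variables $u=f(t)$: since $f$ is monotone there and $[x-\delta,x+\delta]\subset f(U_i)$ for $\delta$ small, the set $\{t\in U_i:|f(t)-x|\leq\delta\}$ is a subinterval on which $\int|f'(t)|\,\dd t$ equals the length of $[x-\delta,x+\delta]$, namely $2\delta$. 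Summing over $i$ yields $\int_0^H|f'(t)|\mathbb{1}_{|f(t)-x|\leq\delta}\,\dd t=2N\delta$ for all small $\delta$, whence the right-hand side of \eqref{kac:formula} equals $N=c^f_x(H)$ identically for small $\delta$ and the limit follows at once.

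The change of variables and the local monotonicity near each crossing are routine. The main obstacle is the uniform separation step: proving that for all sufficiently small $\delta$ the sublevel set $\{|f-x|\leq\delta\}$ really is the disjoint union of one small interval per crossing, with no stray components. This is exactly what the lower bound $m>0$ on the compact complement $K$ provides, and it is the place where the finiteness of the zero set, and hence the compactness argument, is essential. I would also take care, via the endpoint assumption, that each crossing lies in the open interval and therefore contributes the full $2\delta$ rather than a partial amount, which would otherwise spoil the exact normalization by $1/(2\delta)$.
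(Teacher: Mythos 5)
The paper offers no proof of this statement---it is quoted directly from \cite{Kac43}---so there is no internal argument to compare yours against; I assess it on its own terms. Your argument (simple zeros are isolated, hence finitely many on the compact interval; localize the integral to disjoint monotonicity neighborhoods $U_i$ of the zeros using a positive lower bound $m$ for $|f-x|$ on the compact complement $K$; change variables on each $U_i$ to get exactly $2\delta$ per zero) is the standard proof of Kac's counting formula, and every step is correct \emph{provided} every zero of $f-x$ is interior to $[0,H]$. The observation that the normalized integral equals $N$ identically for all small $\delta$, not merely in the limit, is a correct and pleasant strengthening.

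The one genuine problem is your treatment of the endpoint hypothesis. The theorem as stated assumes only the disjunction ``either $f(0)\neq x$ or $f(H)\neq x$,'' which excludes both endpoints lying on the level but not one of them; your claim that this ``guarantees that no zero sits on the boundary'' is not a valid reading, and it is exactly where the argument breaks. If, say, $f(0)=x$ with $f'(0)\neq 0$, that zero has only a one-sided window, contributes approximately $\delta$ rather than $2\delta$ to the integral, and is not a continuous crossing in the paper's sense (crossings require $s>0$), so the right-hand side tends to $N+\tfrac{1}{2}$ while the left-hand side is $N$. In fact, under the literal hypothesis the identity is false: $f(t)=t$ on $[0,1]$ with $x=0$ satisfies both bullet points, has $c^f_0(1)=0$, yet the limit equals $\tfrac{1}{2}$. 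The hypothesis should be the conjunction $f(0)\neq x$ \emph{and} $f(H)\neq x$---Kac's classical assumption, and the condition that holds almost surely wherever the paper invokes the formula, since $X(0)$ and $X(H)$ have densities---and under that reading your proof is complete; otherwise boundary zeros must be handled separately and the formula acquires a correction of $\tfrac{1}{2}$ per boundary zero.
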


\noindent
Our main probabilistic assumption about the relation between the surface and the piecewise deterministic smooth process is the following.
\begin{assumption}
\label{assumption:density}
The stochastic behavior of the PSP $X$ regarding the surface $S$ is as follows.
\begin{itemize}
\item Almost-surely, the PSP $X$ does not jump from or to $S$. In particular, the number of continuous crossings of $S$ by $X$ is almost-surely equal to the number of continuous crossings of $0$ by the process $(\rho(X(t))_{t\in[0,H]}$.
\item The family of density of the process is such that the map $(x,t)\mapsto p_{X(t)}(x)$ is continuous on ${\cal V}_S\times [0,H]$ with ${\cal V}_S$ a neighborhood of $S$, for any time horizon $H$.
\end{itemize}
\end{assumption}

\noindent
In the Euclidean-mode setting $Z=(X,Y)\in{\cal X}\times {\cal Y}$, the crossing surface $S$ is associated to the Euclidean domain $\cal X$. The conditions that we impose in this framework are the same as in the previous part but on the Euclidean component of the process for each value of the mode. We would like to point out that PSP's with such a decomposition are mentioned in \cite[Remark 3.4]{BL12}.

\begin{remark}
The formalism presented here is close to the setting of \cite[2.\,Model description]{BL12} except that we do not impose any kind of stationarity assumption.
\end{remark}

\begin{remark}
\label{rem:hausdorff}
In the present paper, integration over the hypersurface $S$ is performed with respect to the surface measure $\sigma_{d-1}$ of $\R^d$.  For regular hypersurfaces as considered in the present paper, when $d=1$, $\sigma_{0}$ is simply the counting measure and when $d=2$ ($d=3$, respectively), integration with respect to $\sigma_{1}$ ($\sigma_2$, respectively) reduces to line integrals (surface integrals, respectively). We refer the reader to \cite[Chapter 3]{K12} for more informations about surface measures (identical to Hausdorff measures in our context).
\end{remark}

\subsection{Average number of continuous level crossings}

\label{ss:freshlook}

This section aims at giving a new insight on the Kac-Rice formula counting the average number of crossings of one-dimensional PSP's. We begin this section with the presentation of a formula linking the number of continuous crossings with the local time.

\begin{theorem}[Local time-crossing relation]
\label{crois:d1}
For any level $x$ and any time horizon $H$, we have almost-surely
\begin{equation}
\label{crois:dim:1:formula}
c_x(H)=|r(x)|l_x(H),
\end{equation}
where $l_x(H)$ is the local time spent at level $x$ by the process $X$ between times $0$ and $H$,
$$l_x(H) = \lim_{\delta\to0}\frac{1}{2\delta}\int_0^H\mathbb{1}_{|X(t)-x|\leq \delta}\dd t .$$
\end{theorem}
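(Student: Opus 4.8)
The plan is to apply Kac's counting formula \eqref{kac:formula} on each deterministic piece of the trajectory between consecutive jumps, to sum the contributions, and then to use the continuity of $r$ at the level $x$ in order to factor $|r(x)|$ out of the resulting occupation integral.

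First I would work on a realization having finitely many jump times in $[0,H]$, which holds almost surely under Assumption~\ref{assumption:time:mark}, and which does not jump from or to $x$ by Assumption~\ref{assumption:density}. Writing $0=T_0<T_1<\cdots$ for the successive jump times, on $[T_n,T_{n+1})$ one has $X(t)=\phi(M_n,t-T_n)$, so $X$ extends to a $\mathcal{C}^1$ function $f_n$ on the closed interval with $f_n'(t)=r(f_n(t))$ because $\phi$ solves the flow ODE. The two hypotheses of Kac's formula then hold on each piece: at the endpoints $X$ is not at level $x$ (the process does not jump to or from $x$, while $\PP(X(0)=x)=\PP(X(H)=x)=0$ since these laws have densities by Assumption~\ref{assumption:density:new}), and whenever $f_n(t)=x$ we have $f_n'(t)=r(x)\neq0$ by the non-tangency condition \eqref{nt}.

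Since $X$ does not jump from or to $x$, every continuous crossing lies strictly inside one of the pieces, so $c_x(H)$ equals the finite sum over $n$ of the crossing numbers of the $f_n$. Applying \eqref{kac:formula} to each $f_n$, interchanging the finite sum with the limit, and noting that $f_n'=r(X)$ on the piece while the jump times form a Lebesgue-null set, I would obtain
\begin{equation*}
c_x(H)=\lim_{\delta\to0}\frac{1}{2\delta}\int_0^H|r(X(t))|\,\mathbb{1}_{|X(t)-x|\leq\delta}\,\dd t.
\end{equation*}
It then remains to replace $|r(X(t))|$ by $|r(x)|$ under the limit. Setting $\omega(\delta)=\sup_{|u-x|\leq\delta}\big||r(u)|-|r(x)|\big|\to0$ and $l_x^\delta(H)=\frac{1}{2\delta}\int_0^H\mathbb{1}_{|X(t)-x|\leq\delta}\,\dd t$, the displayed integral differs from $|r(x)|\,l_x^\delta(H)$ by at most $\omega(\delta)\,l_x^\delta(H)$ on the event $\{|X(t)-x|\leq\delta\}$. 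For $\delta$ small one has $|r(X(t))|\geq|r(x)|/2$ there, hence $l_x^\delta(H)\leq\frac{2}{|r(x)|}\cdot\frac{1}{2\delta}\int_0^H|r(X(t))|\,\mathbb{1}_{|X(t)-x|\leq\delta}\,\dd t$, whose limit is the finite number $\frac{2}{|r(x)|}c_x(H)$; so $l_x^\delta(H)$ stays bounded, the error vanishes, and $|r(x)|\,l_x^\delta(H)\to c_x(H)$. This yields existence of $l_x(H)$ together with the identity $c_x(H)=|r(x)|\,l_x(H)$.

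The step I expect to be the main obstacle is this last one: a priori the occupation limit $l_x(H)$ is not known to exist, so $|r(x)|$ cannot simply be pulled through a pre-existing limit. The resolution is to extract the a priori bound on $l_x^\delta(H)$ from the already-established convergence of the weighted integral to the finite quantity $c_x(H)$, crucially using $r(x)\neq0$; this simultaneously gives existence of the local time and the claimed formula. Some care is also required to verify the endpoint and non-degeneracy hypotheses of Kac's formula on every piece, which is precisely where Assumptions~\ref{assump:dd:tangent} and \ref{assumption:density} enter.
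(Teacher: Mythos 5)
Your proof is correct, and its first half coincides with the paper's: both apply Kac's counting formula \eqref{kac:formula} piece by piece to the trajectory (you spell out the piecewise decomposition and the endpoint/non-tangency checks that the paper handles by invoking the arguments of \cite{DM15}) to arrive almost surely at $c_x(H)=\lim_{\delta\to0}\frac{1}{2\delta}\int_0^H|r(X(t))|\,\mathbb{1}_{|X(t)-x|\leq\delta}\,\dd t$. The two arguments diverge exactly at the step you identify as the main obstacle, factoring $|r(x)|$ out of a limit not yet known to exist. You resolve it with continuity alone: the error is at most $\omega(\delta)\,l_x^\delta(H)$, and you bootstrap an a priori bound on $l_x^\delta(H)$ from the already-convergent weighted integral via $|r|\geq|r(x)|/2$ near $x$. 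The paper instead exploits the standing hypothesis that $r$ is $\mathcal{C}^1$: by the mean value theorem the error term is bounded by $\frac{1}{2\delta}\int_0^H|r'(\xi_{t,x})|\,|X(t)-x|\,\mathbb{1}_{|X(t)-x|\leq\delta}\,\dd t\leq\frac12\int_0^H|r'(\xi_{t,x})|\,\mathbb{1}_{|X(t)-x|\leq\delta}\,\dd t$, so the singular factor $1/(2\delta)$ is absorbed outright and the bound tends to zero (the time spent exactly at level $x$ being a.s.\ Lebesgue-null), with no a priori control of $l_x^\delta(H)$ needed; the existence of the local time then follows, as in your conclusion, because $|r(x)|\neq0$. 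Your variant buys a slightly weaker hypothesis for this step (continuity of $r$ at $x$ rather than differentiability), at the price of the extra bootstrap; the paper's mean-value trick is shorter given that $r\in\mathcal{C}^1$ is assumed throughout anyway.
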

\begin{proof}The proof lies in Appendix \ref{app:ss:crois:d1}.\end{proof}

\noindent
The Dalmao-Mordecki formula for the expectation of the number of crossings of one-dimensional PSP's, presented below but first established in \cite[5.\,Generalization of Borovkov-Last's formula]{DM15}, can be stated as a corollary of Theorem \ref{crois:d1}.

\begin{corollary}[Dalmao-Mordecki formula]
\label{cor:dmf}
For any level $x$ and any time horizon $H$, we have
\begin{equation}
\label{Kac-Rice:dim:1}
C_x(H)=|r(x)|\EE(l_x(H))=|r(x)|\int_0^H p_{X(s)}(x)\dd s.
\end{equation}
The average number of upward (downward, respectively) continuous crossings is obtained by replacing the absolute value of $r$ by its positive part (negative part, respectively) in the stated formulae.
\end{corollary}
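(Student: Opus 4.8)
The plan is to deduce both equalities from the almost-sure identity \eqref{crois:dim:1:formula} of Theorem \ref{crois:d1}, the only genuine work being an exchange between expectation and the limit defining the local time. For the first equality I would simply take expectations in $c_x(H)=|r(x)|l_x(H)$. Since $r(x)$ is a deterministic constant---the field evaluated at the fixed level $x$---it factors out of the expectation, so that $C_x(H)=\EE[c_x(H)]=|r(x)|\,\EE[l_x(H)]$. At this stage I would also record that $l_x(H)$ is integrable: indeed $c_x(H)=|r(x)|l_x(H)$ with $|r(x)|\neq0$ by \eqref{nt}, and $c_x(H)$ is dominated by the number of inter-jump intervals (integrable by Assumption \ref{assumption:time:mark}) times the uniform bound \eqref{fnt} on the number of crossings per deterministic flow segment.

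For the second equality, write $L_\delta=\frac{1}{2\delta}\int_0^H\mathbb{1}_{|X(t)-x|\leq\delta}\dd t$, so that $l_x(H)=\lim_{\delta\to0}L_\delta$ almost surely. I would first evaluate $\EE[L_\delta]$ by Tonelli's theorem, obtaining $\EE[L_\delta]=\int_0^H\frac{1}{2\delta}\int_{x-\delta}^{x+\delta}p_{X(t)}(u)\,\dd u\,\dd t$, since $\PP(|X(t)-x|\leq\delta)=\int_{x-\delta}^{x+\delta}p_{X(t)}(u)\,\dd u$. By the continuity of $(u,t)\mapsto p_{X(t)}(u)$ on $\mathcal{V}_S\times[0,H]$ postulated in Assumption \ref{assumption:density} and the compactness of $[0,H]$, the inner average converges to $p_{X(t)}(x)$ uniformly in $t$, whence $\lim_{\delta\to0}\EE[L_\delta]=\int_0^H p_{X(t)}(x)\,\dd t$, which is the announced right-hand side.

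It then remains to justify $\EE[l_x(H)]=\lim_{\delta\to0}\EE[L_\delta]$, and this exchange is the main obstacle. I would obtain it by dominated convergence. Using the continuity of $r$ and $r(x)\neq0$, fix a deterministic $\delta^\star>0$ with $\inf_{|u-x|\leq\delta^\star}|r(u)|\geq|r(x)|/2$. On the set $\{t:|X(t)-x|\leq\delta\}$ one then has $|X'(t)|=|r(X(t))|\geq|r(x)|/2$, so that for $\delta\leq\delta^\star$ the occupation-time (Banach indicatrix) identity applied on each inter-jump interval yields $L_\delta\leq\frac{2}{|r(x)|}\cdot\frac{1}{2\delta}\int_{x-\delta}^{x+\delta}N_X(u)\,\dd u\leq\frac{2}{|r(x)|}\sup_{|u-x|\leq\delta^\star}N_X(u)$, where $N_X(u)$ is the number of times $X$ visits level $u$ on $[0,H]$. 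The right-hand side is $\delta$-independent and integrable by the same argument as in the first paragraph, since transversality persists on the whole neighborhood $|u-x|\leq\delta^\star$ and \eqref{fnt} therefore bounds the crossings of every such level. Dominated convergence then gives $\EE[L_\delta]\to\EE[l_x(H)]$, and comparison with the computation of the previous paragraph closes the proof. The delicate point is precisely that the naive pointwise bound on $L_\delta$ degenerates like $1/\delta$; the occupation-time rewriting on a \emph{fixed} transversal neighborhood is what produces a genuinely integrable, $\delta$-uniform dominating function.

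Finally, for the directed statement I would note that at any continuous crossing time $s$ one has $X(s)=x$ and hence $X'(s)=r(X(s))=r(x)$, so the crossing direction is entirely determined by the sign of the constant $r(x)$: all continuous crossings are upward when $r(x)>0$ and downward when $r(x)<0$. Consequently $r(x)^+$ (respectively $r(x)^-$) captures exactly the upward (respectively downward) contribution, and substituting $|r(x)|$ by $r(x)^\pm$ in \eqref{Kac-Rice:dim:1} delivers the corresponding one-sided formulae.
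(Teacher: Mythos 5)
Your proof is correct and follows essentially the same route as the paper: take expectations in the local time--crossing relation of Theorem \ref{crois:d1}, justify the exchange of limit and expectation by dominated convergence using transversality of $r$ near $x$ together with the a.s. finite, integrable number of jumps, and identify $\lim_{\delta\to0}\EE[L_\delta]$ via Fubini--Tonelli and the continuity of the density (the paper bounds the time spent per visit of the band $[x-\delta,x+\delta]$ directly by $2\delta/\inf|r|$ rather than through the Banach indicatrix, but this is the same estimate). One small imprecision: \eqref{fnt} as stated concerns only the fixed level $x$, so your uniform integrable bound on $\sup_{|u-x|\leq\delta^\star}N_X(u)$ should rather be justified by the monotonicity of one-dimensional autonomous flows (each inter-jump segment meets each level at most once, so $N_X(u)$ is at most the number of jumps plus one), which is also the fact implicitly underlying the paper's own domination.
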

\begin{proof}The proof lies in Appendix \ref{app:ss:cor:dmf}.\end{proof}

\noindent
Let us notice that the value of the average number of continuous crossings $C_x(H)$ can be actually greatly affected  by the initial distribution of the process. For instance, the linear process defined for $t\in[0,H]$ by
$$
X(t)= (1+t)\mathbb{1}_{X(0)>1}+(1-t)\mathbb{1}_{X(0)\leq1},
$$
that is increasing starting from $x>1$ but decreasing starting from $x\leq 1$, have crossings of the threshold $2$ that clearly depends on the initial distribution of the process. In formula \eqref{Kac-Rice:dim:1}, this dependence is given implicitly by the actual expression of the density $p_{X(s)}$ of $X$ at time $s$ whose actual expression can possibly be highly dependent on $X(0)$ as in the considered example.

\smallskip

\noindent
We now state the formula for the average number of continuous crossings in the Euclidean-mode setting. Since the state space of the mode is finite, the proof is, up to obvious changes, identical to the proofs of Theorem \ref{crois:d1} and Corollary \ref{cor:dmf}.

\begin{corollary}[Local time-crossing relation and Kac-Rice formula in the Euclidean-mode setting]
\label{crois:dim:1}
In the Euclidean-mode setting, for any level $x$ and any time horizon $H$, we have almost-surely
\begin{equation*}
c_x(H)=\sum_{y\in\cal Y}|r_y(x)|l_{(x,y)}(H),
\end{equation*}
where $l_{(x,y)}(H)$ is the local time spent at level $x$ by $X$ when the mode is $y$,
$$
l_{(x,y)}(H)=\lim_{\delta\to0}\frac{1}{2\delta}\int_0^H\mathbb{1}_{|X(t)-x|\leq \delta,\,Y(t)=y}\dd t.
$$
The average number of crossings is then
\begin{equation*}
C_x(H)=\sum_{y\in\cal Y}|r_y(x)|\int_0^H p_{(X(s),Y(s))}(x,y)\dd s.
\end{equation*}
\end{corollary}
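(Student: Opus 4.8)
The plan is to mirror the one-dimensional arguments of Theorem \ref{crois:d1} and Corollary \ref{cor:dmf}, exploiting the fact that the mode $Y$ is piecewise constant and takes values in the finite set $\mathcal{Y}$. First I would establish the almost-sure identity. Fix a realization with jump times $T_0=0<T_1<\dots<T_N\le H$; these are finite in number by the elementary properties of PSP's. On each open interval $(T_n,T_{n+1})$ the mode is frozen at some value $Y(t)\equiv M_{Y,n}\in\mathcal{Y}$ and the Euclidean component is a $\mathcal{C}^1$ solution of $\dot X=r_{M_{Y,n}}(X)$, so that $X'(t)=r_{M_{Y,n}}(X(t))$. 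Since Assumption \ref{assumption:density} guarantees that $X$ crosses the level $x$ neither at a jump time nor by jumping onto $x$, Kac's counting formula \eqref{kac:formula} applies on each segment, and summing over $n$ yields
\[
c_x(H)=\lim_{\delta\to0}\frac{1}{2\delta}\int_0^H|X'(t)|\,\mathbb{1}_{|X(t)-x|\le\delta}\,\dd t .
\]

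Next I would split this integral according to the value of the mode. Writing $\mathbb{1}_{|X(t)-x|\le\delta}=\sum_{y\in\mathcal{Y}}\mathbb{1}_{|X(t)-x|\le\delta,\,Y(t)=y}$ and using that $X'(t)=r_y(X(t))$ on $\{Y(t)=y\}$, the previous display becomes a finite sum over $y$. On each summand, continuity of $r_y$ at $x$ lets me replace $|r_y(X(t))|$ by $|r_y(x)|$ up to an error that is $o(1)$ uniformly on $\{|X(t)-x|\le\delta\}$ and hence disappears in the limit $\delta\to0$. Factoring out $|r_y(x)|$ and recognizing the remaining limit as $l_{(x,y)}(H)$ gives the claimed relation $c_x(H)=\sum_{y\in\mathcal{Y}}|r_y(x)|\,l_{(x,y)}(H)$; here the non-tangency condition $r_y(x)\ne0$ ensures that each term is well posed.

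Finally I would pass to expectations. By linearity and finiteness of $\mathcal{Y}$, one has $C_x(H)=\sum_{y\in\mathcal{Y}}|r_y(x)|\,\EE[l_{(x,y)}(H)]$, so it remains to evaluate each $\EE[l_{(x,y)}(H)]$. Interchanging the expectation with the limit and the time integral --- justified by Tonelli together with a domination argument as in the proof of Corollary \ref{cor:dmf} --- gives $\EE[l_{(x,y)}(H)]=\lim_{\delta\to0}\frac{1}{2\delta}\int_0^H\PP(|X(t)-x|\le\delta,\,Y(t)=y)\,\dd t$. Writing the inner probability as $\int_{x-\delta}^{x+\delta}p_{(X(t),Y(t))}(u,y)\,\dd u$ and using the continuity near $x$ of the joint (sub-)density supplied by Assumption \ref{assumption:density}, the limit equals $p_{(X(t),Y(t))}(x,y)$, which yields $\EE[l_{(x,y)}(H)]=\int_0^H p_{(X(s),Y(s))}(x,y)\,\dd s$ and hence the announced Kac-Rice formula.

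I expect the main obstacle to be the same as in the scalar case: rigorously justifying the interchange of the $\delta\to0$ limit with the expectation and the time integral, which requires a uniform (in $\delta$) integrable bound on $\frac{1}{2\delta}\int_0^H\mathbb{1}_{|X(t)-x|\le\delta}\,\dd t$ coming from the uniform finiteness of deterministic crossings imposed in Assumption \ref{assump:dd:tangent}. Everything else reduces to bookkeeping over the finitely many modes, which is precisely the ``obvious change'' the statement alludes to.
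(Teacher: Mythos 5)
Your proposal is correct and matches the paper's intent exactly: the paper gives no separate proof for this corollary, stating only that it follows from the proofs of Theorem \ref{crois:d1} and Corollary \ref{cor:dmf} ``up to obvious changes,'' and your mode-by-mode decomposition, Kac's formula between jumps, and the Tonelli/domination argument for the expectation are precisely those changes spelled out. The only cosmetic deviation is in controlling $\frac{1}{2\delta}\int_0^H\bigl(|r_y(X(t))|-|r_y(x)|\bigr)\mathbb{1}_{|X(t)-x|\le\delta,\,Y(t)=y}\,\dd t$: you invoke continuity of $r_y$ plus boundedness of the normalized occupation time, whereas the paper uses the mean value theorem (with $r'$ continuous) so that the factor $|X(t)-x|\le\delta$ cancels the $\frac{1}{2\delta}$ directly --- both are valid under the paper's $\mathcal{C}^1$ hypothesis on the vector fields.
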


\noindent
In the stationary case, Theorem \ref{cor:dmf} and Corollary \ref{crois:dim:1} take the following simple form.
\begin{corollary}[Stationary case]
Under the hypotheses of Corollary \ref{cor:dmf}, if the density $p_{X(0)}$ is stationary for the process $X$, we have
$$
C_x(H)=H|r(x)|p_{X(0)}(x).
$$ 
In the same line, under the hypotheses of Corollary \ref{crois:dim:1}, if the measure $p_{(X(0),Y(0))}$ is stationary for the process $(X,Y)$, we have
$$
C_x(H)=H\sum_{y\in\cal Y}|r_y(x)|p_{(X(0),Y(0))}(x,y).
$$ 
\end{corollary}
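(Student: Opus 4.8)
The plan is to derive both identities as immediate specializations of the formulae already established in Corollary \ref{cor:dmf} and Corollary \ref{crois:dim:1}, using only the fact that stationarity of the initial law collapses the time integral appearing in those expressions.

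First I would treat the one-dimensional case. By Corollary \ref{cor:dmf}, for any level $x$ and any horizon $H$,
$$
C_x(H)=|r(x)|\int_0^H p_{X(s)}(x)\dd s.
$$
The assumption that $p_{X(0)}$ is stationary for $X$ means precisely that the time-$s$ marginal density coincides with the initial one for every $s\geq0$, i.e.\ $p_{X(s)}(x)=p_{X(0)}(x)$ for all $s\in[0,H]$. Substituting this into the integral makes the integrand constant in $s$, so that $\int_0^H p_{X(s)}(x)\dd s = H\,p_{X(0)}(x)$, and the claimed formula $C_x(H)=H|r(x)|p_{X(0)}(x)$ follows at once.

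Next I would repeat the same argument in the Euclidean-mode setting. Corollary \ref{crois:dim:1} gives
$$
C_x(H)=\sum_{y\in\cal Y}|r_y(x)|\int_0^H p_{(X(s),Y(s))}(x,y)\dd s.
$$
Stationarity of the joint law $p_{(X(0),Y(0))}$ now yields $p_{(X(s),Y(s))}(x,y)=p_{(X(0),Y(0))}(x,y)$ for every $s$ and every mode $y\in\cal Y$. Each integral therefore equals $H\,p_{(X(0),Y(0))}(x,y)$, and pulling the common factor $H$ out of the finite sum over $y$ delivers the second identity.

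There is essentially no analytic obstacle here: both statements reduce to the observation that a stationary marginal is independent of time, after which the integral over $[0,H]$ is elementary. The only point deserving care is to make explicit what \emph{stationary} means in this context — namely that all one-dimensional marginals of the process (respectively of the joint process $(X,Y)$) agree with the initial one — so that the substitution $p_{X(s)}=p_{X(0)}$ is justified uniformly in $s$; the finiteness of $\cal Y$ guarantees that the interchange of the sum and the integral in the second part causes no difficulty.
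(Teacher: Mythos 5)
Your proof is correct and follows exactly the route the paper intends: the corollary is an immediate specialization of Corollary \ref{cor:dmf} and Corollary \ref{crois:dim:1}, obtained by substituting $p_{X(s)}=p_{X(0)}$ (respectively $p_{(X(s),Y(s))}=p_{(X(0),Y(0))}$) under stationarity and evaluating the now-constant time integral, which is why the paper states it without a separate proof.
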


\noindent
Assumptions \ref{assump:dd:tangent} and \ref{assumption:density} are done in order that no tangency occur. However, in the Euclidean-mode setting, we can still handle situations where tangencies occur in a parallel fashion, that is $r_y(x)=0$ for some level $x$ with some mode $y$. But in this case, our formula does not count crossings of the form:
\begin{center}
\begin{tikzpicture}[scale=0.5]
\draw[dashed] (-1,1) node[left]{{\rm level}}--(19,1);
\draw (0,0)--(1,1)--(2,1)--(3,2);
\draw (5,0)--(6,1)--(7,1)--(8,0);
\draw (10,2)--(11,1)--(12,1)--(13,0);
\draw (15,2)--(16,1)--(17,1)--(18,2);
\end{tikzpicture}
\end{center}
For some processes, such situations may occur but with zero probability, making the stated formula still useful in practice.

\begin{remark}
The law of the PSP is involved in the expression of $C_x(H)$ through the integral
\begin{equation}
\label{eq:exp:lt}
\int_0^H p_{X(s)}(x)\dd s.
\end{equation}
From a pratical point of view, the actual computation of such a term may be desirable. However, even for very simple cases, analytical expressions of such a quantity can be either unreachable or untractable. In the Markov case, numerical approximations of \eqref{eq:exp:lt} can be built through the Fokker-Planck equation which reads,
$$
\forall\,x\in{\cal X},~\partial_t p_{X(t)}(x)={\cal L}^\ast p_{X(t)}(x),
$$
with initial condition $p_{X(0)}(x)$, where ${\cal L}^\ast$ is the adjoint of the generator of $X$.
\end{remark}

\subsection{Average number of hypersurface continuous crossings}
\label{ss:crossingsurf}

This section is devoted to the generalization of the formulae of Subsection \ref{ss:freshlook} to the multi-dimensional case. Let $(X(t),t\in[0,H])$ be a PSP in dimension $d\geq2$ and $S$ a hypersurface as described in Subsection \ref{psp}. The following theorem is complementary to the main theorem of \cite{BL12}.

\begin{theorem}[Kac-Rice formula in the multidimensional case]
\label{th:kr:multidim}
For any crossing surface $S$ and time horizon $H$, we have
\begin{equation}\label{KacRice:dd}
C_S(H)=\int_S|(r(x),\nu(x))|\int_0^H p_{X(s)}(x)\dd s\,\sigma_{d-1}(\dd x).
\end{equation}
\end{theorem}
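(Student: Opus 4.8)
The plan is to reduce the $d$-dimensional crossing problem to a one-dimensional one for the scalar process $W(t)=\rho(X(t))$, apply Kac's counting formula pathwise, and then turn the resulting occupation-type integral into a surface integral via the coarea formula. By the first item of Assumption \ref{assumption:density}, almost surely the continuous crossings of $S$ by $X$ coincide with the continuous crossings of the level $0$ by $W$. On each inter-jump interval $[T_n,T_{n+1})$ the map $W$ is continuously differentiable with $W'(t)=(\nabla\rho(X(t)),r(X(t)))$, and the non-tangency condition \eqref{nt} guarantees that at every zero of $W$ lying on $S$ one has $W'(t)=\|\nabla\rho(x)\|\,(\nu(x),r(x))\neq0$, so all such crossings are transversal. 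Since almost surely there are finitely many jumps and $X$ never jumps onto or across $S$, I would apply Kac's counting formula \eqref{kac:formula} on each smooth piece and sum, obtaining the pathwise identity
\[
c_S(H)=\lim_{\delta\to0}\frac{1}{2\delta}\int_0^H\bigl|(\nabla\rho(X(t)),r(X(t)))\bigr|\,\mathbb{1}_{|\rho(X(t))|\leq\delta}\,\dd t=:\lim_{\delta\to0}I_\delta.
\]

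Next I would compute the expectation of the approximant $I_\delta$. By Tonelli and Assumption \ref{assumption:density:new},
\[
\EE[I_\delta]=\frac{1}{2\delta}\int_0^H\int_{\{|\rho(z)|\leq\delta\}}\bigl|(\nabla\rho(z),r(z))\bigr|\,p_{X(t)}(z)\,\dd z\,\dd t.
\]
Applying the coarea formula to the inner integral writes it as the average over $u\in(-\delta,\delta)$ of the surface integrals of $|(\nabla\rho,r)|\,p_{X(t)}/\|\nabla\rho\|$ over the level sets $\rho^{-1}(u)$. Using $\rho^{-1}(u)\to S$ as $u\to0$, together with the continuity of $(x,t)\mapsto p_{X(t)}(x)$ near $S$ (second item of Assumption \ref{assumption:density}) and of $r,\nabla\rho$, this average converges as $\delta\to0$, for each fixed $t$, to
\[
\int_S\frac{\bigl|(\nabla\rho(x),r(x))\bigr|}{\|\nabla\rho(x)\|}\,p_{X(t)}(x)\,\sigma_{d-1}(\dd x)=\int_S\bigl|(r(x),\nu(x))\bigr|\,p_{X(t)}(x)\,\sigma_{d-1}(\dd x),
\]
since $\nu=\nabla\rho/\|\nabla\rho\|$ on $S$. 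A dominated convergence in the $t$-integral (the densities and $|(\nabla\rho,r)|$ being uniformly bounded on the compact neighbourhood of $S$) followed by a final Fubini then gives that $\lim_{\delta\to0}\EE[I_\delta]$ equals the right-hand side of \eqref{KacRice:dd}.

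It remains to identify $\lim_{\delta\to0}\EE[I_\delta]$ with $\EE[\lim_{\delta\to0}I_\delta]=C_S(H)$, and I expect this interchange of limit and expectation to be the main obstacle. The tool I would use is the Banach-indicatrix representation
\[
I_\delta=\frac{1}{2\delta}\int_{-\delta}^{\delta}\#\{t\in[0,H]:\rho(X(t))=u\}\,\dd u,
\]
coming from the one-dimensional area formula applied to the $C^1$ pieces of $W$. By \eqref{nt} and continuity, $|(\nabla\rho,r)|$ is bounded below by some $m>0$ on a tube $\{|\rho|\leq\delta_0\}$ around the compact surface $S$; hence $W$ is strictly monotone during each passage through the tube, every such passage crosses the level $0$, and each crosses any nearby level $u$ at most once. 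Therefore, for $|u|\leq\delta_0$, the number of level-$u$ crossings of each smooth piece is controlled by its number of level-$0$ crossings, which is bounded uniformly by \eqref{fnt}. Consequently, for $\delta\leq\delta_0$, $I_\delta$ is dominated by a fixed constant times the number of jumps before $H$, an integrable random variable by Assumption \ref{assumption:time:mark}. Dominated convergence then yields $\EE[I_\delta]\to C_S(H)$, and comparison with the previous paragraph establishes \eqref{KacRice:dd}.
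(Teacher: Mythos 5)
Your proposal is correct and takes essentially the same route as the paper's proof: both reduce to the scalar process $\rho(X(\cdot))$ via Assumption \ref{assumption:density}, apply Kac's counting formula \eqref{kac:formula} pathwise, justify the exchange of limit and expectation by a domination resting on the non-tangency condition \eqref{nt} and the uniform crossing bound \eqref{fnt}, and convert the $\delta$-tube integral into the surface integral over $S$ (where the paper cites Krantz--Parks, you invoke the coarea formula, which is the same computation). The only divergence is presentational: the paper dispatches the dominated-convergence step by appealing to ``the same arguments as in dimension one,'' whereas you make that step self-contained through the Banach-indicatrix representation of the Kac approximant and the resulting bound by a constant times the (integrable) number of jumps.
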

\begin{proof}The proof lies in Appendix \ref{app:ss:th:kr:multidim}.\end{proof}

\noindent
When $d=1$, the hypersurface $S$ reduces to the point $\{x\}$. The measure $\sigma_0$ being the counting measure, the multidimensional formula \eqref{KacRice:dd} and the one-dimensional formula \eqref{Kac-Rice:dim:1} agree if we set the normal to a point in $\R$ to be equal to the scalar $1$. We also notice that the term $\int_0^H p_{X(s)}(x)\dd s$, corresponding to the expectation of the local time at point $x$, also appears as in dimension $1$. However, we are not able to give a concise relation between crossing and local time in the multidimensional case. We simply notice that we have the following relation
$$
c_S(H)\leq\sup_{x\in S}|(r(x),\nu(x))|l_S(H),
$$
where $l_S(H)$ is the local-time spent in $S$ by $X$ within the interval $[0,H]$,
$$
l_S(H)=\lim_{\delta\to0}\frac{1}{2\delta}\int_0^H \mathbb{1}_{{\cal T}(S,\delta)}(X(s))\dd s,
$$
with ${\cal T}(S,\delta)$ the $\delta$-tube of elements of $\cal X$ being at distance less than $\delta$ from $S$.

\smallskip

\noindent
In the stationary case, Theorem \ref{th:kr:multidim} takes the following simple form.
\begin{corollary}
\label{rem:md:statio}
When $p_{X(0)}$ is a stationary density for the process $X$, we have
\begin{equation}
\label{eq:csh:md:statio}
C_S(H)=H \int_S|(r(x),\nu(x))|p_{X(0)}(x) \sigma_{d-1}(\dd x).
\end{equation}
\end{corollary}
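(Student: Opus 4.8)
The plan is to obtain the stationary formula \eqref{eq:csh:md:statio} as an immediate specialization of the general Kac-Rice formula of Theorem \ref{th:kr:multidim}. The only ingredient beyond \eqref{KacRice:dd} is the meaning of stationarity for the marginal densities of the process, so the work reduces to interpreting the hypothesis correctly and then collapsing the inner time integral.

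First I would recall what it means for $p_{X(0)}$ to be a stationary density for $X$: the one-dimensional marginal laws of the process are invariant in time, so that for every $s\geq0$ the density of $X(s)$ coincides with $p_{X(0)}$, that is $p_{X(s)}(x)=p_{X(0)}(x)$ for (Lebesgue-almost) every $x\in\mathcal{X}$. In particular this identity holds on the neighborhood $\mathcal{V}_S$ of $S$ on which, by Assumption \ref{assumption:density}, the map $(x,s)\mapsto p_{X(s)}(x)$ is continuous, so the substitution below is legitimate for every $x\in S$. I would then insert this identity into the inner time integral appearing in \eqref{KacRice:dd}. Since the integrand no longer depends on $s$, the integral collapses to
$$\int_0^H p_{X(s)}(x)\,\dd s = \int_0^H p_{X(0)}(x)\,\dd s = H\,p_{X(0)}(x).$$
Substituting this back into the surface integral of \eqref{KacRice:dd} and pulling the constant $H$ outside the integration over $S$ yields exactly \eqref{eq:csh:md:statio}.

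There is essentially no obstacle here: the statement is a one-line consequence of Theorem \ref{th:kr:multidim} together with the definition of a stationary density. The only point deserving a word of care is the identification of ``stationary density'' with time-invariance of the marginal law of $X(s)$; once this is granted, the evaluation of the inner integral is trivial, and no interchange of integrals nor any regularity beyond what is already assumed in Subsection \ref{ss:kr:def} is required.
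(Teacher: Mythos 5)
Your proof is correct and coincides with the paper's own treatment: the paper states this corollary without a separate proof, regarding it precisely as the immediate specialization of Theorem \ref{th:kr:multidim} in which stationarity gives $p_{X(s)}=p_{X(0)}$ for all $s\in[0,H]$, so the inner time integral collapses to $H\,p_{X(0)}(x)$. Your additional remark that the substitution is compatible with the continuity required in Assumption \ref{assumption:density} is a harmless (and correct) point of care, not a departure from the paper's route.
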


\noindent
In the Euclidean-mode setting, the following result holds.
\begin{corollary}[Kac-Rice formula in the Euclidean-mode setting in the multidimensional case]
\label{crois:dim:d}
We consider the Euclidean-mode setting with $\cal Y$ a finite state space for the mode. For any crossing surface $S$ and time horizon $H$, we have
\begin{equation*}
C_S(H)=\sum_{y\in\cal Y}\int_S|(r_y(x),\nu(x))|\int_0^H p_{(X(s),Y(s))}(x,y)\dd s\,\sigma_{d-1}(\dd x).
\end{equation*}
\end{corollary}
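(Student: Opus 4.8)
The plan is to reproduce the argument behind Theorem~\ref{th:kr:multidim}, the only genuinely new ingredient being a decomposition of the crossings according to the value of the mode, which is harmless because $\mathcal{Y}$ is finite. The starting observation is that $Y$ is piecewise constant, changing only at the jump times $T_n$, and that by the Euclidean-mode version of Assumption~\ref{assumption:density} the process almost surely neither jumps from nor to $S$. Consequently every continuous crossing of $S$ takes place strictly inside an interval on which $Y$ is constant, so that
$$c_S(H)=\sum_{y\in\mathcal{Y}}c_S^y(H),$$
where $c_S^y(H)$ counts the continuous crossings of $S$ occurring while $Y=y$. On each maximal interval where $Y\equiv y$, the Euclidean component $X$ follows the flow $\phi_y$, hence $\frac{\dd}{\dd t}\rho(X(t))=(\nabla\rho(X(t)),r_y(X(t)))$ there.

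Next I would apply Kac's counting formula~\eqref{kac:formula} to the real-valued map $t\mapsto\rho(X(t))$, restricted to the constancy intervals of the mode, exactly as in the multidimensional proof. Since the number of deterministic crossings is uniformly bounded by~\eqref{fnt}, the endpoint and non-tangency requirements of Kac's formula hold almost surely, and the representation
$$c_S^y(H)=\lim_{\delta\to0}\frac{1}{2\delta}\int_0^H|(\nabla\rho(X(t)),r_y(X(t)))|\,\mathbb{1}_{|\rho(X(t))|\leq\delta}\,\mathbb{1}_{Y(t)=y}\,\dd t$$
is valid. Taking expectations, the bound~\eqref{fnt} together with Assumption~\ref{assumption:time:mark} supplies an integrable dominating function, so dominated convergence lets me exchange the limit, the expectation and the time integral.

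I would then rewrite the expectation through the joint density, using the continuity of $(x,t)\mapsto p_{(X(t),Y(t))}(x,y)$ near $S$ (the Euclidean-mode form of Assumption~\ref{assumption:density}), obtaining
$$C_S^y(H)=\lim_{\delta\to0}\frac{1}{2\delta}\int_0^H\int_{\mathcal{X}}|(\nabla\rho(x),r_y(x))|\,\mathbb{1}_{|\rho(x)|\leq\delta}\,p_{(X(t),Y(t))}(x,y)\,\dd x\,\dd t.$$
The coarea formula converts the shrinking $\delta$-tube integral into a surface integral over $S=\{\rho=0\}$ weighted by $1/\|\nabla\rho\|$; since $|(\nabla\rho(x),r_y(x))|=\|\nabla\rho(x)\|\,|(r_y(x),\nu(x))|$, the factor $\|\nabla\rho(x)\|$ cancels and the defining function disappears, exactly as in Theorem~\ref{th:kr:multidim}. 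This yields $C_S^y(H)=\int_S|(r_y(x),\nu(x))|\int_0^H p_{(X(s),Y(s))}(x,y)\,\dd s\,\sigma_{d-1}(\dd x)$ after a Fubini interchange, and summing the finitely many such identities over $y\in\mathcal{Y}$ gives the claim.

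All steps are routine once Theorem~\ref{th:kr:multidim} is granted; the only point requiring genuine care is the first one, namely justifying that the crossings decompose cleanly across modes without any crossing straddling a jump time. This is precisely where the Euclidean-mode form of Assumption~\ref{assumption:density} is used, and it is what makes the reduction to the single-mode Kac-Rice computation legitimate.
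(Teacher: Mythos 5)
Your proof is correct and is essentially the paper's own argument: the paper states this corollary without a separate proof, relying (as it says explicitly for the one-dimensional analogue, Corollary \ref{crois:dim:1}) on the fact that, $\mathcal{Y}$ being finite, the proof of Theorem \ref{th:kr:multidim} carries over ``up to obvious changes''. Those obvious changes are precisely what you spell out — the mode-by-mode decomposition of crossings justified by the no-jump-from-or-onto-$S$ condition, the per-mode Kac/coarea computation with the cancellation of $\|\nabla\rho\|$, and the finite sum over $y\in\mathcal{Y}$.
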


\noindent
Kac-Rice formulae take into account the instantaneous velocity to provide a tractable form for the quantity of interest $C_S(H)$, which contains two characteristics of the model calculated at any point $x$ of the surface $S$: the vector field $r(x)$ and the distribution $p_{X(t)}(x)$. These quantities are unknown and thus remain to be estimated. In many applications, the deterministic motion is supposed to be known and thus only the distribution has to be estimated from data. This is the framework investigated in Section \ref{sec:sf} and in the simulation study presented in Section \ref{s:simu}. In the real data application presented in Section \ref{s:realdata}, we also show how to directly estimate the scalar product $|(r(x),\nu(x))|$ depending on the unknown underlying vector field.

%%%%%%%%%%%%%%%%%%%%%%%%%%%%%%%%%%%%%%
%%%%%%%%%%%%%%%%%%%%%%%%%%%%%%%%%%%%%%

\section{Statistical framework and estimation procedures}\label{sec:sf}

We observe a dataset ${\cal T}=\{{\cal T}_i\,:\,1\leq i\leq n\}$ made of $n$ discrete trajectories within the time window $[0,H]$. Each trajectory $\mathcal{T}_i$ is itself composed of $n_H$ points in $\R^d$, ${\cal T}_i=\{z_{i}^{j}\,:\,1\leq j\leq n_H\}$ given on the regular temporal grid
$$\left\{\frac{H(j-1)}{n_H-1}\,:\,1\leq j\leq n_H\right\},$$
which is common to all the trajectories. We assume that the $n$ trajectories come from the same underlying piecewise deterministic Markov process $X$ and are independent. Our goal is to estimate the average number of continuous crossings $C_S(H)$ of the hypersurface $S$ by the process $X$ from the dataset $\cal T$.

\smallskip

\noindent
As noticed in Subsection \ref{ss:freshlook}, $C_S(H)$ implicitly depends on the initial value of the process $X_0$, in particular through the family $(p_{X(t)})_{0\leq t\leq H}$ in \eqref{KacRice:dd} that also depends on the initial condition. We want to emphasize that we propose to estimate the average number $C_S(H)$ of crossings from $X_0=x$ from a dataset ${\cal T}$ of trajectories that all start from $x$. This allows us to estimate the densities $p_{X(t)}$ without considering the question of the initial value of the process.

\smallskip

\noindent
In the sequel, we introduce three estimation procedures of the average number of crossings $C_S(H)$, namely the Monte Carlo method, the non-stationary Kac-Rice estimator and the stationary Kac-Rice estimator. The Monte Carlo method only returns the empirical mean of the number of observed crossings. This technique does not exploit the characteristics of the underlying model. The non-stationary Kac-Rice estimator consists in plugging an estimator of the distribution of $X$ in the Kac-Rice formula \eqref{KacRice:dd}, taking into account the knowledge of the velocity $r$. The stationary Kac-Rice method assumes further that the trajectory is stationary by plugging an estimator of the invariant measure in the stationary Kac-Rice formula \eqref{eq:csh:md:statio}.

% Monte Carlo
\paragraph{Monte Carlo method} The most naive estimator of the average number of crossings is the Monte Carlo method. For each trajectory ${\cal T}_i$, $1\leq i\leq n$, we count the number of crossings of the given level or hypersurface. We then average over the $n$ trajectories and denote the result by $\check{C}_S(H)$. Denoting $[AB]$ the line segment from point $A$ to point $B$, we have
$$
\check{C}_S(H)=\frac1n\sum^n_{i=1}\sum_{j=1}^{n_H-1}\#[z_{i}^{j}z_{i+1}^{j}]\cap S.
$$ 
In dimension $1$, for a level $x$, the latter quantity reads
$$
\check{C}_x(H)=\frac1n\sum^n_{i=1}\sum_{j=1}^{n_H-1}\mathbb{1}_{(-\infty,0)}((z_{i}^{j}-x)(z_{i+1}^{j}-x)).
$$
As mentioned in the introduction, this estimator has a natural inclination to be biased. Indeed, $\check{C}_S(H)$ might miss one or several continuous crossings between two consecutive temporal steps given the discrete nature of the temporal grid, especially when the time step size is large.

% Kac-Rice
\paragraph{Non-stationary Kac-Rice method} This technique consists in replacing the distribution of $X$ appearing in \eqref{KacRice:dd} by some non-parametric estimator. More precisely, we estimate the density $p_{X(h_j)}$ at time $h_j=H(j-1)/(n_H-1)$ by kernel smoothing methods using the whole dataset, i.e., all the observed values $z_i^j$ of $X(h_j)$,
\begin{equation}\label{eq:def:phat}
\forall\,x\in S,~\widehat{p}_{X(h_j)}(x)=\frac{1}{n \sqrt{\det(B_{n})}}\sum_{i=1}^{n}\mathbb{K}\left(B^{-1/2}_{n}(x-z_i^j)\right),
\end{equation}
with $\mathbb{K}$ some kernel function and $B_{n}$ the bandwith, i.e., a $d\times d$ symmetric positive-definite matrix.
We then compute the plug-in estimator (see Theorem \ref{th:kr:multidim})
\begin{equation}
\label{eq:estim:kr}
\widehat{C}_S(H)=\int_S|(r(x),\nu(x))|\,\frac{H}{n_H-1}\sum_{j=1}^{n_H}\widehat{p}_{X(h_j)}(x)\,\sigma_{d-1}(\dd x),
\end{equation}
where $H(n_H-1)^{-1}\sum_{j=1}^{n_H}\widehat{p}_{X(h_j)}(x)$ estimates (by the rectangle method) the integral $\int_0^H p_{X(s)}(x)\dd s$. In dimension $1$, this quantity is reduced to 
$$
\widehat{C}_x(H)=|r(x)|\frac{H}{n_H-1}\sum_{j=1}^{n_H}\widehat{p}_{X(h_j)}(x).
$$

\noindent
The theoretical consistency of the non-stationary Kac-Rice estimator is established from the asymptotic properties of the kernel estimator \eqref{eq:def:phat}. In the theorem and subsequent proof below, we write $\widehat{C}^{(n,n_H)}_S(H)$ for $\widehat{C}_S(H)$ and $\widehat{p}^{(n)}_{X(h_j)}(x)$ for $\widehat{p}_{X(h_j)}(x)$ to stress out the dependence of the estimator in the number $n$ of trajectories and number $n_H$ of time steps.
\begin{theorem}\label{thm:nonstatkr:consis}
The non-stationary Kac-Rice estimator $\widehat{C}^{(n,n_H)}_S(H)$ is strongly consistent in the sense that, with probability one, 
$$
\lim_{n_H\to\infty}\lim_{n\to\infty}\widehat{C}^{(n,n_H)}_S(H)=C_S(H),
$$
as long as the bandwith $B_n$ and the kernel $\mathbb{K}$ appearing in \eqref{eq:def:phat} are selected as in \cite[Theorem 1]{DW80}. For instance, it is sufficient to assume that there exists some $\delta>0$ such that,
$$
\lim_{\|x\|_d\to\infty}\|x\|^{d+\delta}_d \mathbb{K}(x)=0,\quad\lim_{n\to\infty}h_n=0,\quad\text{and}\quad \lim_{n\to\infty} \frac{n h^{2d}_n}{\log(n)}=0,
$$
where $h_n=\sqrt{{\rm det}(B_n)}$ and $\|\cdot\|_d$ is any norm on $\R^d$.
\end{theorem}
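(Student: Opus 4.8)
The plan is to treat the two nested limits separately, since the almost-sure statement concerns only the inner limit in $n$ while the outer limit in $n_H$ is purely deterministic. For fixed $n_H$, I introduce the intermediate quantity
\begin{equation*}
\widetilde{C}^{(n_H)}_S(H)=\int_S|(r(x),\nu(x))|\,\frac{H}{n_H-1}\sum_{j=1}^{n_H}p_{X(h_j)}(x)\,\sigma_{d-1}(\dd x),
\end{equation*}
obtained from $\widehat{C}^{(n,n_H)}_S(H)$ by replacing each kernel estimator $\widehat{p}^{(n)}_{X(h_j)}$ by the true density $p_{X(h_j)}$. The first step is to show that $\widehat{C}^{(n,n_H)}_S(H)\to\widetilde{C}^{(n_H)}_S(H)$ almost surely as $n\to\infty$, for each fixed $n_H$; the second step is to show that $\widetilde{C}^{(n_H)}_S(H)\to C_S(H)$ as $n_H\to\infty$.

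For the first step, I fix $n_H$ and observe that, for each $j$, the observations $(z_i^j)_{1\le i\le n}$ are $n$ independent copies of $X(h_j)$, so that $\widehat{p}^{(n)}_{X(h_j)}$ is the classical kernel density estimator built from an i.i.d. sample. Under the stated conditions on $\mathbb{K}$ and $B_n$, \cite[Theorem 1]{DW80} yields the strong consistency in the supremum norm
\begin{equation*}
\sup_{x\in\R^d}\big|\widehat{p}^{(n)}_{X(h_j)}(x)-p_{X(h_j)}(x)\big|\xrightarrow[n\to\infty]{}0\quad\text{almost surely.}
\end{equation*}
Taking the intersection of these $n_H$ almost-sure events (a finite intersection, hence of full probability) together with the elementary bound
\begin{equation*}
\big|\widehat{C}^{(n,n_H)}_S(H)-\widetilde{C}^{(n_H)}_S(H)\big|\le \sigma_{d-1}(S)\,\sup_{x\in S}|(r(x),\nu(x))|\,\frac{H}{n_H-1}\sum_{j=1}^{n_H}\sup_{x\in S}\big|\widehat{p}^{(n)}_{X(h_j)}(x)-p_{X(h_j)}(x)\big|
\end{equation*}
then gives the claim. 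Here I use that $\sigma_{d-1}(S)<\infty$ since $S$ is compact, and that $x\mapsto(r(x),\nu(x))$ is continuous on the compact set $S$, hence bounded.

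For the second step, no randomness remains. For each fixed $x\in S$, the map $s\mapsto p_{X(s)}(x)$ is continuous on $[0,H]$ by Assumption \ref{assumption:density}, so the rectangle sum $\frac{H}{n_H-1}\sum_{j=1}^{n_H}p_{X(h_j)}(x)$ converges to $\int_0^H p_{X(s)}(x)\,\dd s$ as $n_H\to\infty$. Since $(x,s)\mapsto p_{X(s)}(x)$ is continuous on the compact set $S\times[0,H]$, it is uniformly continuous there, so this convergence is in fact uniform in $x\in S$ (the quadrature error being controlled by the modulus of continuity of $s\mapsto p_{X(s)}(x)$, uniformly in $x$). Combined once more with $\sigma_{d-1}(S)<\infty$ and the boundedness of $|(r(\cdot),\nu(\cdot))|$ on $S$, this produces a uniform bound on $|\widetilde{C}^{(n_H)}_S(H)-C_S(H)|$ that vanishes as $n_H\to\infty$.

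The main obstacle is the interchange of the limits with the integration over $S$, which is precisely why the uniform — rather than merely pointwise — versions of both convergences are required: the sup-norm strong consistency of \cite{DW80} in the first step, and the uniform convergence of the rectangle quadrature (guaranteed by the uniform continuity of the density family in Assumption \ref{assumption:density}) in the second. It is also worth emphasizing that the order of the limits is essential: by fixing $n_H$ before letting $n\to\infty$, one only ever intersects finitely many almost-sure events, whereas interchanging the limits would demand a single full-probability event valid simultaneously for all $n_H$, which the present argument does not furnish.
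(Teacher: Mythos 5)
Your proposal is correct and takes essentially the same route as the paper: your intermediate quantity $\widetilde{C}^{(n_H)}_S(H)$ reproduces exactly the paper's decomposition $\widehat{C}^{(n,n_H)}_S(H)-C_S(H)=T_1+T_2$, with the first term handled by the sup-norm strong consistency of \cite[Theorem 1]{DW80} (over the compact $S$, via a finite intersection of almost-sure events) and the second by convergence of the rectangle method using continuity of $(x,t)\mapsto p_{X(t)}(x)$ on $S\times[0,H]$, together with compactness of $S$ and boundedness of $|(r(\cdot),\nu(\cdot))|$. The only differences are presentational — you state the kernel consistency as a supremum over all of $\R^d$ rather than over $S$, and you spell out the uniformity of the quadrature error and the role of the order of limits, which the paper leaves implicit.
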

\begin{proof}The proof lies in Appendix \ref{app:ss:nonstat}.\end{proof}

\noindent
We refer the reader to \cite{Chacon2010} for recent developments on multivariate kernel density estimation with plug-in bandwidth selection. In the simulation study, we exploit the implementation of the latter algorithm in the \verb+R+ package \verb+ks+ \cite{JSSv021i07}.

% Stationary Kac-Rice
\paragraph{Stationary Kac-Rice method} We now assume that the underlying process $X$ is stationary. For each trajectory ${\cal T}_i$, $1\leq i\leq n$, we estimate the invariant distribution $\mu$ on the hypersurface $S$ by kernel smoothing methods,
$$
\forall\,x\in S,~\widehat{\mu}_i(x)=\frac{1}{n_H \sqrt{\det(B_{n_H})}}\sum_{j=1}^{n_H}\mathbb{K}\left(B^{-1/2}_{n_H}(x-z_{i}^{j})\right).
$$
We then compute the plug-in estimator given by the Kac-Rice formula in the stationary case (see Remark \ref{rem:md:statio})
\begin{equation}
\label{eq:estim:kr:statio}
\widetilde{C}_{S,i}(H)=H \int_S|(r(x),\nu(x))|\widehat{\mu}_i(x)\sigma_{d-1}(\dd x).
\end{equation}
In dimension $1$, this quantity is simply
$$
\widetilde{C}_{x,i}(H)=H |r(x)|\widehat{\mu}_i(x).
$$
We would like to emphasize that we then obtain a good estimator of $C_S(H)$ from each trajectory $\mathcal{T}_i$. For the sake of comparison (with Monte Carlo and non-stationary Kac-Rice methods), the estimation from the whole set of trajectories is then given by the empirical mean
$$\widetilde{C}_S(H) = \frac{1}{n}\sum_{i=1}^n \widetilde{C}_{S,i}(H).$$

\noindent
The convergence of the stationary Kac-Rice estimator \eqref{eq:estim:kr:statio} is deduced from the asymptotic properties of the kernel estimator of $\nu$ as in the non-stationary case (see for instance \cite{nguyen1979} for results on the estimation of the stationary distribution of a continuous-time Markov process). In the theorem and proof below, we write $\widetilde{C}^{(n,n_H)}_S(H)$ for $\widetilde{C}_S(H)$ in order to stress out the dependence of the estimator in the number $n$ of available trajectories and the number $n_H$ of time steps.
\begin{theorem}\label{thm:statkr:consis}
In the stationary case, let us assume that the process $X$, the kernel $\mathbb{K}$ and the stationary measure $\mu$ satisfy the hypotheses of  \cite[Theorem 6]{ADP16}. Then, the stationary Kac-Rice estimator is consistent in the sense that, in $\PP_\mu$-probability, for any $n\geq1$,
$$
\lim_{n_H\to\infty}\widetilde{C}^{(n,n_H)}_S(H)=C_S(H) .
$$
\end{theorem}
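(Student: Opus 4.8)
The plan is to reduce the consistency of $\widetilde{C}^{(n,n_H)}_S(H)$ to the uniform consistency on the compact surface $S$ of the single-trajectory kernel estimator $\widehat{\mu}_i$ of the stationary density $\mu$. First I would record that, by Corollary \ref{rem:md:statio} applied with $p_{X(0)}=\mu$, the target quantity is
$$
C_S(H)=H\int_S|(r(x),\nu(x))|\,\mu(x)\,\sigma_{d-1}(\dd x).
$$
Comparing this with the definition \eqref{eq:estim:kr:statio} of $\widetilde{C}_{S,i}(H)$, the per-trajectory deviation is
$$
\widetilde{C}_{S,i}(H)-C_S(H)=H\int_S|(r(x),\nu(x))|\bigl(\widehat{\mu}_i(x)-\mu(x)\bigr)\,\sigma_{d-1}(\dd x).
$$

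The second step is a crude but sufficient bound. Since $S$ is a compact $\mathcal{C}^1$-hypersurface and both $r$ and $\nu$ are continuous in a neighborhood of $S$, the weight $|(r(x),\nu(x))|$ is bounded on $S$, say by $M=\sup_{x\in S}|(r(x),\nu(x))|<\infty$, while $\sigma_{d-1}(S)<\infty$. Hence
$$
|\widetilde{C}_{S,i}(H)-C_S(H)|\leq H\,M\,\sigma_{d-1}(S)\,\sup_{x\in S}|\widehat{\mu}_i(x)-\mu(x)|,
$$
so it suffices to control the sup-norm error of the density estimator over $S$.

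Here is where I would invoke the cited result. Under Assumption \ref{assumption:density}, $\mu$ is continuous in a neighborhood of $S$, and by hypothesis the process $X$, the kernel $\mathbb{K}$ and $\mu$ satisfy the assumptions of \cite[Theorem 6]{ADP16}, which yields the uniform consistency of the single-trajectory kernel estimator on compact sets: in $\PP_\mu$-probability, $\sup_{x\in S}|\widehat{\mu}_i(x)-\mu(x)|\to0$ as $n_H\to\infty$. Combined with the previous bound, this gives $\widetilde{C}_{S,i}(H)\to C_S(H)$ in $\PP_\mu$-probability for each fixed $i$. Finally, since $n$ is held fixed while $n_H\to\infty$, I would pass to the empirical mean via the triangle inequality,
$$
|\widetilde{C}^{(n,n_H)}_S(H)-C_S(H)|\leq\frac1n\sum_{i=1}^n|\widetilde{C}_{S,i}(H)-C_S(H)|,
$$
a finite sum of terms each tending to $0$ in probability (independence of the trajectories is not even needed), hence itself converging to $0$ in $\PP_\mu$-probability.

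I expect the only genuine obstacle to be the hypothesis-checking in the third step: ensuring that \cite[Theorem 6]{ADP16} indeed delivers \emph{uniform} convergence on the compact surface $S$ rather than merely pointwise convergence, and that the stationary density restricted to a neighborhood of $S$ meets the regularity and mixing conditions required there. Should only pointwise convergence be available, one would instead combine pointwise convergence with an integrable envelope and a dominated-convergence argument over $S$; but once uniform control on the compact surface is secured, the integral bound and the averaging over the fixed number of trajectories are entirely routine.
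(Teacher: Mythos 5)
Your proof is correct and follows essentially the same route as the paper's: both reduce the error to the per-trajectory sup-norm deviation $\sup_{x\in S}|\widehat{\mu}_i(x)-\mu(x)|$ via compactness of $S$ and boundedness of $|(r(x),\nu(x))|$, invoke \cite[Theorem 6]{ADP16}, and conclude because $n$ is held fixed (the paper uses a union bound over the $n$ trajectories where you use the triangle inequality, an immaterial difference). Your closing caveat is well placed, since the paper's own argument likewise requires the convergence from \cite[Theorem 6]{ADP16} to be uniform over $S$: its union bound is stated in terms of the sup-norm deviation.
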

\begin{proof}The proof lies in Appendix \ref{app:ss:stat}.\end{proof}

\noindent
Only the non-stationary Kac-Rice estimator requires to observe all the trajectories on the same temporal grid. Indeed, this condition ensures that the distribution of $X(t)$ may be easily estimated from the observed dataset. In addition, the stationary Kac-Rice method is the only one that allows to estimate the average number of crossings from only one trajectory under the stationarity assumption. Furthermore, it also allows us to estimate the average number of crossings on a time window larger than the observation one.

%%%%%%%%%%%%%%%%%%%%%%%%%%%%%%%%%%%%%%
%%%%%%%%%%%%%%%%%%%%%%%%%%%%%%%%%%%%%%

\section{Simulation study on piecewise deterministic Markov processes}
\label{s:simu}

The following simulation study has been performed on some particular PSP's in the Euclidean-mode setting and having the property to be Markov, namely PDMP's as introduced in \cite{Da93}. A PSP with mode $(X,Y)$ is a PDMP if and only if the following conditions are satisfied. For any integer $n$, conditionally on $T_n$ and $M_n$, with $M_n=(M_{X,n},M_{Y,n})$, the next jump time $T_{n+1}$ of the process $(X,Y)$ has the following survival function,
$$
t\mapsto1\wedge\exp\left(-\int_{T_n}^t\lambda\left(\phi_{M_{Y,n}}(M_{X,n},s-T_n)\right)\dd s\right),
$$
where $\lambda: {\cal X}\to \R_+$ denotes the jump rate, assumed to be path-integrable.
In the same line, conditionally on $T_n$, $T_{n+1}$ and $M_n$, the distribution of the next mark $M_{n+1}$ depends (only) on $\phi_{M_{Y,n}}(M_{X,n},T_{n+1}-T_n)$.
In this Markov setting, if $X(0)$ and the conditional distribution of the marks have densities with respect to some measure $\mu$, then, for any time $t$, $X(t)$ has a density with respect to this same measure $\mu$. We refer the reader to \cite{TK09} on these questions. As a consequence, if $\mu$ is the Lebesgue measure, Assumption \ref{assumption:density} is satisfied for PDMP's under these mild conditions.

\smallskip

\noindent
In Subsection \ref{tele:1d}, we aim at showing through an example the convergence of our estimation procedures when the number of trajectories increases, as well as the limited impact of applying them to trajectories observed on a coarse temporal grid, while the Monte Carlo method suffers from this kind of setting. For the sake of completeness, in Subsection \ref{ss:pdsa} (see in particular Figure \ref{pdsa:level0}), we present a situation where the Monte Carlo estimator always gives better results than the Kac-Rice methods. The methodology is presented in a two-dimensional setting in Subsection \ref{tele:2d}.

\subsection{Stationary one-dimensional telegraph process}
\label{tele:1d}

We consider a variant of the telegraph process \cite{Fontbona20163077} that is a quite simple piecewise deterministic Markov process $(X,Y)$ in the Euclidean-mode setting, with value in $\mathbb{R}\times\{-1,+1\}$ and evolving according to the generator
$$\mathcal{L}f(x,y) = y \partial_xf(x,y) +\left(a\,\mathbb{1}_{\{xy\leq 0\}}+b\,\mathbb{1}_{\{xy>0\}}\right)\left(f(x,-y)-f(x,y)\right),$$ 
where $a$ and $b$ are two positive parameters. The behavior of the process can be described as follows. The Euclidean variable $X$ represents the evolution of a particle on the real line with velocity jumping between $Y=-1$ and $Y=+1$ at rate $a$ ($b$, respectively) if the particle approaches (goes away from, respectively) the origin. When $b>a$, the process exhibits a stationary behavior \cite[Theorem 1.1]{fontbona2012quantitative} with convergence in law towards the invariant measure $\mu$ on $\mathbb{R}\times\{-1,+1\}$, defined by
$$\mu(\dd x\times\dd y)  = \frac{b-a}{2}\exp\left(-(b-a)|x|\right)\dd x \otimes  \frac{1}{2}(\delta_{-1}+\delta_{+1})(\dd y).$$
Thus, at equilibrium, the mode is equidistributed between $-1$ and $+1$ whereas the Euclidean variable is symetrically distributed on both sides of the origin. In our simulation study, we assume that the process is stationary and thus starts from its invariant measure. For this purpose, given a random variable $U$ with uniform distribution on the interval $[0,1]$, let us remark that a random variable with density
$$x\mapsto\frac{b-a}{2}\exp\left(-(b-a)|x|\right)$$
is equal in law to
$$
\frac{\log(2U)}{b-a}\mathbb{1}_{(-\infty,0)}(\log(2U))-\frac{\log(2(1-U))}{b-a}\mathbb{1}_{(-\infty,0)}(\log(2(1-U))),
$$
which can be easily simulated. A stationary trajectory of the Euclidean variable $X$ (with $b=2$ and $a=1$) until time $H=100$ is displayed in Figure \ref{traj_telegraph_1d} together with the empirical distribution of a trajectory until time $H=500$ on a discrete grid compared to the theoretical invariant measure.

\begin{figure}[t]
\centering
\includegraphics[width=7cm]{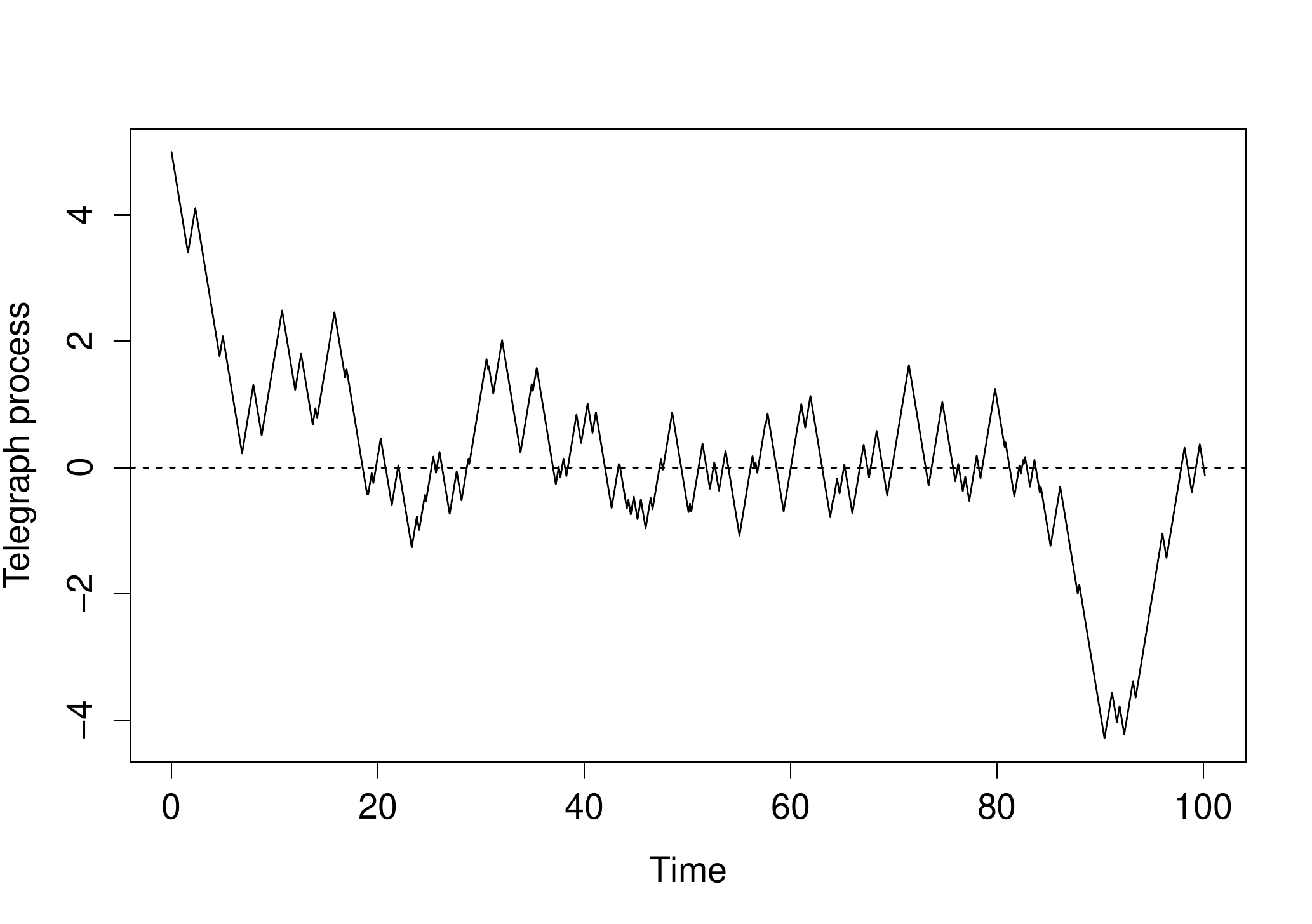}\qquad\qquad\includegraphics[width=7cm]{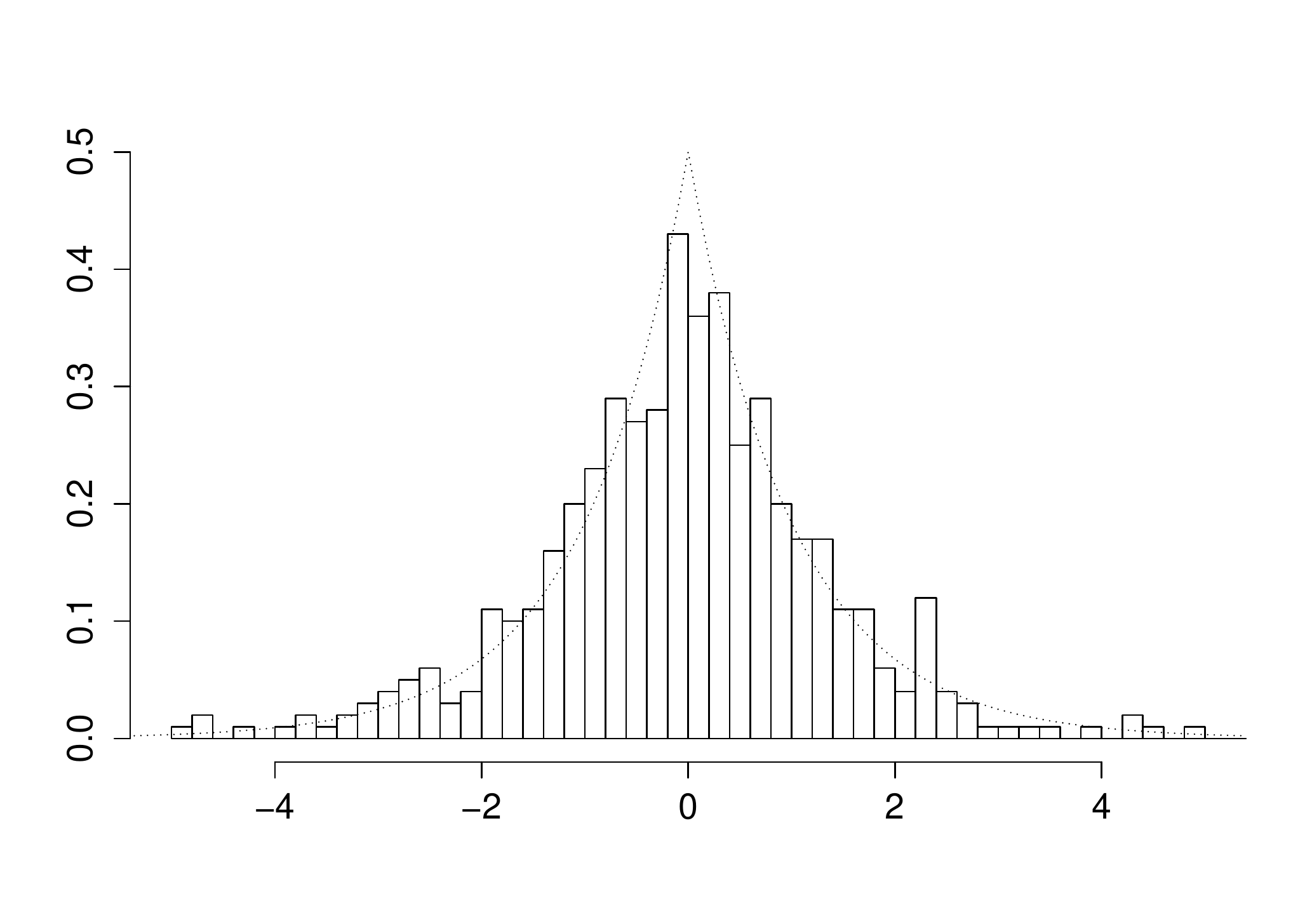}
\caption{A trajectory of the stationary one-dimensional telegraph process until time $H=100$ with parameters $b=2$ and $a=1$ (left) and histogram of a trajectory until time $H=500$ with the theoretical invariant measure (right).}\label{traj_telegraph_1d}
\end{figure}

\smallskip

\noindent
We have estimated the average number of crossings $C_2(50)$ from different numbers $n$ of trajectories ($n=50$, $n=100$, $n=200$, $n=500$ and $n=1\,000$) and with different time step sizes $h$ ($h=0.01$, i.e., $n_H=10\,001$, $h=0.1$, i.e., $n_H=1001$, $h=1$, i.e., $n_H=101$ and $h=2$, i.e., $n_H=51$) by the three procedures mentioned above. The numerical results are presented in Figure \ref{fig:cvsimu}. They show that the Kac-Rice estimators tend to be concentrated around the true value of $C_2(50)$ when the number of trajectories increases, which shows the consistency of our inference procedures. Furthermore, the thiner the step, the better the kernel estimator of the stationary measure behaves (a large number of data is available) and the smaller the bias of the Kac-Rice estimators is. Even from a coarse temporal grid, the Kac-Rice estimators perform pretty well and present a small bias, while the Monte Carlo estimator misses a large number of crossings. This proves the great interest of our approach. Finally, it should be noticed that the stationary and the non-stationary Kac-Rice estimators provide similar performances even in this stationary setting increasing the attractiveness of the latter.

\smallskip

% Level 0
\noindent
In Figure \ref{bp_step_size_s_0_telegraph_1d} are displayed boxplots corresponding to both estimators of the average number of crossings of the level $x=0$, $C_0(100)=50$ with an increasing time step size $h\in\{0.1,1,2\}$. Due to the shape of the density, the stationary Kac-Rice estimator underestimates the average number of crossings. Indeed the stationary density is underestimated at $0$ as illustrated in Figure \ref{bp_step_size_s_0_telegraph_1d}. We remark that in such a case, for small step size, the empirical estimator is the closest to the real value on average whereas for greater step size, the Kac-Rice-based estimator outperforms the Monte Carlo estimator, although still quite far from the true value.

\begin{figure}[p]
\centering
\includegraphics[width=5.5cm]{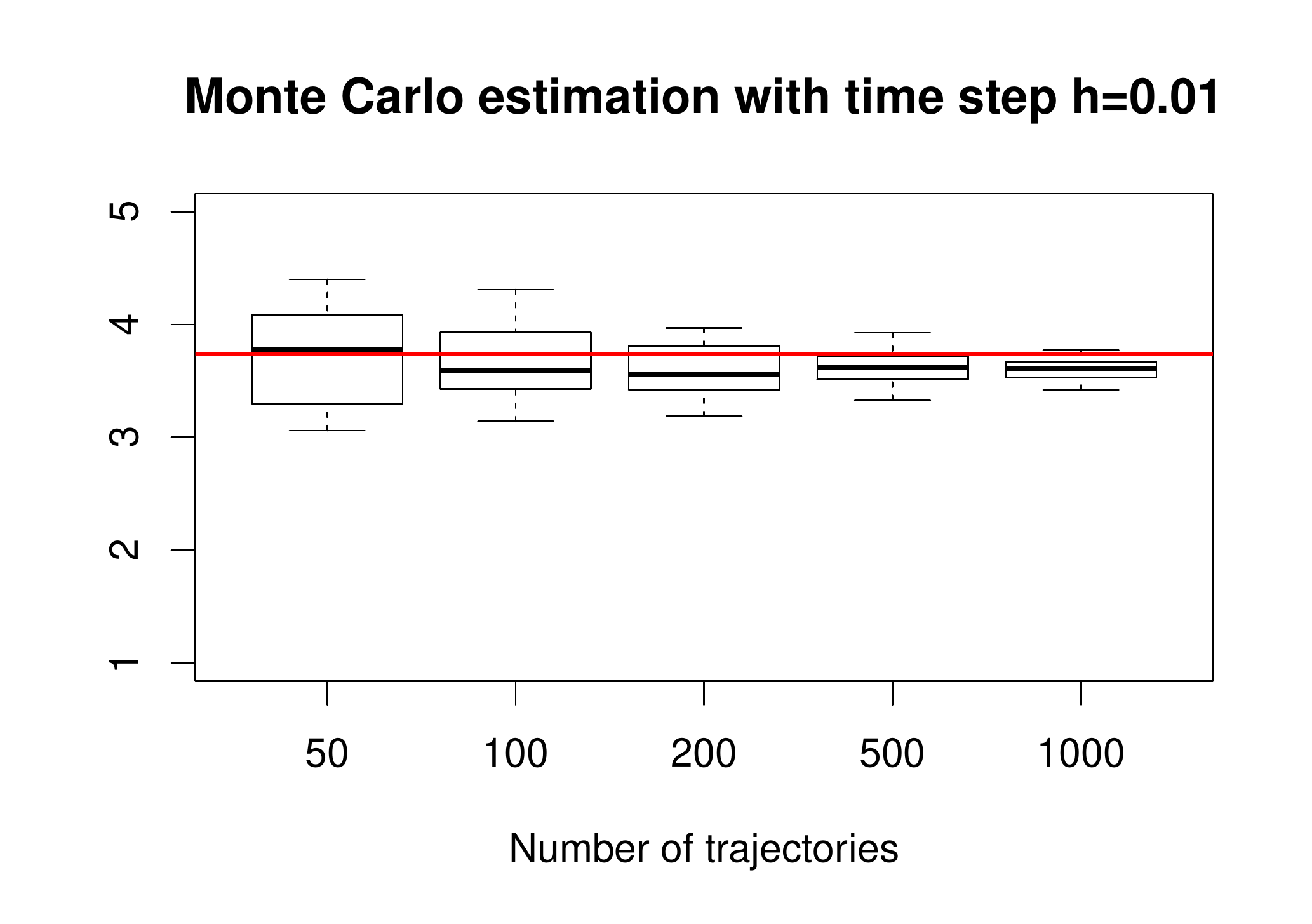}\includegraphics[width=5.5cm]{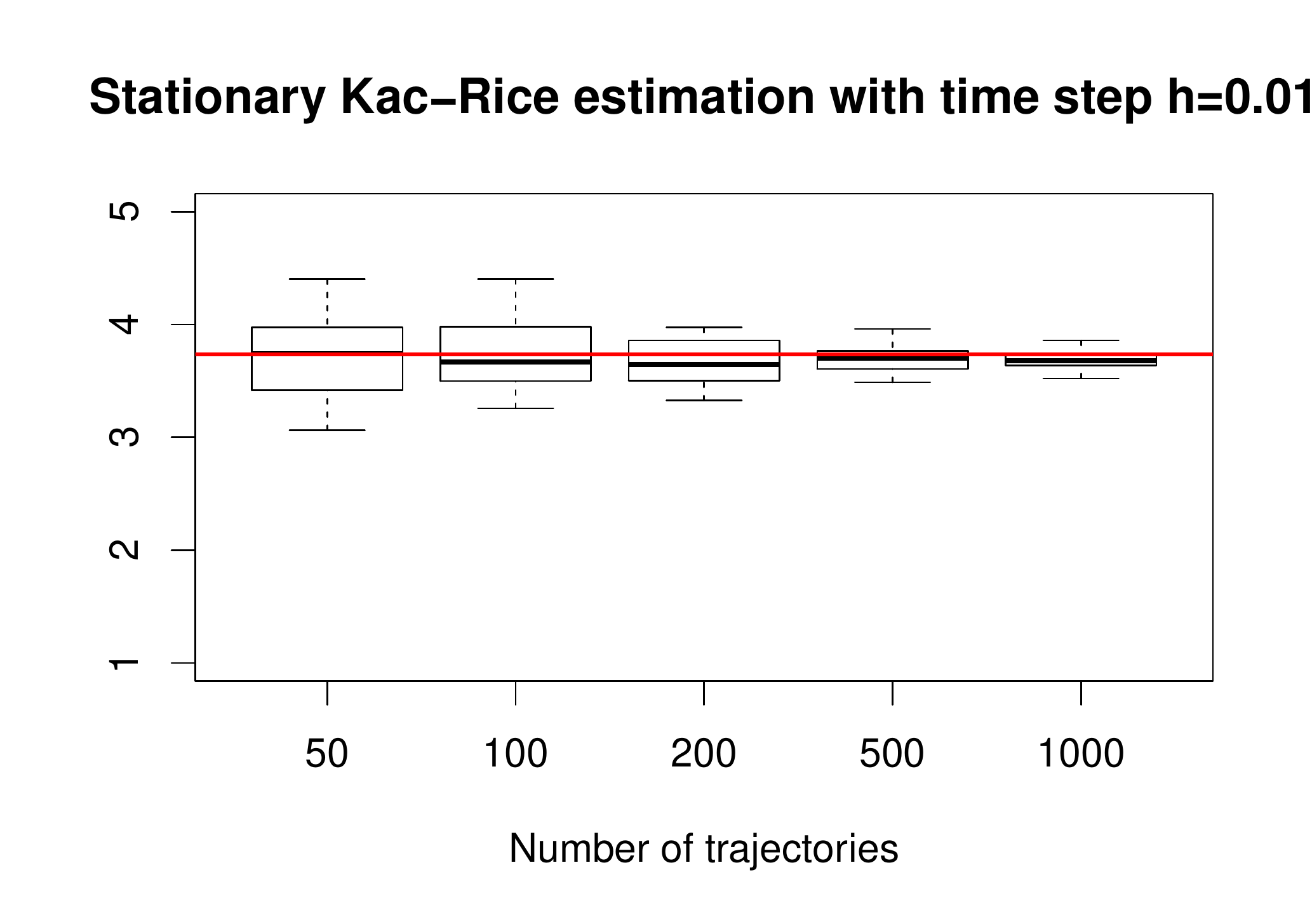}\includegraphics[width=5.5cm]{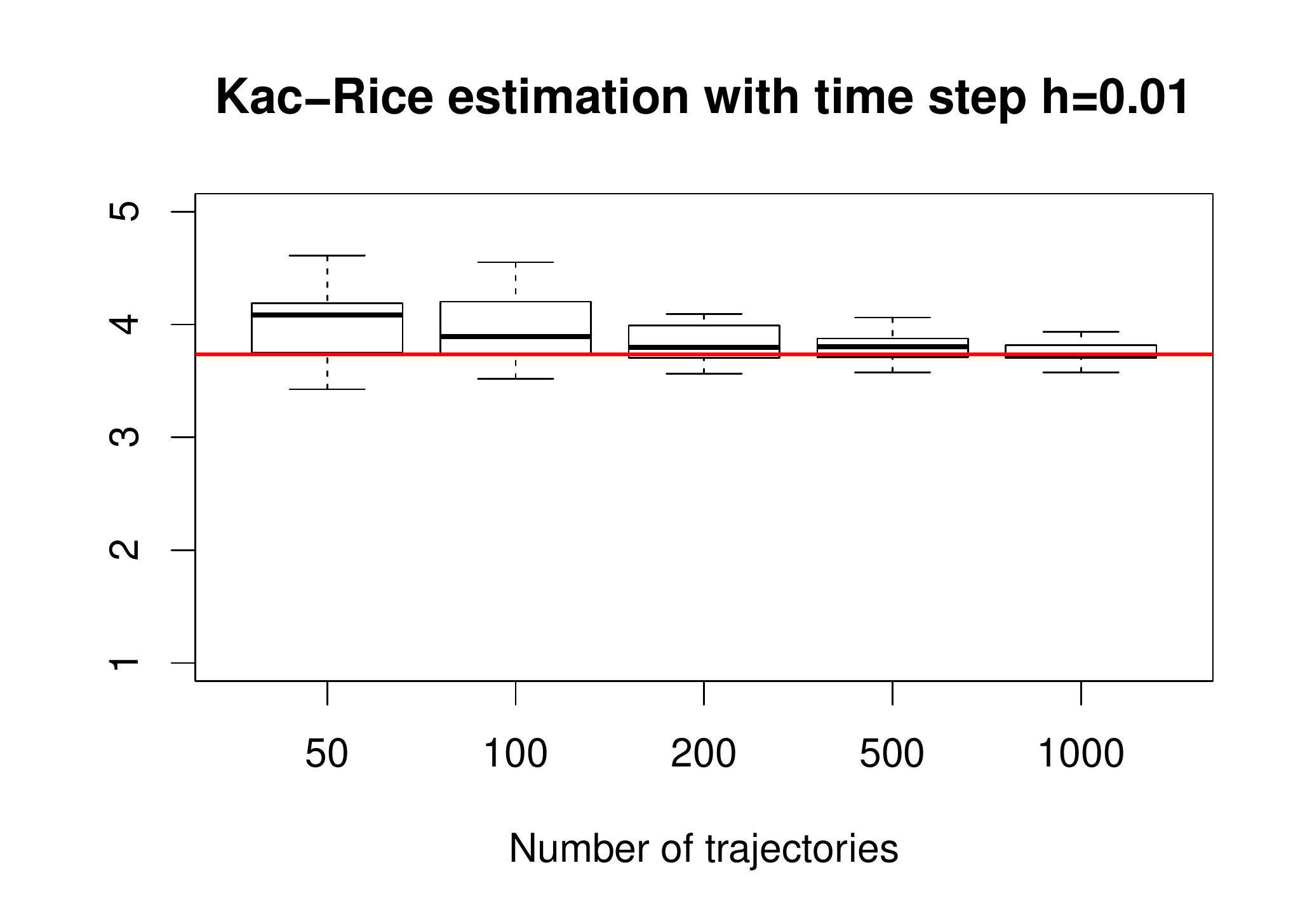}\\
\includegraphics[width=5.5cm]{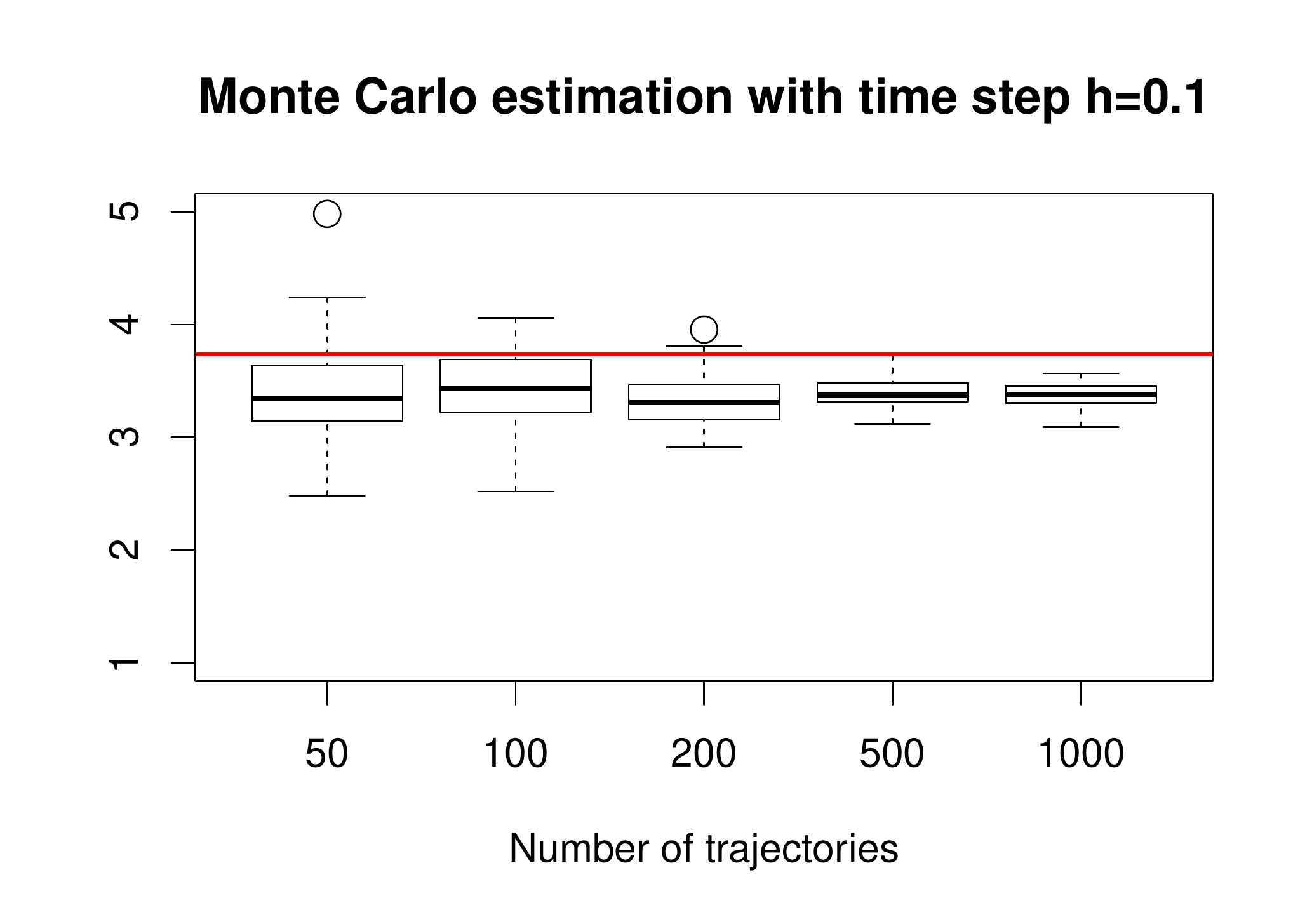}\includegraphics[width=5.5cm]{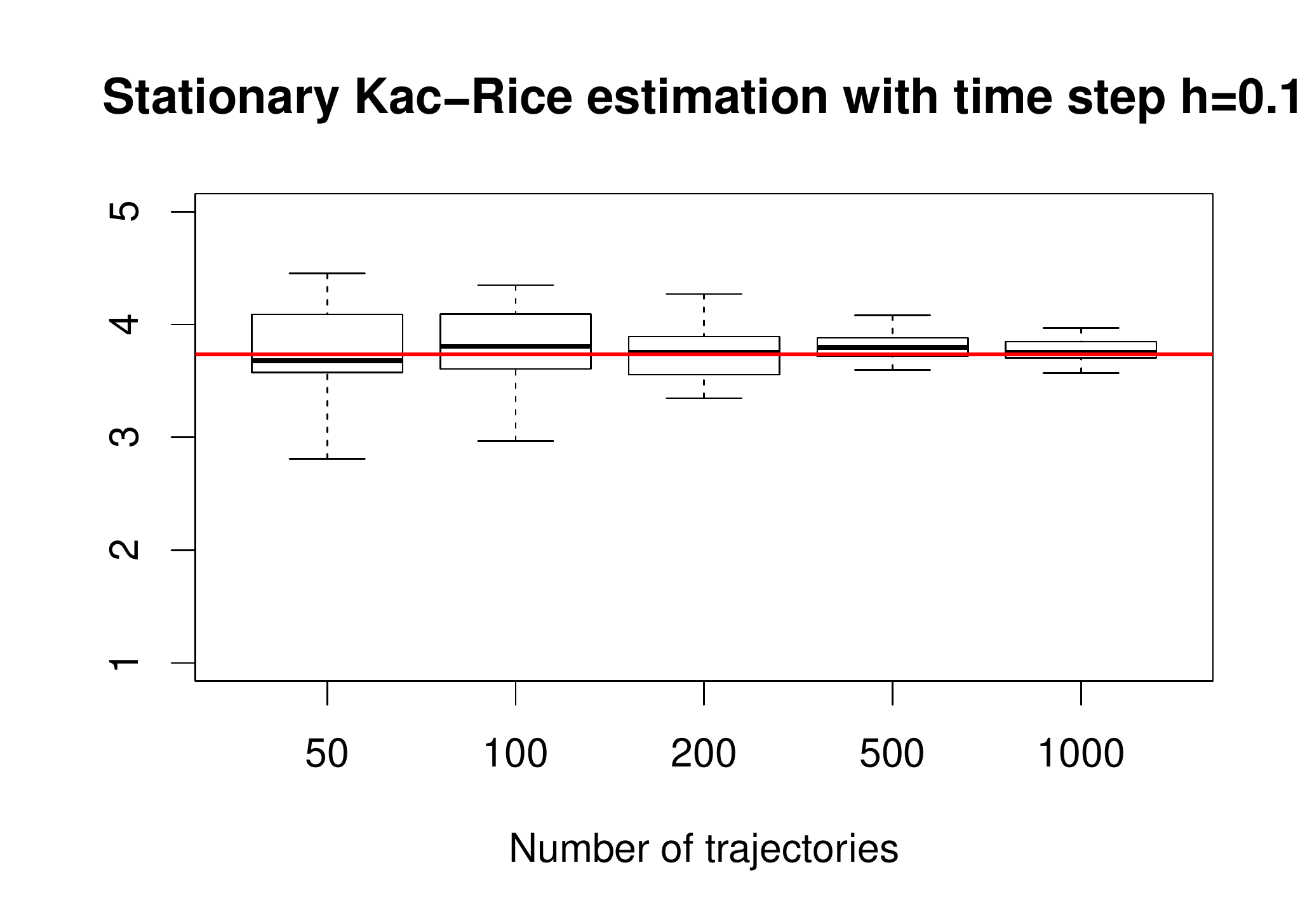}\includegraphics[width=5.5cm]{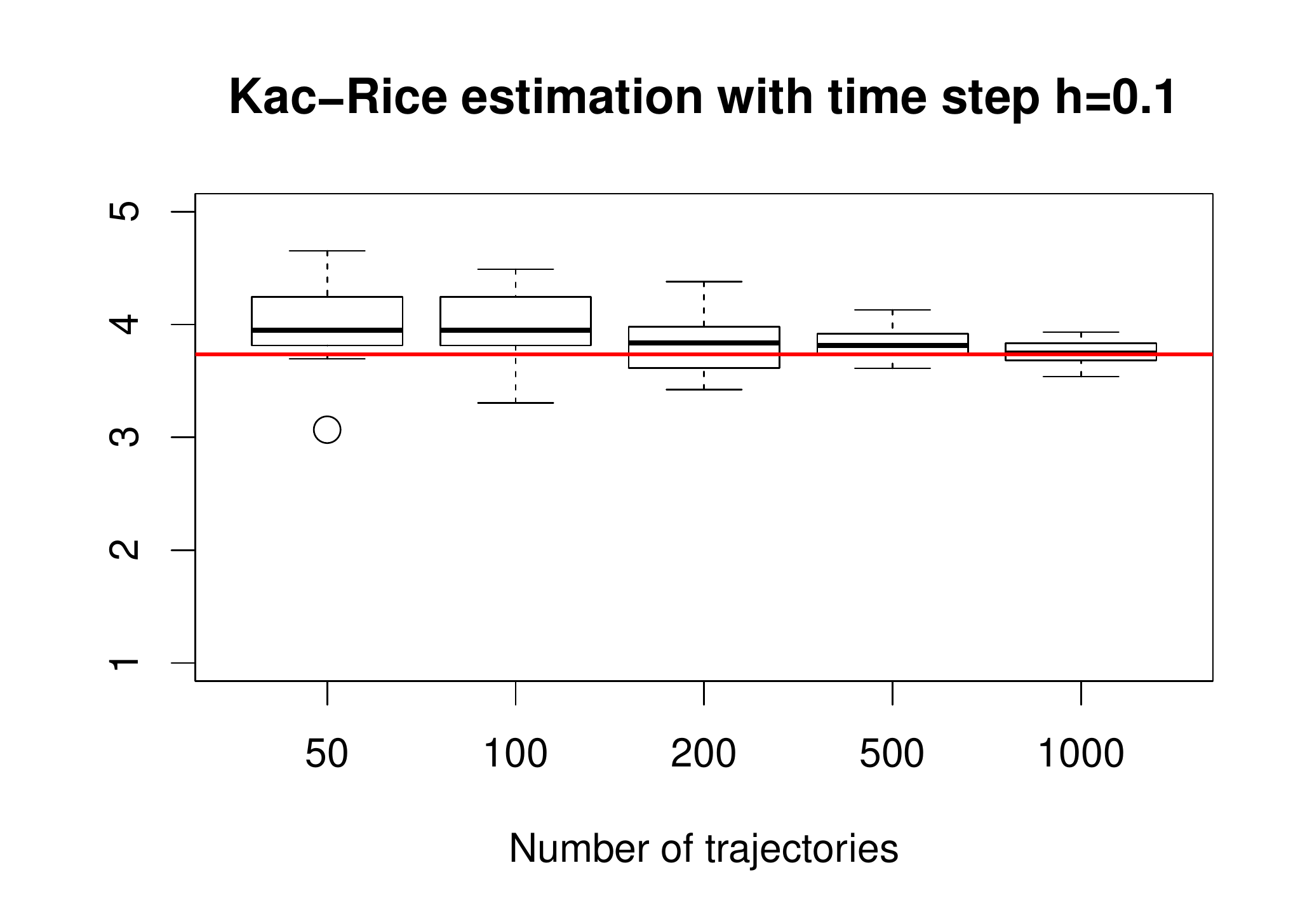}\\
\includegraphics[width=5.5cm]{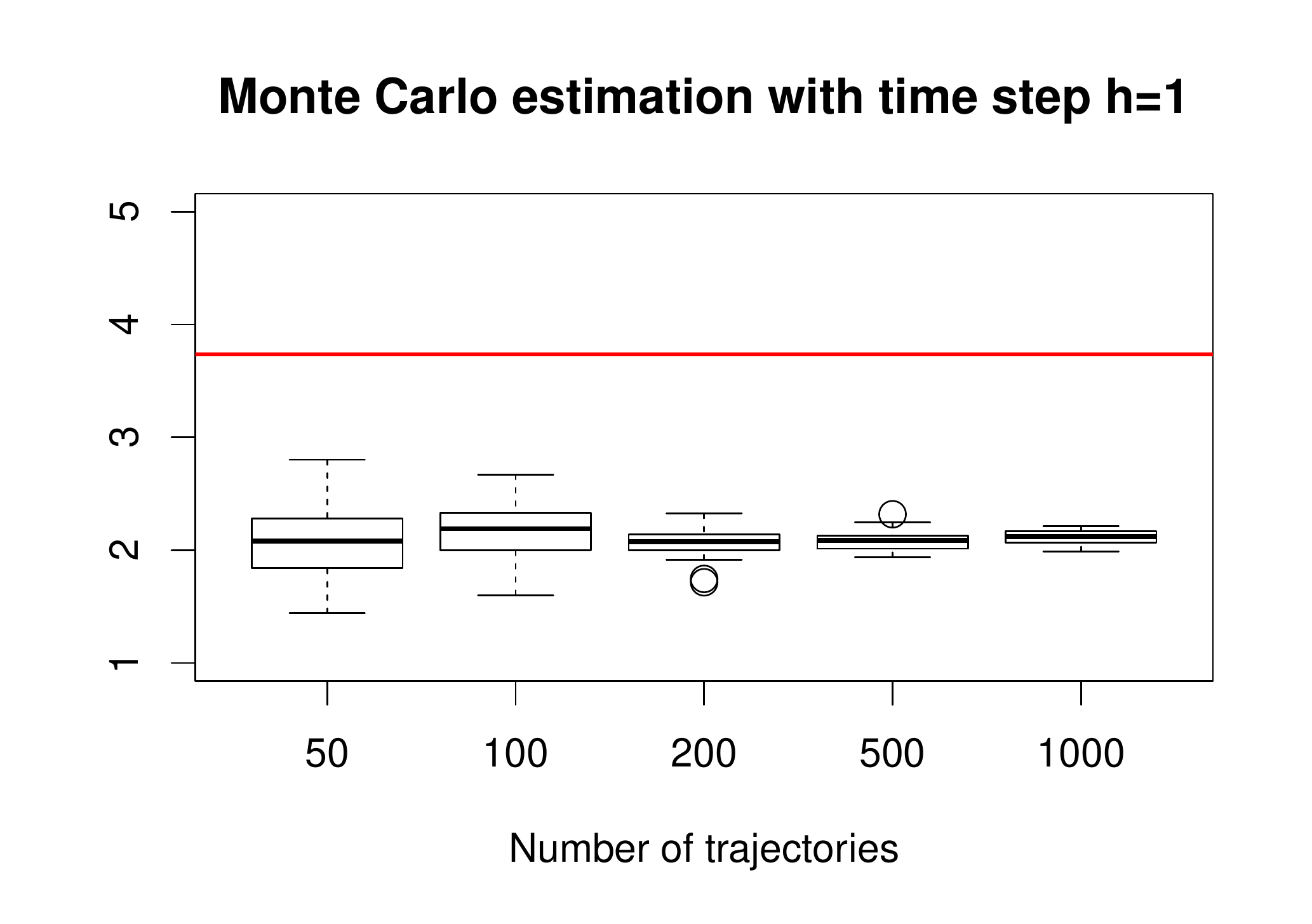}\includegraphics[width=5.5cm]{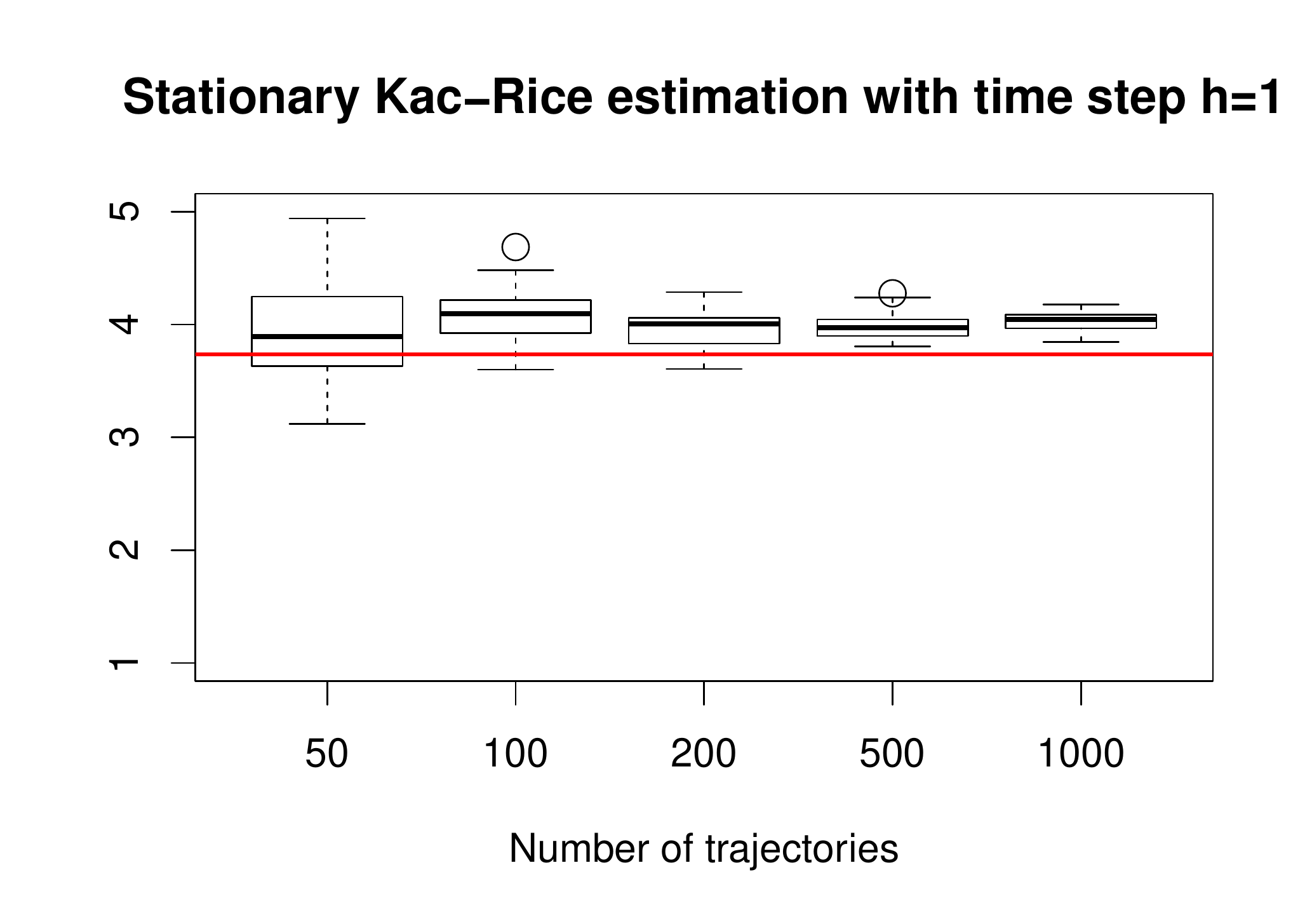}\includegraphics[width=5.5cm]{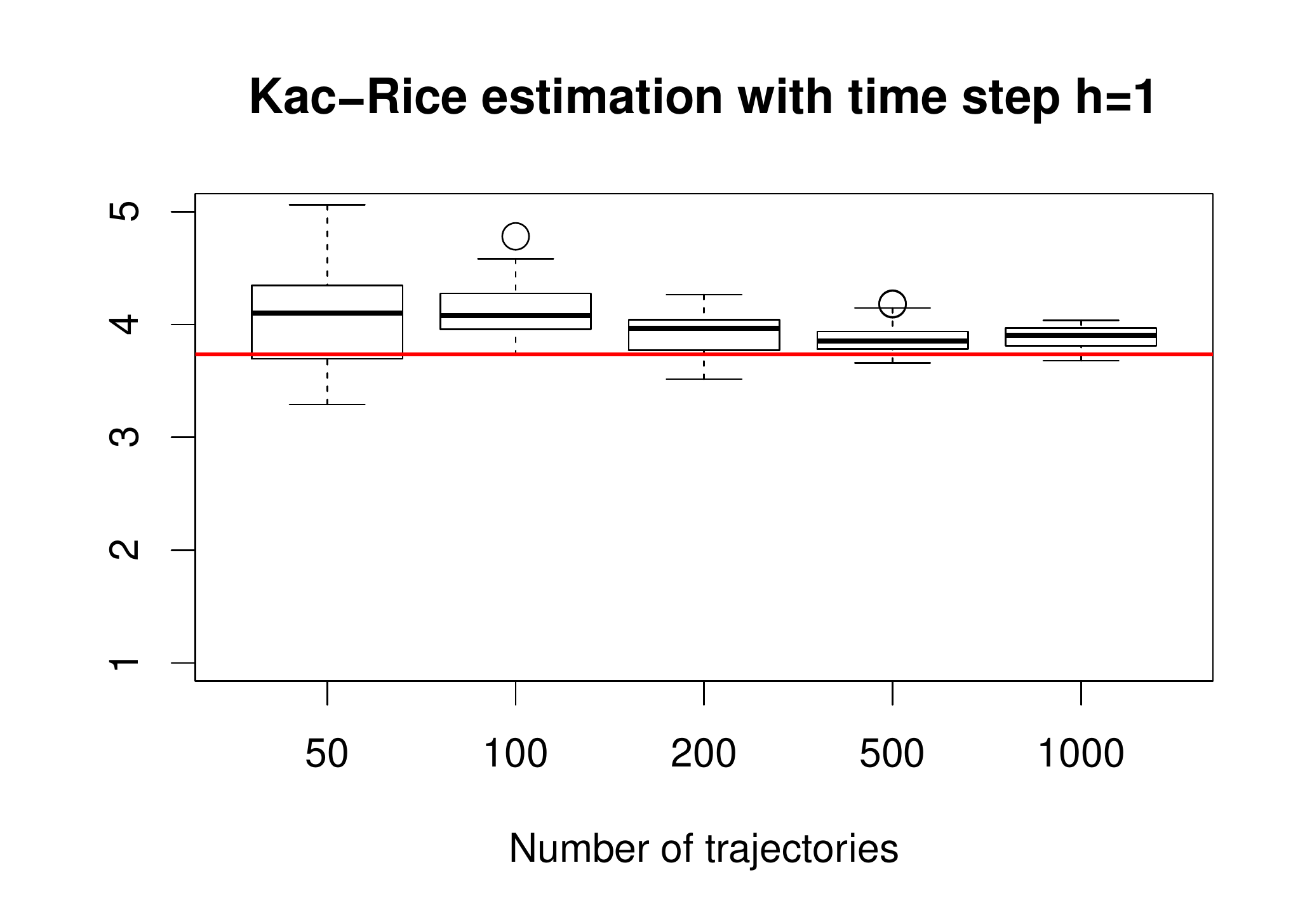}\\
\includegraphics[width=5.5cm]{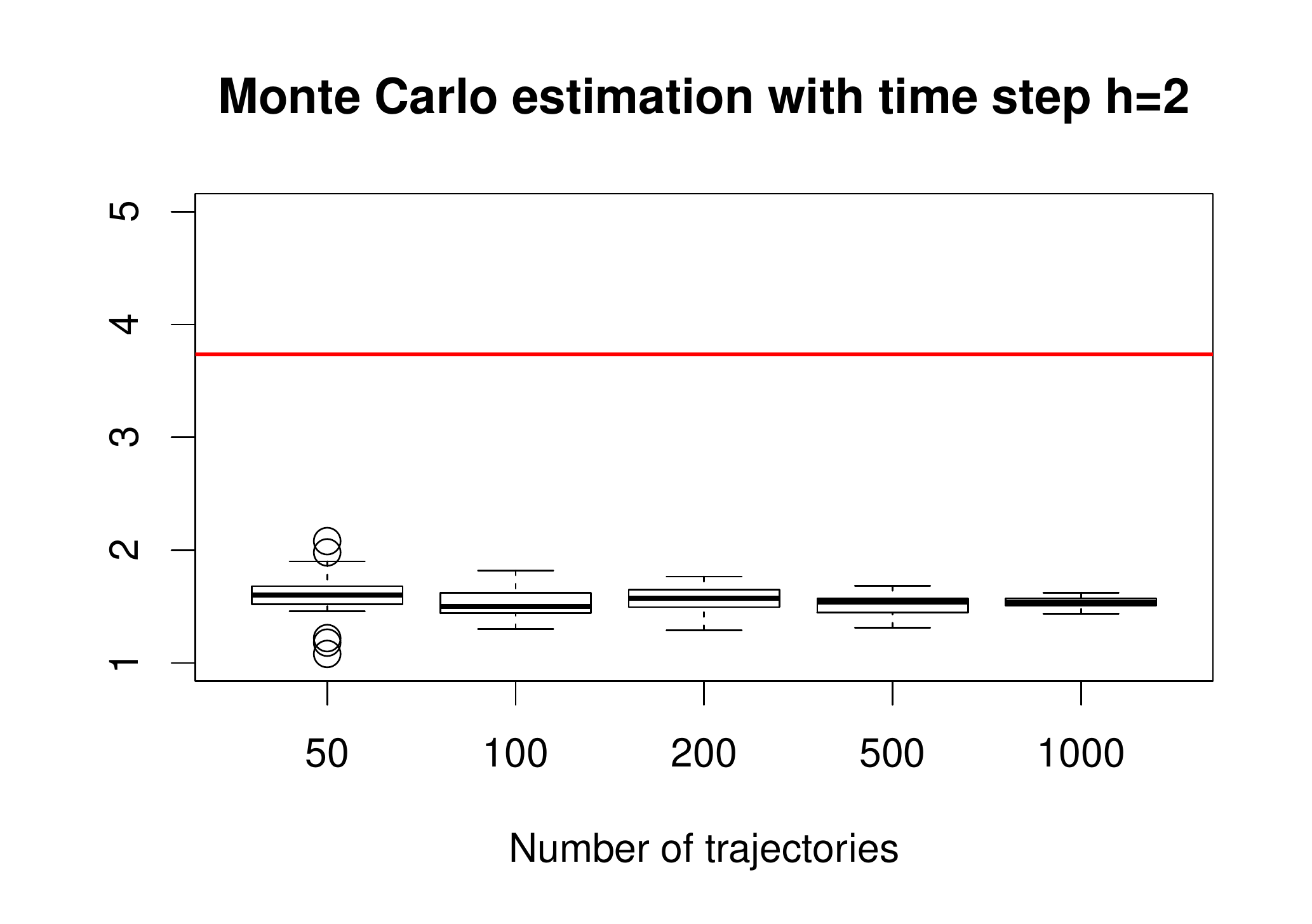}\includegraphics[width=5.5cm]{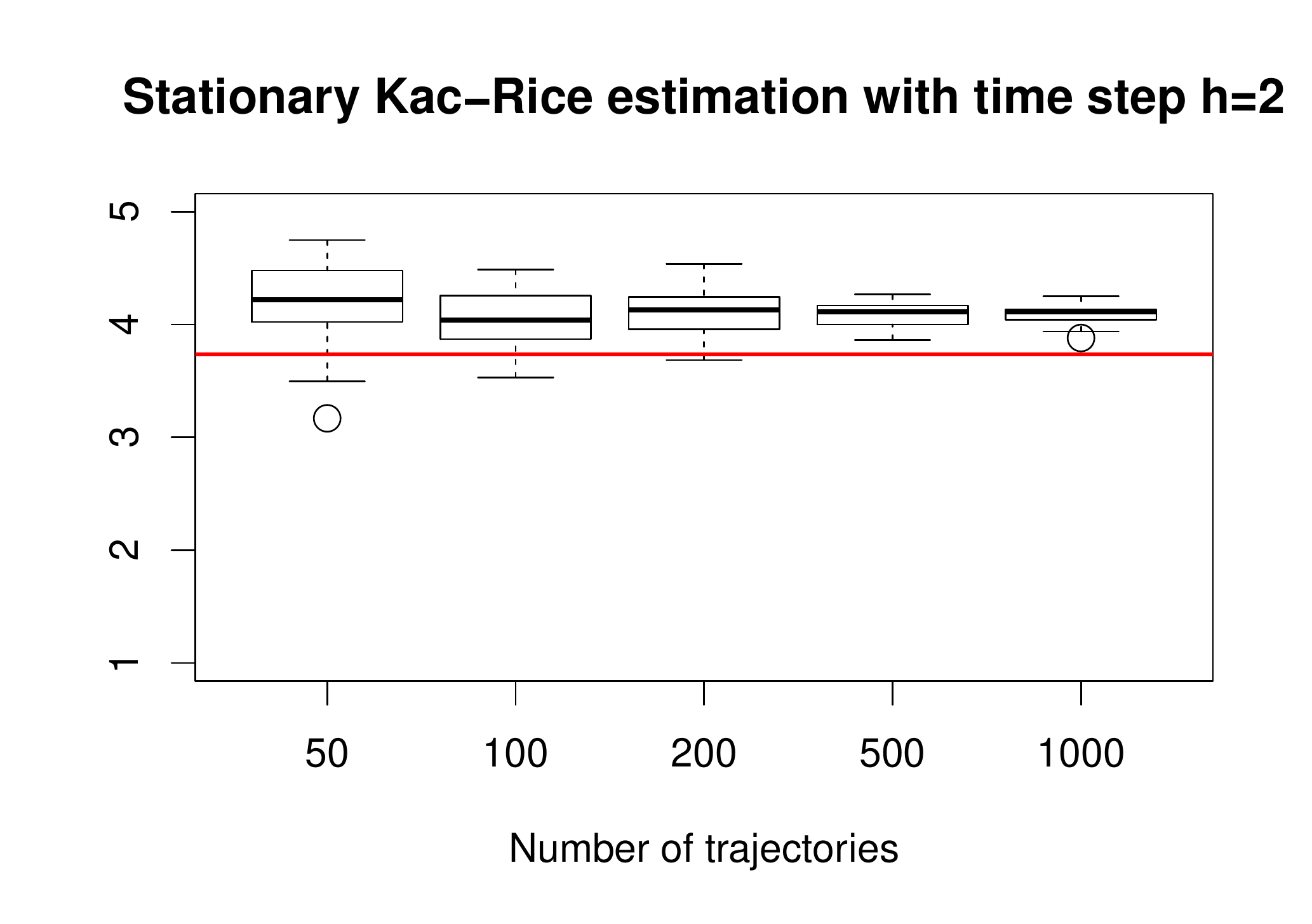}\includegraphics[width=5.5cm]{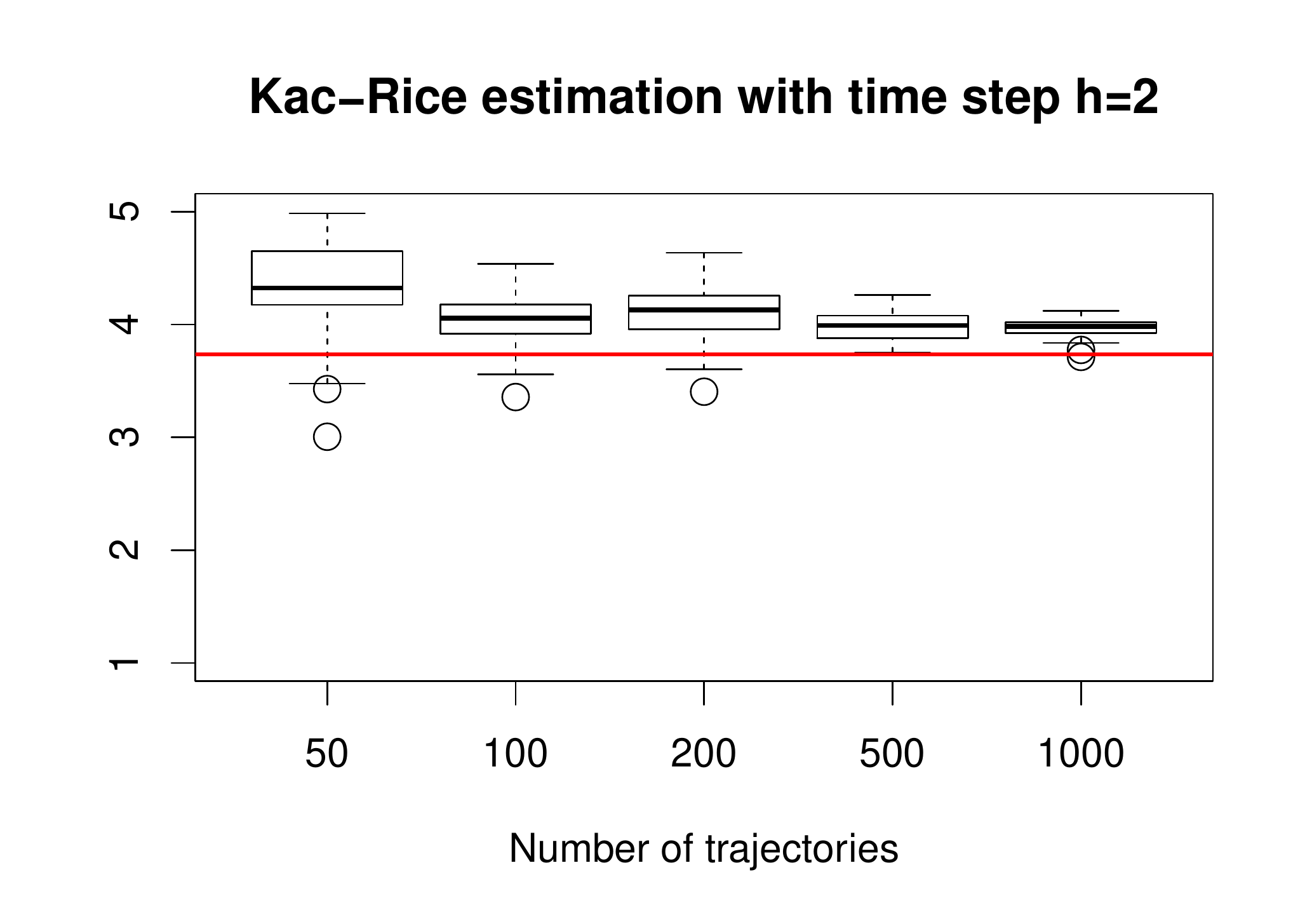}\\
\caption{Boxplots over $100$ replicates of Monte Carlo (left), stationary (middle) and non-stationary (right) Kac-Rice estimators of $C_2(50)$ for the telegraph process from data extracted from a temporal grid with an increasing step size: $h=0.01$, $h=0.1$, $h=1$ and $h=2$ from top to bottom.}
\label{fig:cvsimu}
\end{figure}

\begin{figure}[t]
\centering
\includegraphics[width=6.5cm]{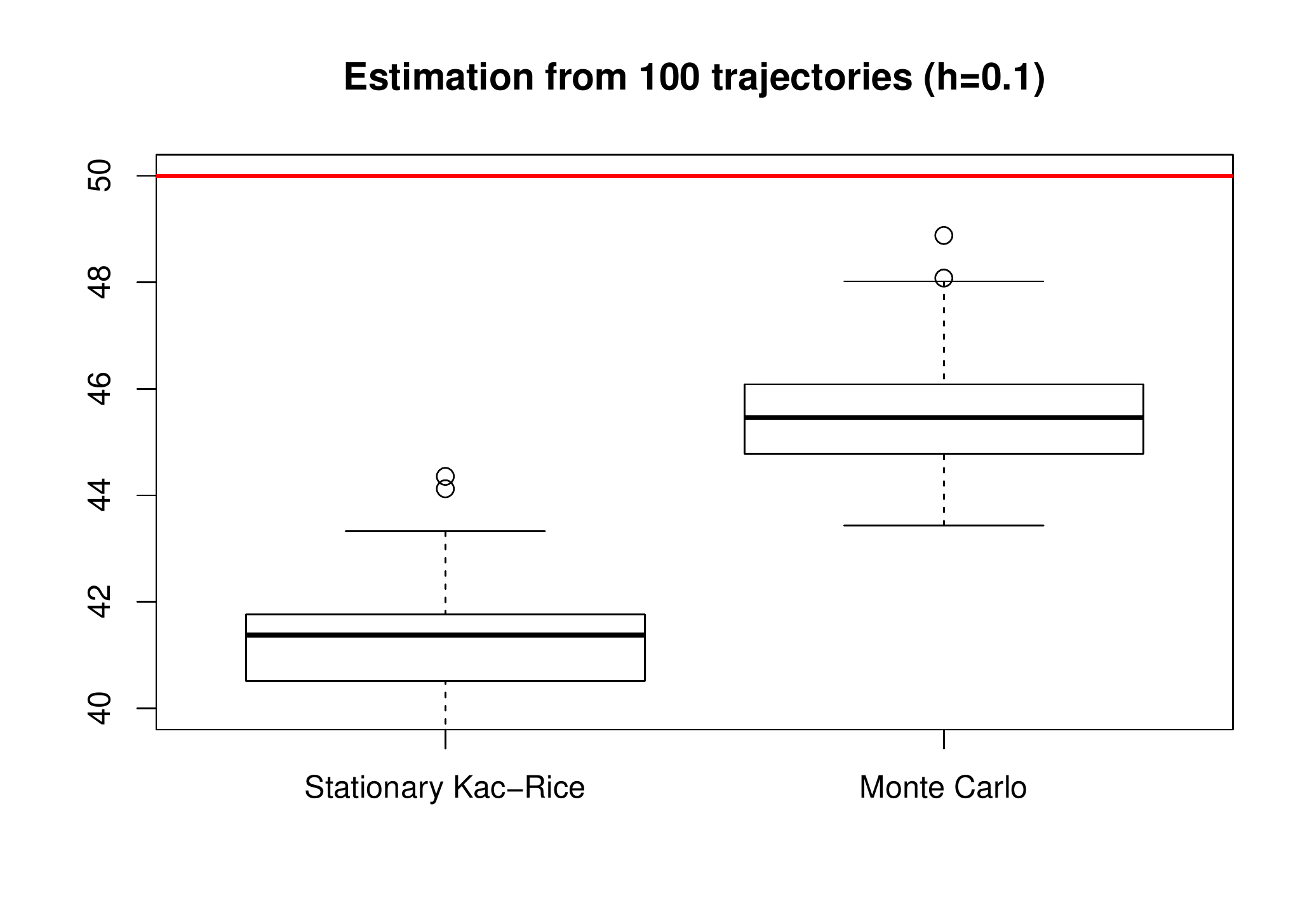}\qquad\qquad\includegraphics[width=6.5cm]{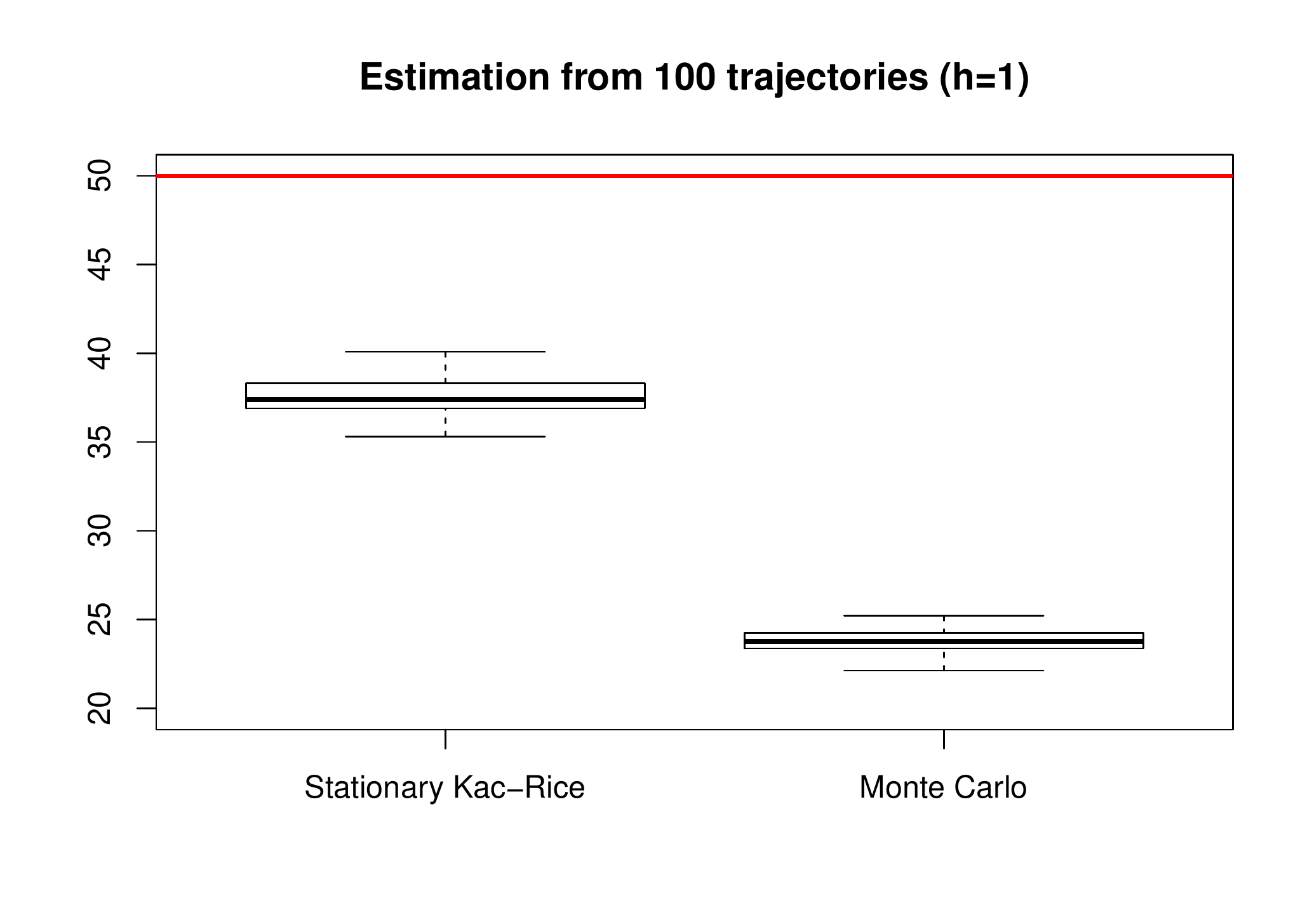}\\
\includegraphics[width=6.5cm]{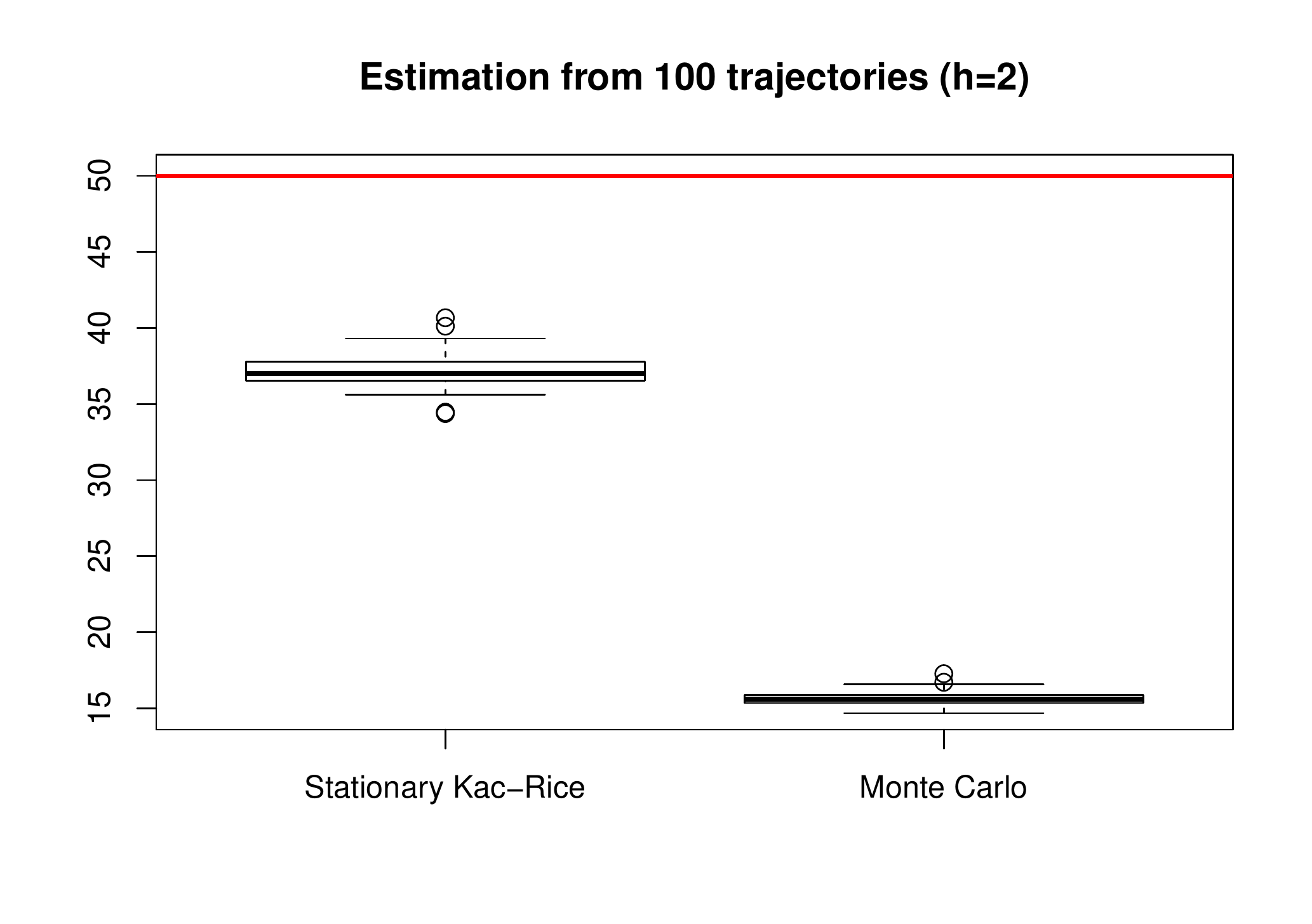}\qquad\qquad\includegraphics[width=6.5cm]{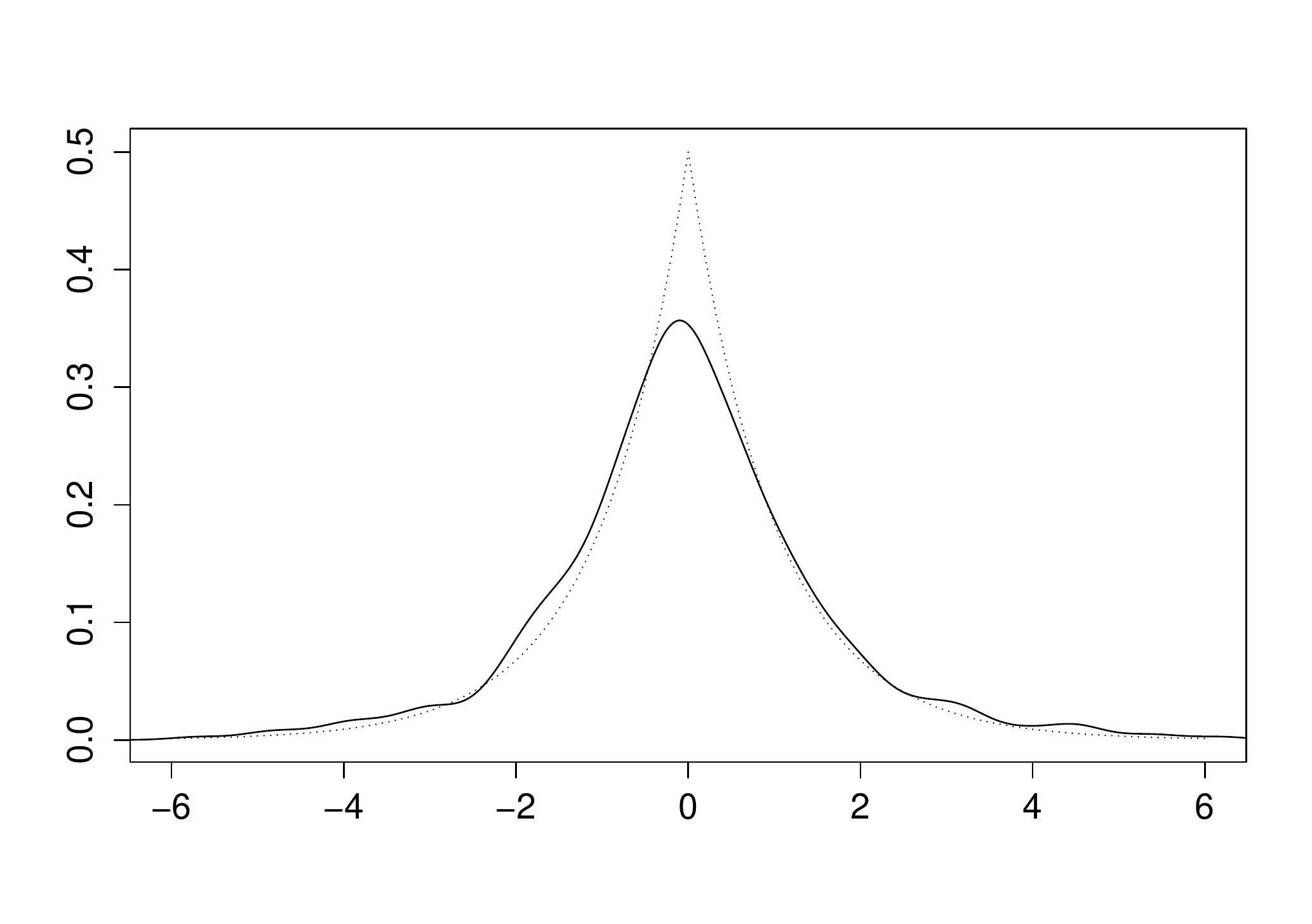}
\caption{Boxplots over $100$ replicates of Monte Carlo and stationary Kac-Rice estimators of $C_0(100)$ for the telegraph process from data extracted from a temporal grid with an increasing step size: $h=0.1$ (top left), $h=1$ (top right) and $h=2$ (bottom left), and comparison between the theoretical invariant distribution (dotted line) and its estimation (bottom right).}
\label{bp_step_size_s_0_telegraph_1d}
\end{figure}

%%%%%%%%%%%%%%%%%%%%%%%%%%%%%%%%%%%%%%%%%%%%%%%%%%%%%%%%%%%%%%%%%%

\subsection{Piecewise deterministic simulated annealing algorithm}
\label{ss:pdsa}

We consider a piecewise deterministic version of the simulated annealing algorithm. This process, stated and studied in \cite{M16}, is a PDMP $(X,Y)$ valued in $\mathbb{R}\times\{-1,+1\}$ and evolving according to the generator,
$$\mathcal{L}f(x,y) = y \partial_xf(x,y) +\beta\left[yU'(x)\right]_+ \left(f(x,-y)-f(x,y)\right),$$
where $\beta$ is a positive parameter called the inverse of the temperature, and $U$ an energy potential. The Euclidean part $X$ evolves linearly with velocity switching between $-1$ and $+1$ according to the mode $Y$. These switchings occur at an $X$-dependent rate: there is no jump when the potential gradient and the mode have the same sign and there is a jump with intensity the absolute value of this same gradient times the inverse of the temperature $\beta$ otherwise, in the spirit of gradient descent algorithms. The process relaxes to equilibrium \cite[Theorem 1.4]{M16} with convergence in law towards the measure $\mu$ on $\mathbb{R}\times\{-1,+1\}$ defined by
$$\mu(\dd x\times\dd y)  = Z_\beta^{-1} \exp\left(-\beta U(x)\right)\dd x \otimes  \frac{1}{2}(\delta_{-1}+\delta_1)(\dd y),$$
where $Z_\beta$ is a renormalization constant. Thus, at equilibrium, the mode is equidistributed between $-1$ and $+1$ whereas the Euclidean variable samples the Gibb's measure associated to the energy potential $U$. In Figure \ref{pdsa:traj} is displayed a trajectory of $X$ until time $100$ when $U$ is the double-well potential defined on $\R$ by $U(x) = 0.05\,\left(x^4+x^3-4x^2\right)$. The figure also illustrates the concentration of the Euclidean variable around the minima of $U$.

\begin{figure}[th!]
\centering
\includegraphics[width=6.5cm]{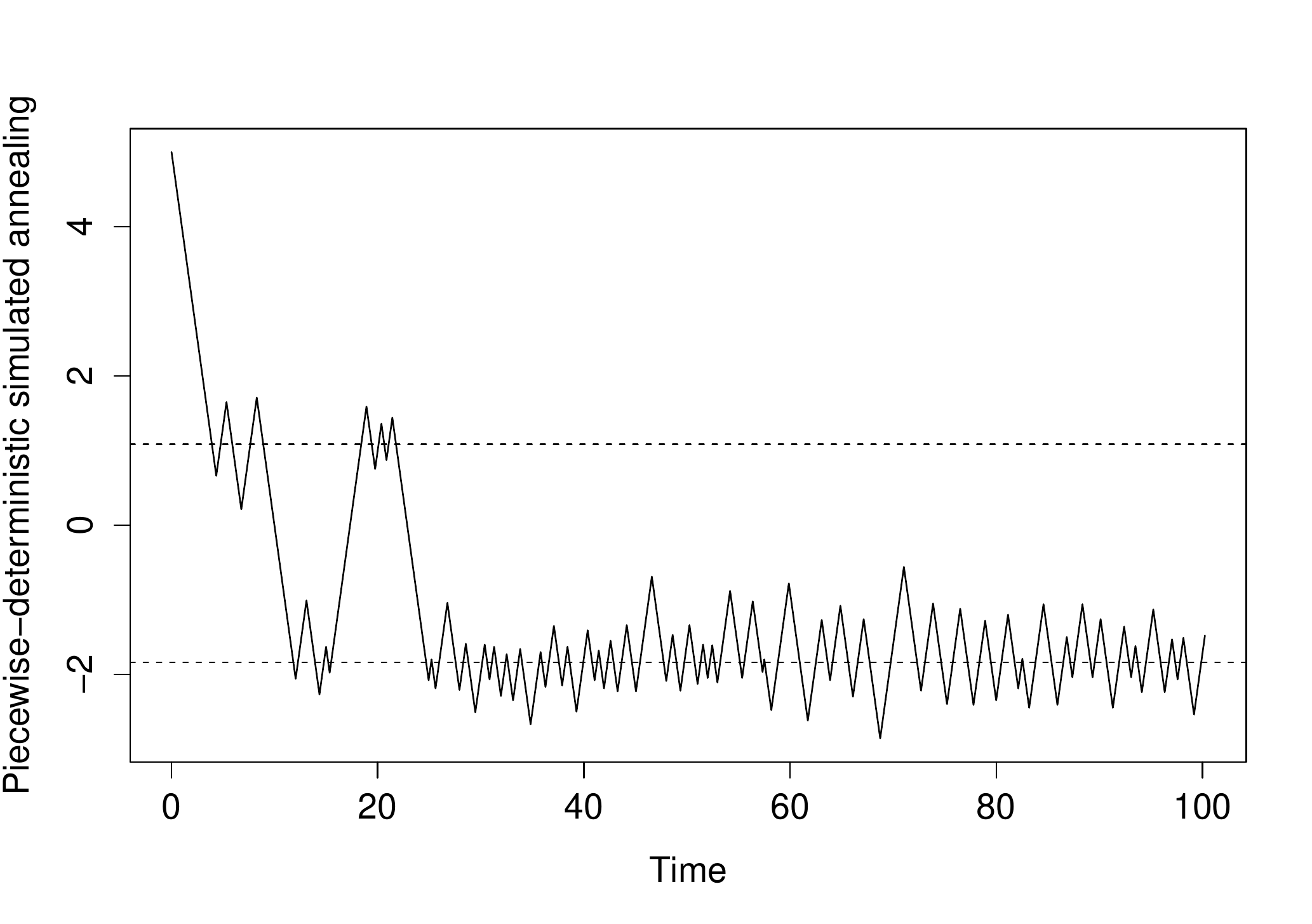}
\qquad\qquad\includegraphics[width=6.5cm]{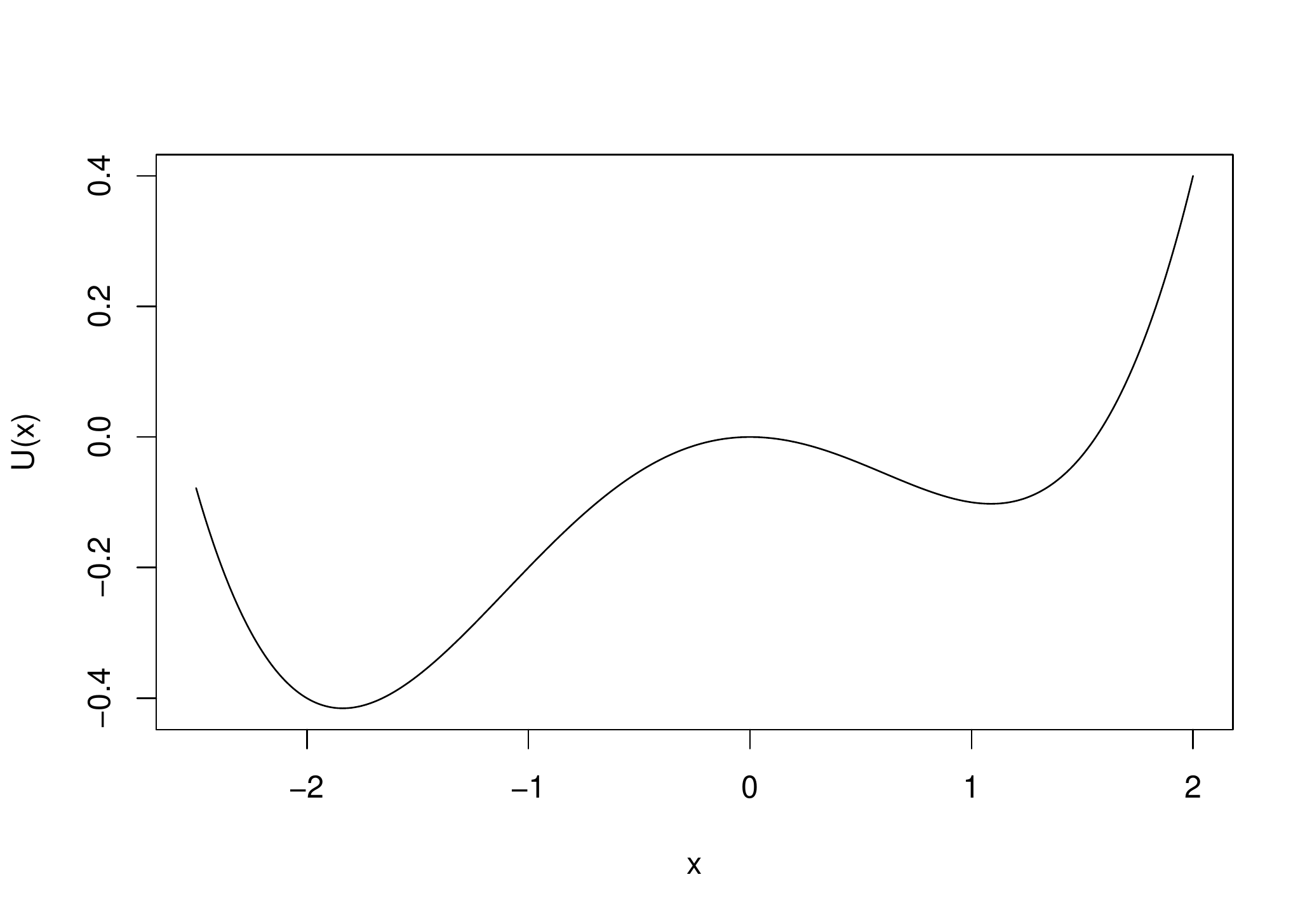}
\caption{A trajectory of the piecewise deterministic simulated annealing with inverse of temperature $\beta=7$ (left) and double-well potential $U:x\mapsto 0.05\,(x^4+x^3-4x^2)$ (right).\label{pdsa:traj}}
\end{figure}

\smallskip

\noindent
The behavior of the process when the Euclidean variable is in a neighborhood of $0$ is quite peculiar: there is no jump. Indeed, the gradient of $U$ at $0$ is null and therefore $X$ does not change its direction around $0$. As a consequence, the Monte Carlo estimator should perform quite well at this level, even with reasonably large discretization step sizes, since no crossings should be missed. This is what is indeed observed in Figure \ref{pdsa:level0} where are displayed boxplots over $100$ replicates of Monte Carlo and non-stationary Kac-Rice estimators of the average number of crossings of the level $0$, $C_0(100)$, with an increasing time step size $h\in\{0.1,1,2\}$. In this case, the counting technique performs better than the Kac-Rice-based methods.

\begin{figure}[th!]
\centering
\includegraphics[width=6.5cm]{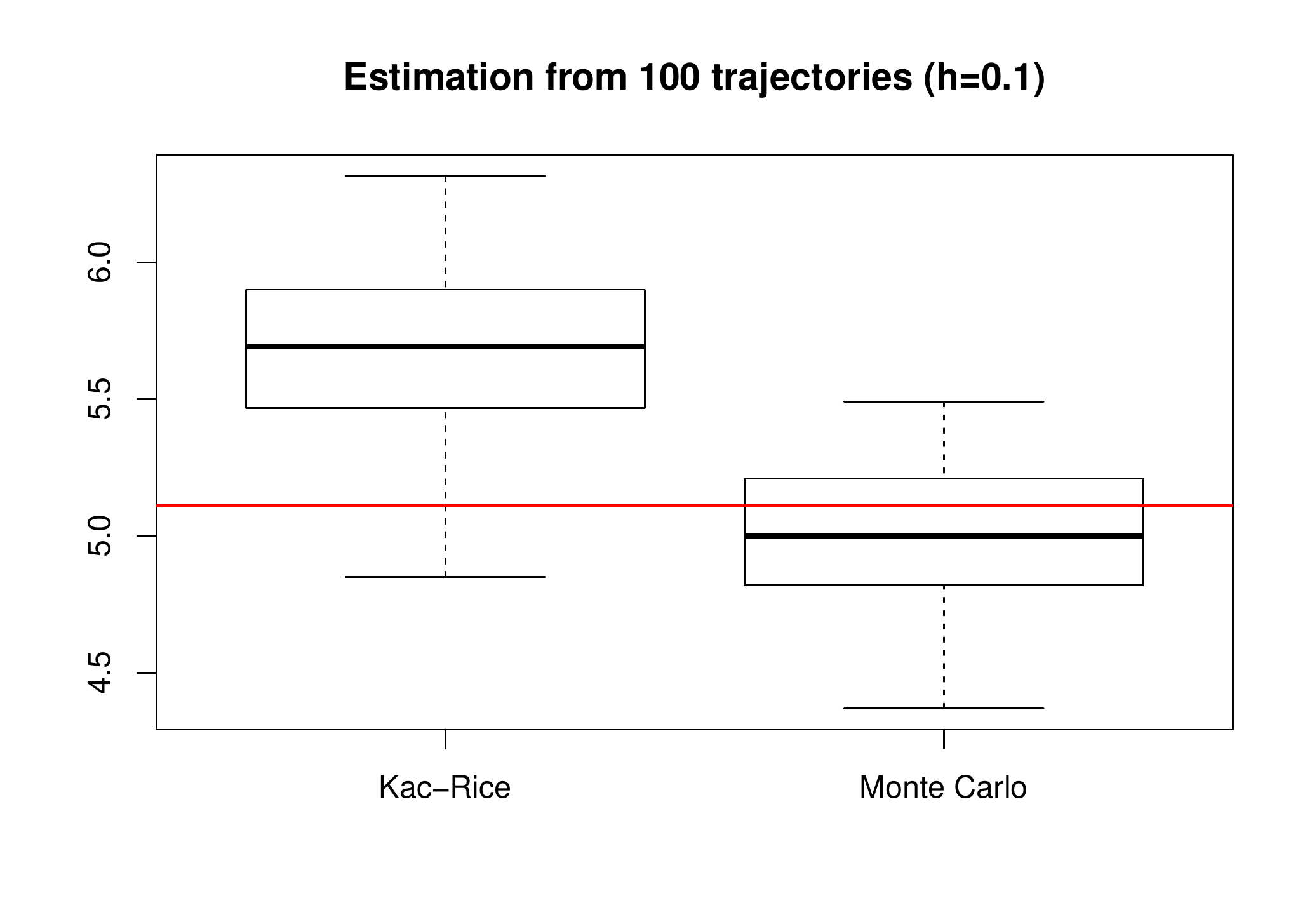}\qquad\qquad\includegraphics[width=6.5cm]{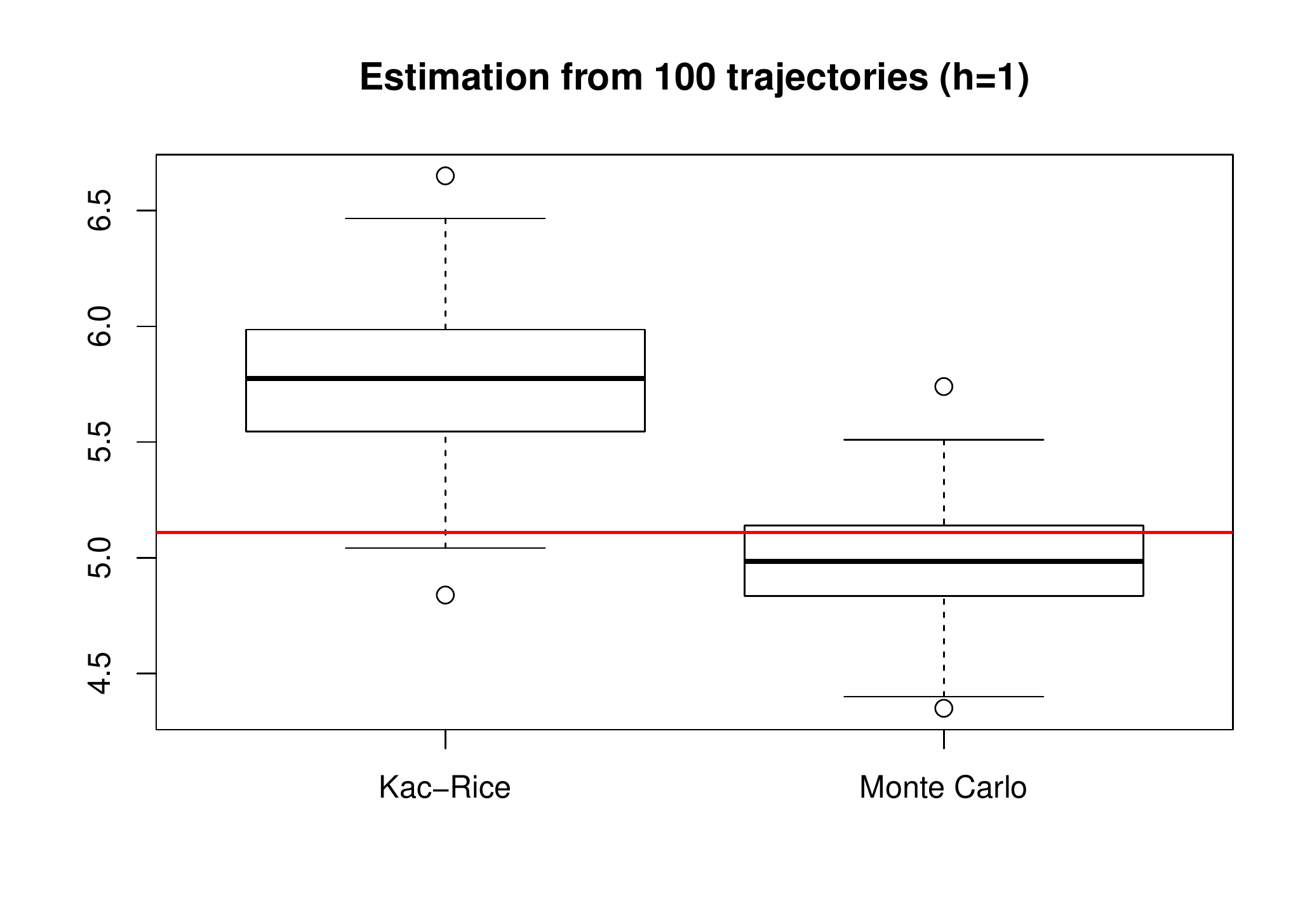}\\
\qquad\quad\includegraphics[width=6.5cm]{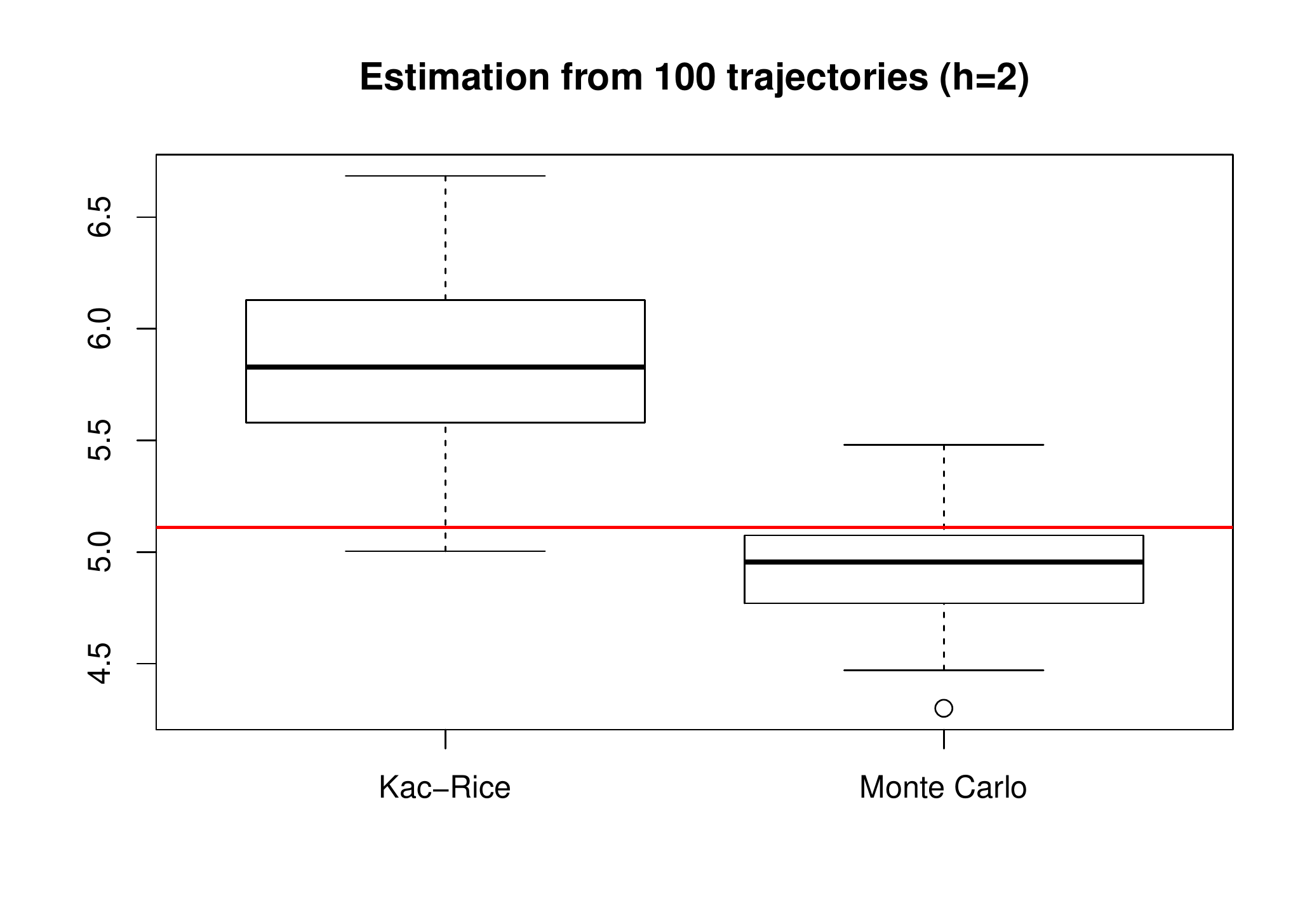}\hfill$~$
\caption{Boxplots over $100$ replicates of Monte Carlo and non-stationary Kac-Rice estimators of $C_0(100)$ for the piecewise deterministic simulated annealing process from data extracted from a temporal grid with an increasing step size: $h=0.1$ (top left), $h=1$ (top right) and $h=2$ (bottom left).}
\label{pdsa:level0}
\end{figure}

\smallskip

\noindent
In Figure \ref{pdsa:level1} are displayed boxplots of Monte Carlo and non-stationary Kac-Rice estimators of the average number of crossings of the level $1$ $C_1(100)$ with an increasing time step size $h\in\{0.1,1,2\}$ and an increasing number of trajectories $n\in\{50,100\}$. For a small time step size and a small number of trajectories, the Monte Carlo estimator better estimates the average number of crossings of $1$ than the Kac-Rice-based method. This observation is reversed when the step size and the number of observation grow: when $h=2$ and $n=200$, the empirical estimator is dramatically far from the true average number of crossings whereas the non-stationary Kac-Rice estimator is still accurate.

\begin{figure}[th!]
\centering
\includegraphics[width=6.5cm]{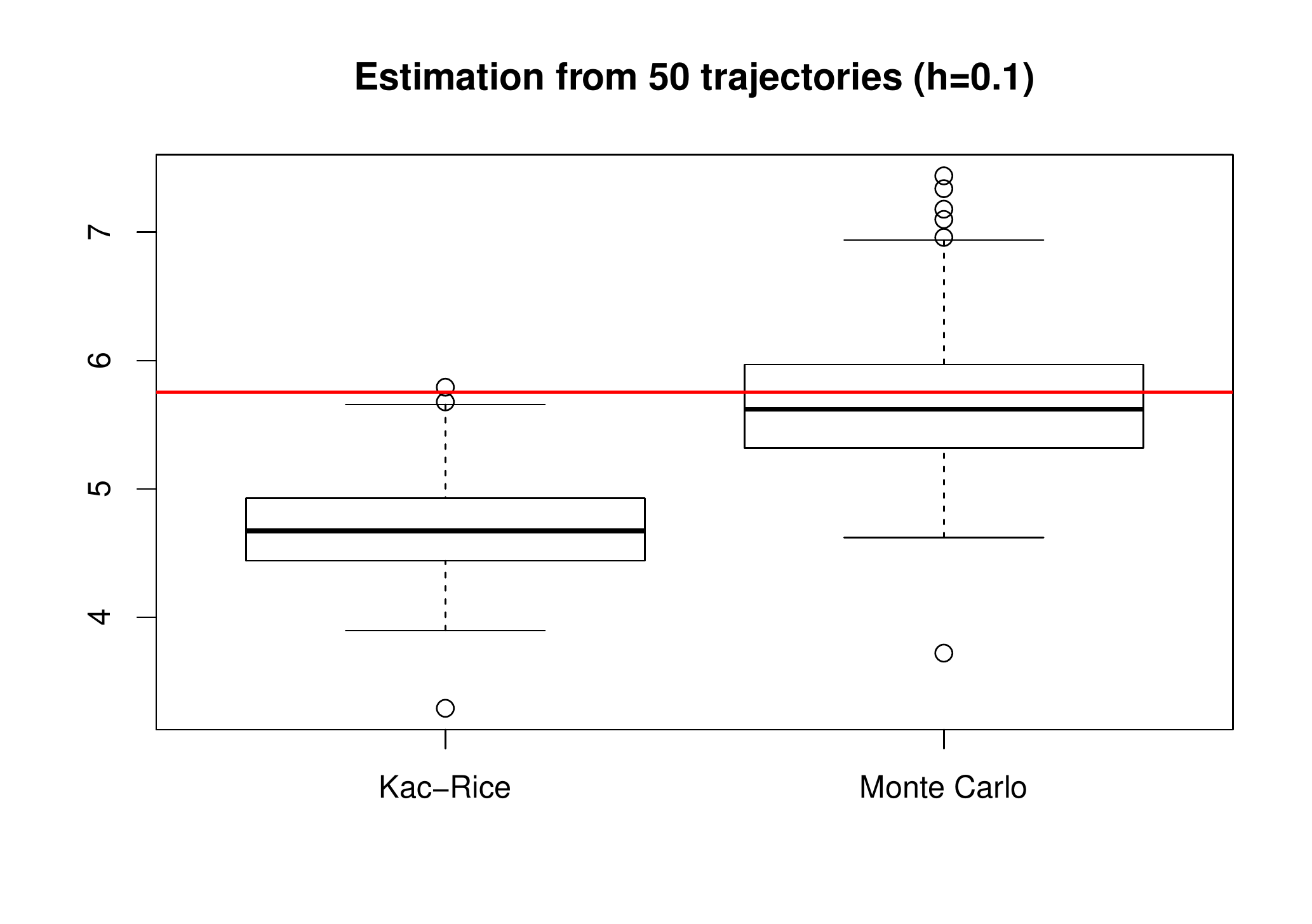}\qquad\qquad\includegraphics[width=6.5cm]{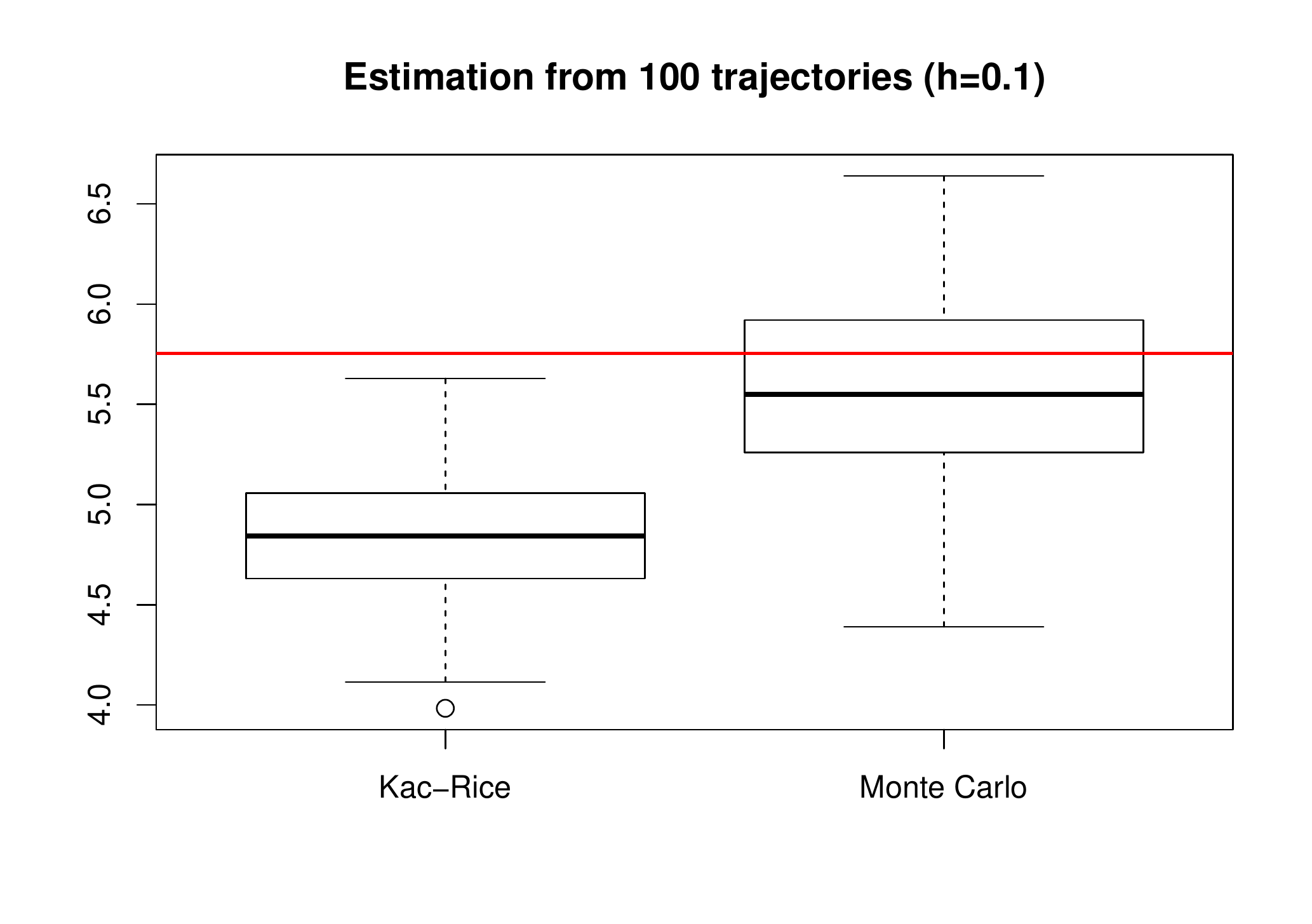}\\
\includegraphics[width=6.5cm]{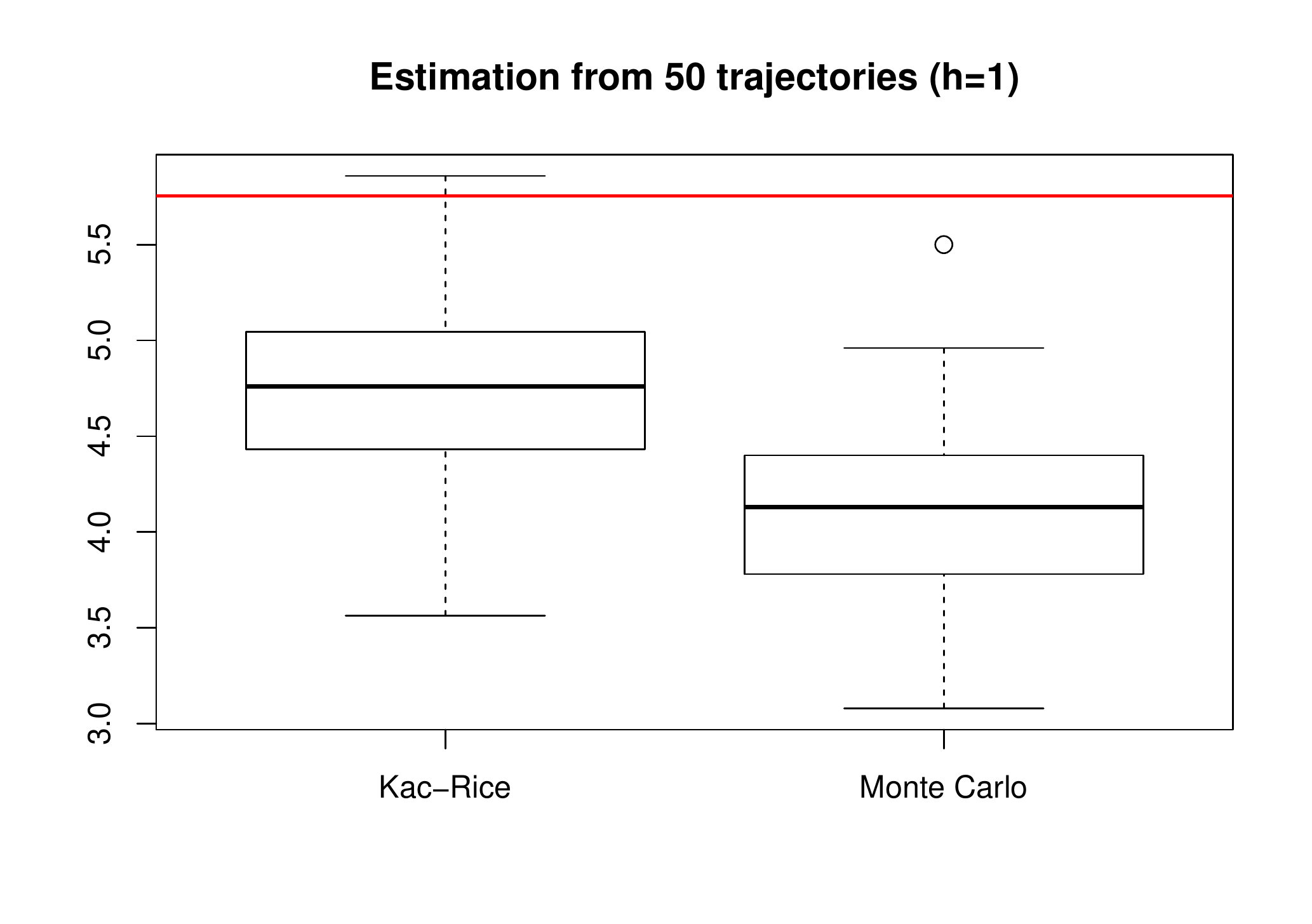}\qquad\qquad\includegraphics[width=6.5cm]{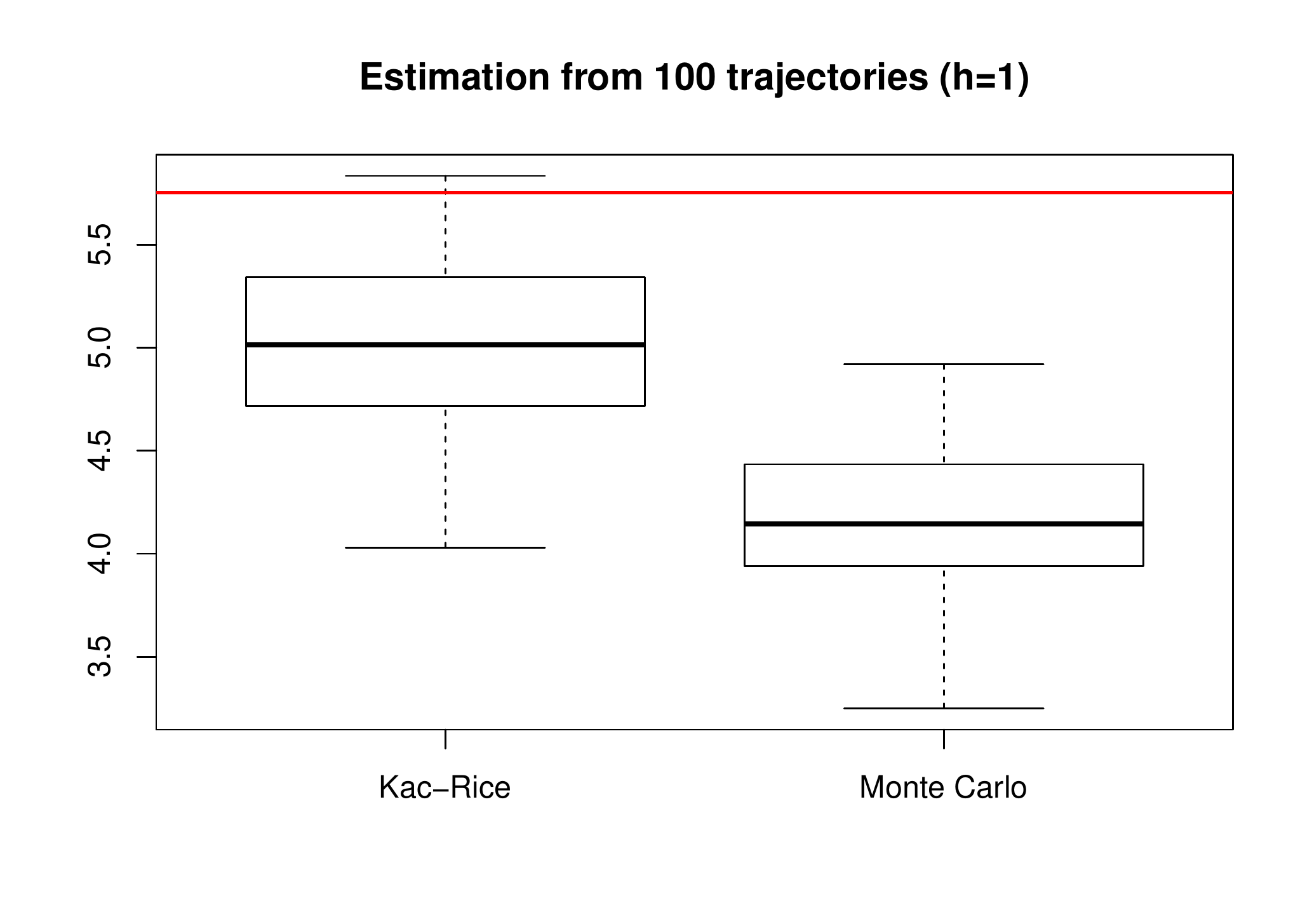}\\
\includegraphics[width=6.5cm]{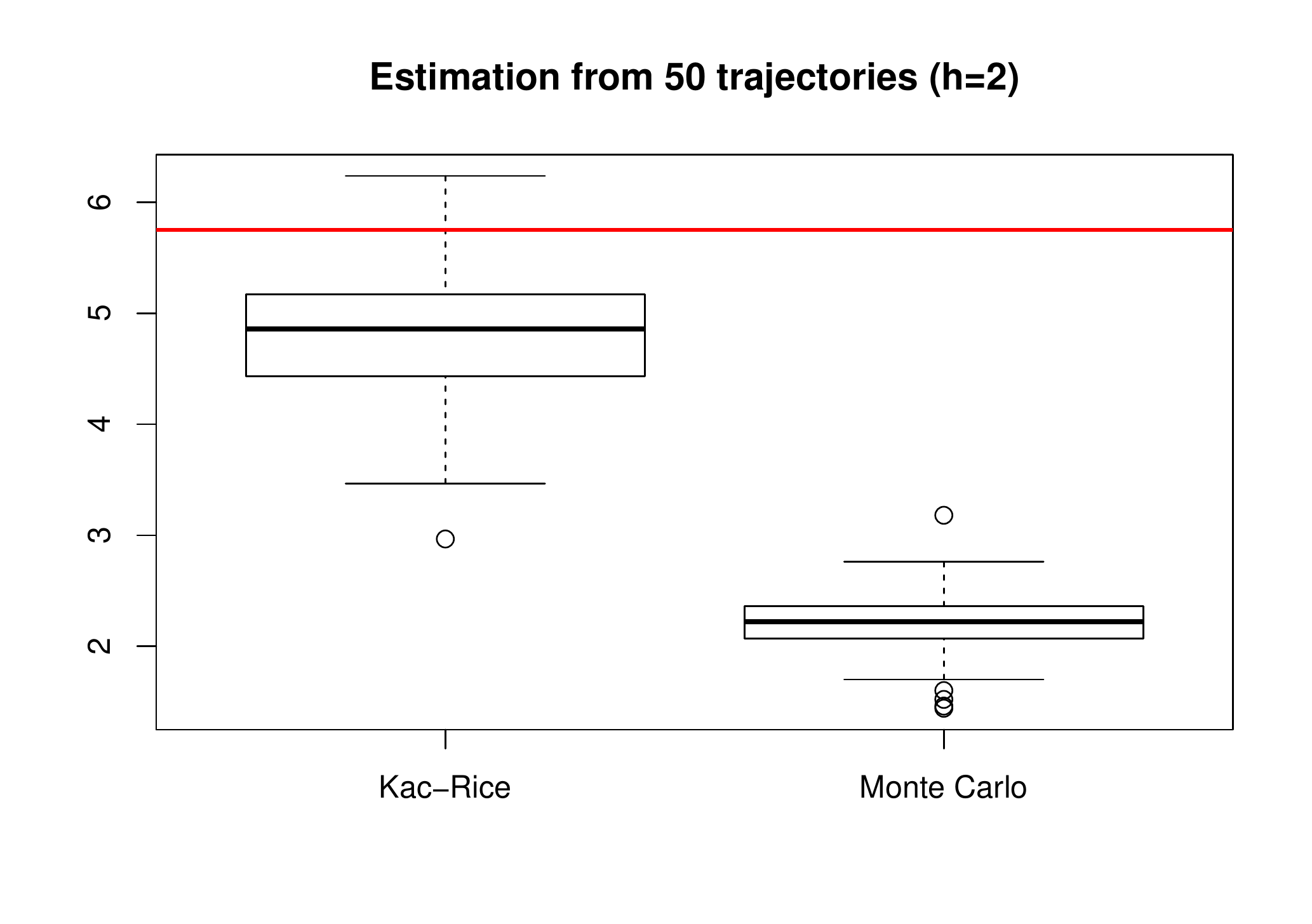}\qquad\qquad\includegraphics[width=6.5cm]{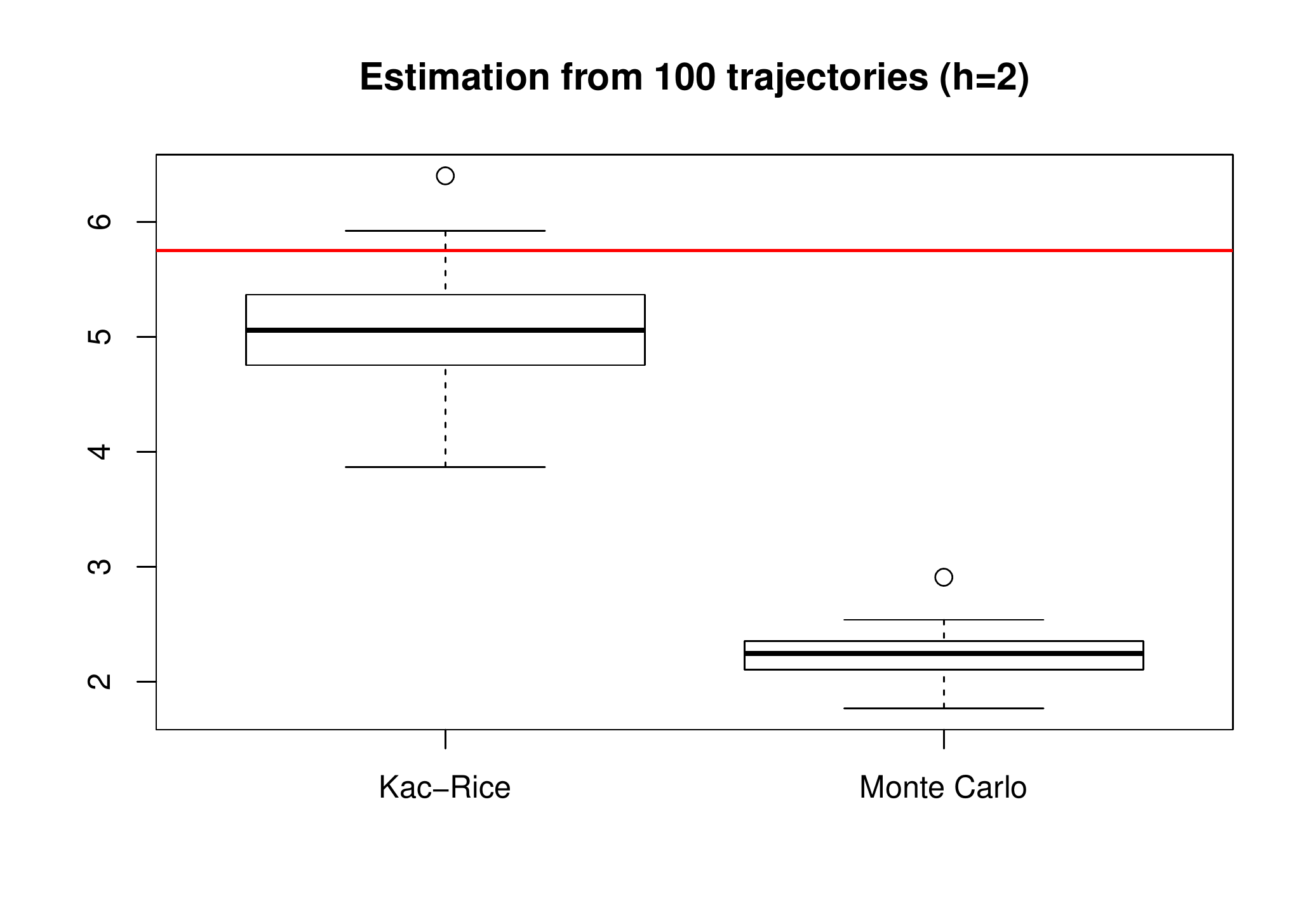}
\caption{Boxplots over $100$ replicates of Monte Carlo and non-stationary Kac-Rice estimators of $C_1(100)$ for the piecewise deterministic simulated annealing process from data extracted from a temporal grid with an increasing step size: $h=0.1$ (top), $h=1$ (middle) and $h=2$ (bottom), and an increasing number of trajectories: $n=50$ (left) and $n=100$ (right).}
\label{pdsa:level1}
\end{figure}

\subsection{Two-dimensional telegraph process}
\label{tele:2d}

We consider a two-dimensional variant of the telegraph process. Still in the Euclidean-mode setting, the considered process is valued in ${\cal X}\times{\cal Y}=\mathbb{R}^2\times\{\text{N},\text{S},\text{E},\text{W}\}$. The four letters $\{\text{N},\text{S},\text{E},\text{W}\}$ stand for the cardinal directions, that is, in the same order, for the vectors
$$\{(0,1), (0,-1),(1,0),(-1,0)\}$$
of $\mathbb{R}^2$. For the sake of readability, we also denote by $\text{NE}$, $\text{SE}$, $\text{SW}$ and $\text{NW}$ the four quadrants
$$
\left\{\begin{array}{ccc}
\text{NE}&=&\{(x,y)\in\mathbb{R}^2~:~x\geq0, y\geq0\},\\
\text{SE}&=&\{(x,y)\in\mathbb{R}^2~:~x\geq0, y<0\},\\
\text{SW}&=&\{(x,y)\in\mathbb{R}^2~:~x<0, y\leq 0\},\\
\text{NW}&=&\{(x,y)\in\mathbb{R}^2~:~x<0, y>0\}.
\end{array}\right.
$$
With these notations, the considered process is a piecewise deterministic Markov process evolving according to the generator
$$\mathcal{L}f(x,y) = y\cdot\partial_xf(x,y) +\sum_{y'\in{\cal Y}}\lambda(x,y\rightarrow y')\left(f(x,y')-f(x,y)\right),$$
where the rate functions are given by
$$
\left\{
\begin{array}{ccccc}
\lambda(x,y\rightarrow\text{N})&=&(\mathbb{1}_{\text{W}}(y)+b\,\mathbb{1}_{\text{S}}(y))\,\mathbb{1}_{\text{SW}}(x)&+&a\mathbb{1}_{\text{S}}(y)\mathbb{1}_{\text{NE}}(x),\\
\lambda(x,y\rightarrow\text{S})&=&(\mathbb{1}_{\text{E}}(y)+b\,\mathbb{1}_{\text{N}}(y))\,\mathbb{1}_{\text{NE}}(x)&+&a\mathbb{1}_{\text{N}}(y)\mathbb{1}_{\text{SW}}(x),\\
\lambda(x,y\rightarrow\text{E})&=&(\mathbb{1}_{\text{N}}(y)+b\,\mathbb{1}_{\text{W}}(y))\,\mathbb{1}_{\text{NW}}(x)&+&a\mathbb{1}_{\text{W}}(y)\mathbb{1}_{\text{SE}}(x),\\
\lambda(x,y\rightarrow\text{W})&=&(\mathbb{1}_{\text{S}}(y)+b\,\mathbb{1}_{\text{E}}(y))\,\mathbb{1}_{\text{SE}}(x)&+&a\mathbb{1}_{\text{E}}(y)\mathbb{1}_{\text{NW}}(x),
\end{array}\right.
$$
with parameters $b>a>0$. The Euclidean component of the process evolves in parallel to one of the main axes and in a similar fashion in each quadrant of the plane. For instance, when it enters the $\text{NE}$ quadrant, it can only do it following the $\text{E}$ direction. Then, its direction switches to the $\text{S}$ direction at rate $\lambda$ and then switches between $\text{N}$ and $\text{S}$ at rate $a$ and $b$, respectively. A trajectory of the process is displayed in Figure \ref{traj_telegraph_2d}.

%%%%%%%%%%%%%%%%%%%%%%%%%%%
\begin{figure}[ht]
\centering
\includegraphics[width=7cm]{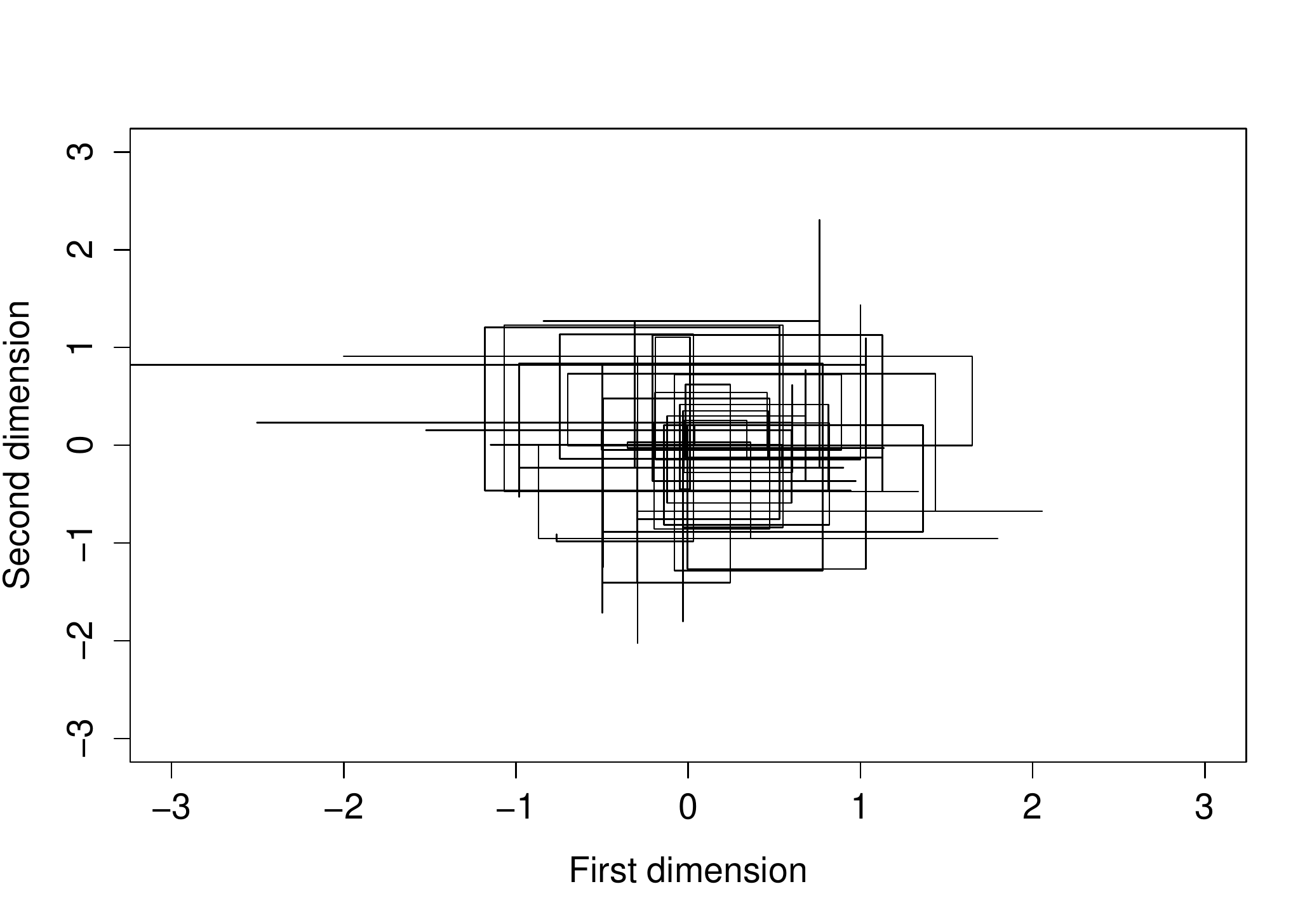}
\caption{A trajectory of the two-dimensional telegraph process until time $H=200$.}\label{traj_telegraph_2d}
\end{figure}
%%%%%%%%%%%%%%%%%%%%%%%%%%%

\noindent
We consider the average number of crossings of the square $S_c$ with vertices 
$$\{(-c,-c),(-c,c),(c,-c),(c,c)\}.$$
We compare the Monte Carlo and non-stationary Kac-Rice estimators when the horizon time is $H=20$, the number of trajectories is either $n=200$ or $n=400$ and the time step size is $h=1$. It should be noted that integrals over the square $\int_{S_c} f(x) \sigma_1(\dd x)$ have been computed as four line integrals,
$$
\int_{S_c} f(x) \sigma_1(\dd x)=\int_{-c}^c f(-c,x_2)\,\dd x_2+\int_{-c}^{c} f(c,x_2)\,\dd x_2+\int_{-c}^c f(x_1,c)\,\dd x_1+\int_{-c}^c f(x_1,-c)\,\dd x_1.
$$
In Figure \ref{bp_step_size_s_2_telegraph_2d} are displayed boxplots over $100$ replicates of Monte Carlo and non-stationary Kac-Rice estimators of the average number of crossings of the square $S_2$ from an increasing number of available trajectories $n\in\{200,400\}$. The Kac-Rice-based method performs for both numbers of trajectories better than the counting estimator: the Monte Carlo estimator misses some crossings given the discrete nature of the temporal grid. It should be remarked that the bias does not depend on the number of data.
The same conclusions hold true for the crossings of the square $S_3$ presented in Figure \ref{bp_step_size_s_3_telegraph_2d}.

%%%%%%%%%%%%%%%%%%%%%%%%%%%
\begin{figure}[t!h]
\centering
\includegraphics[width=6.5cm]{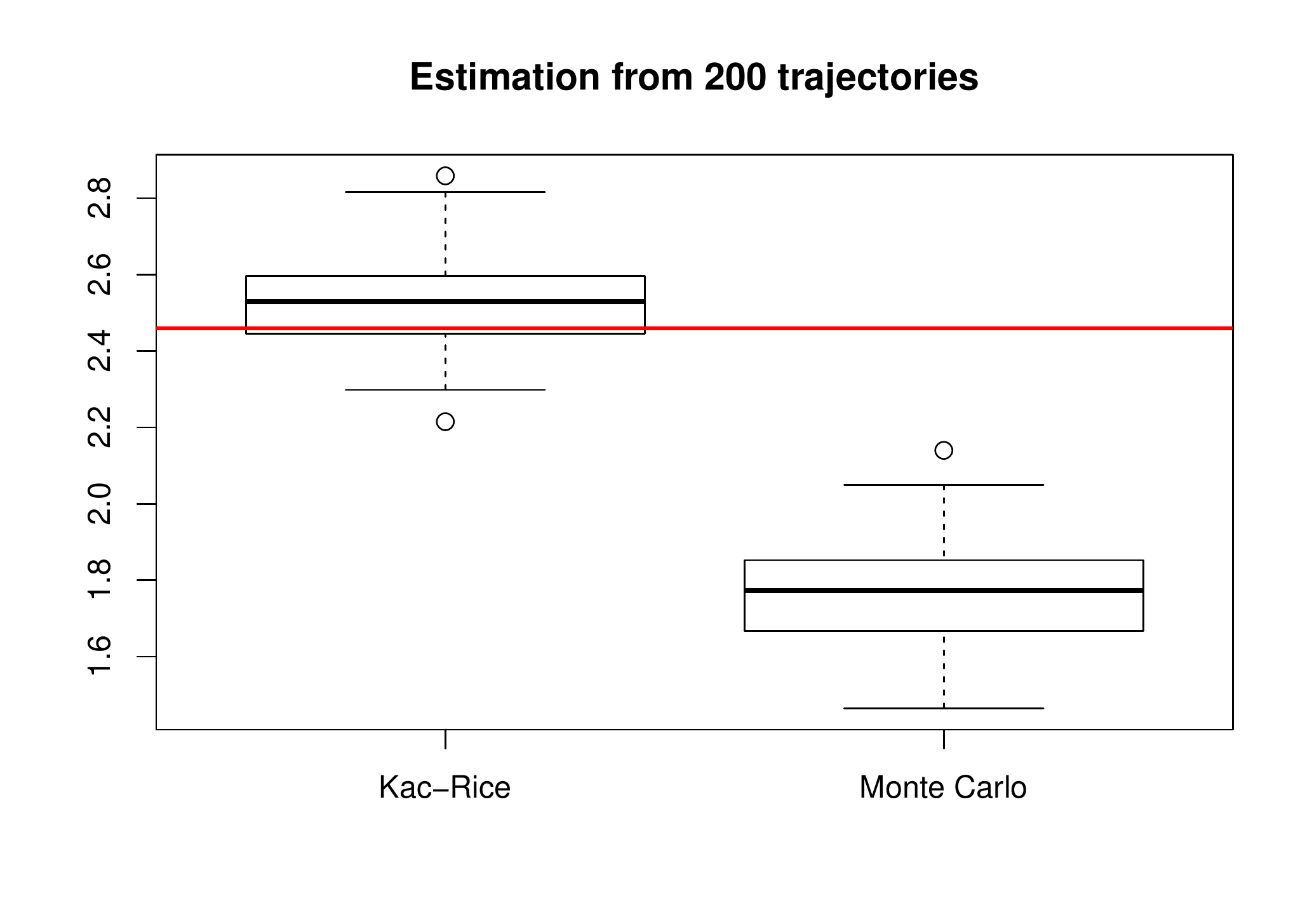}\qquad\qquad\includegraphics[width=6.5cm]{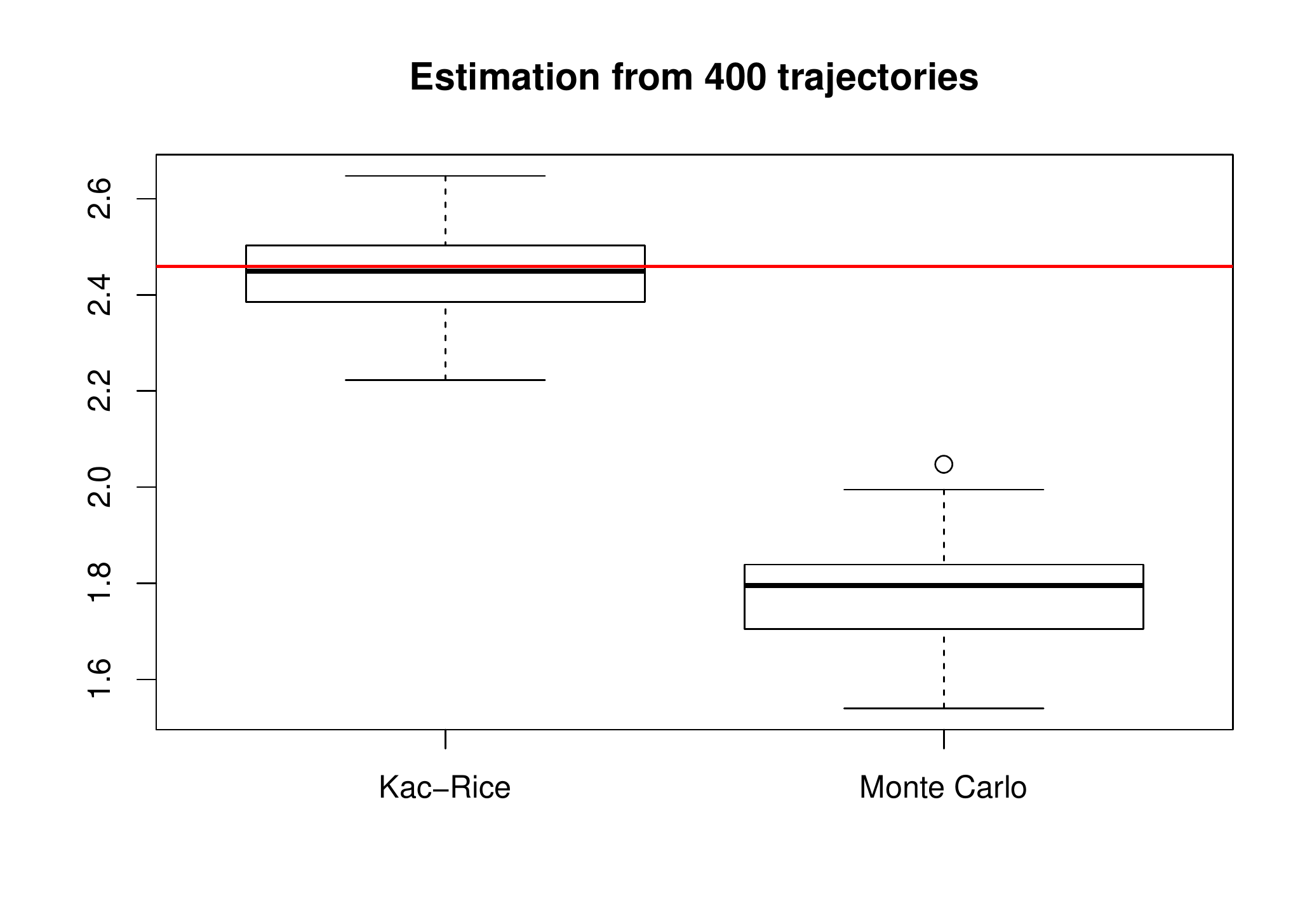}
\caption{Boxplots over $100$ replicates of Monte Carlo and non-stationary Kac-Rice estimators of $C_{S_2}(20)$ for the two-dimensional telegraph process from $n=200$ (left) and $n=400$ (right) trajectories extracted from a temporal grid with step size $h=1$. The (spatial) integration step size is $\Delta=0.1$.}
\label{bp_step_size_s_2_telegraph_2d}
\end{figure}

\begin{figure}[t!h]
\centering
\includegraphics[width=6.5cm]{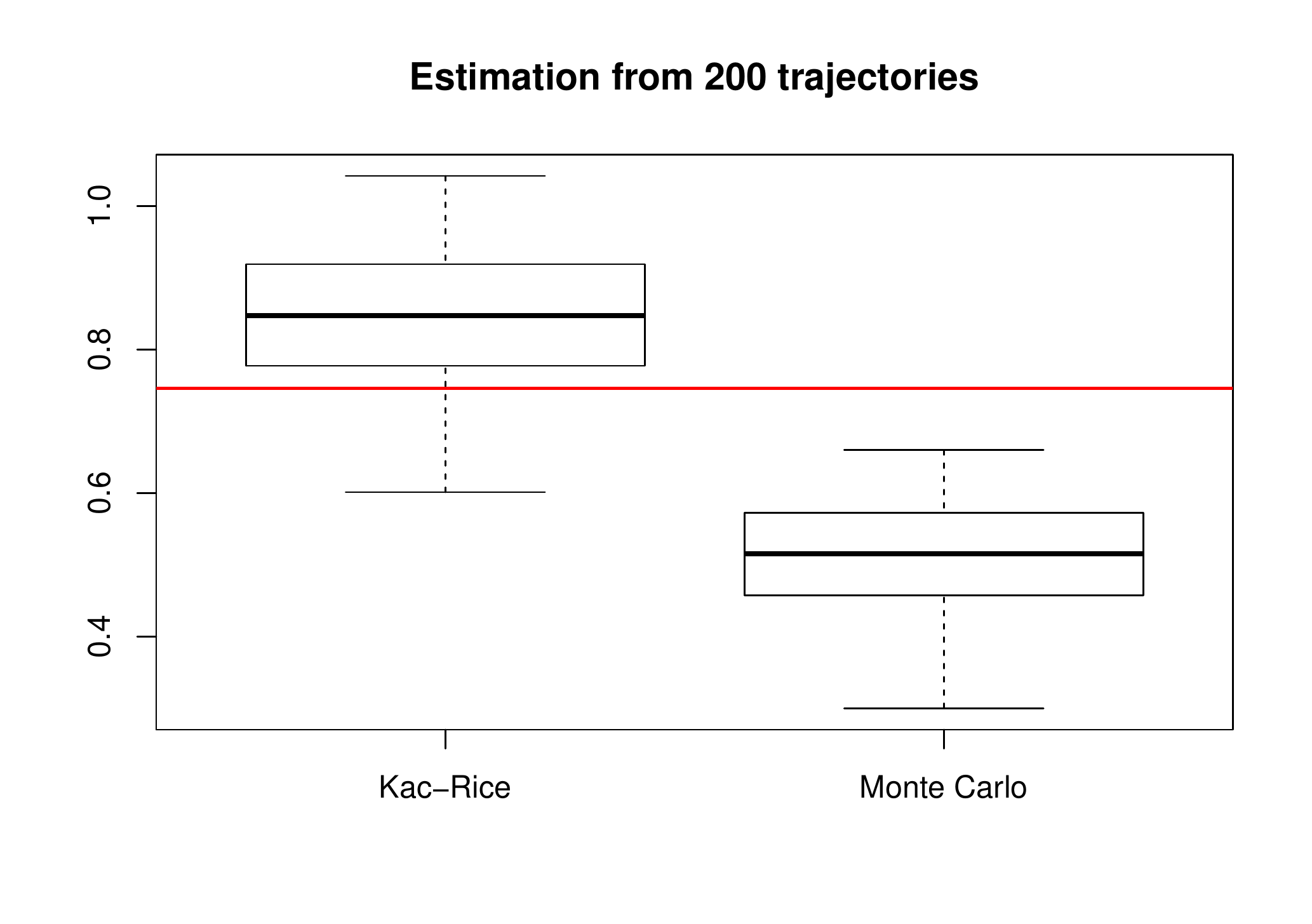}\qquad\qquad\includegraphics[width=6.5cm]{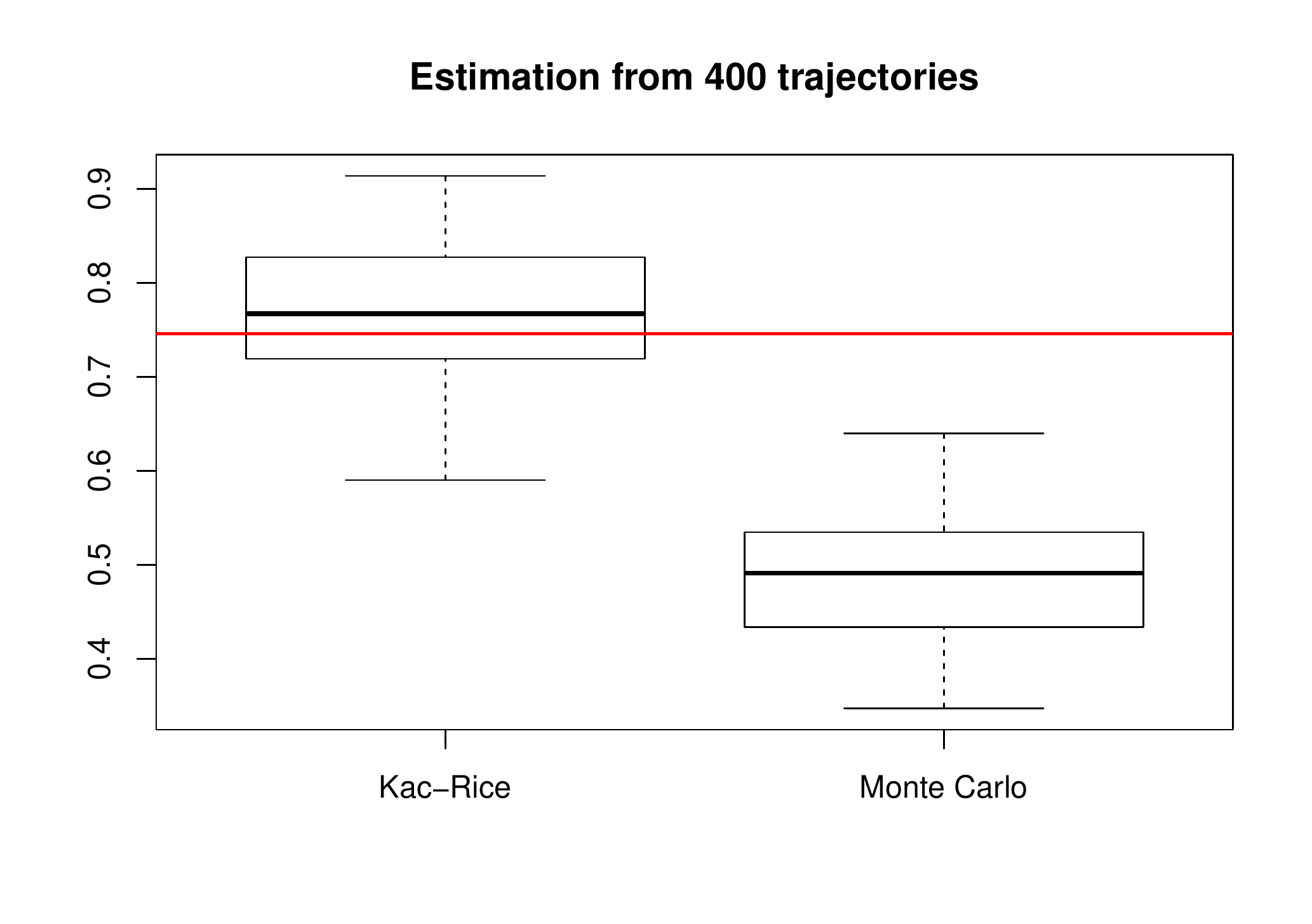}
\caption{Boxplots over $100$ replicates of Monte Carlo and non-stationary Kac-Rice estimators of $C_{S_3}(20)$ for the two-dimensional telegraph process from $n=200$ (left) and $n=400$ (right) trajectories extracted from a temporal grid with step size $h=1$. The (spatial) integration step size is $\Delta=0.1$.}
\label{bp_step_size_s_3_telegraph_2d}
\end{figure}
%%%%%%%%%%%%%%%%%%%%%%%%%%%

\section{Terrestrial and marine behaviors of a seabird species}
\label{s:realdata}

This section is devoted to the application of the methodology developed in this paper to a real dataset involving marine and terrestrial trajectories of lesser black-backed gulls around the island of Spiekeroog in Germany.

\subsection{Description of the dataset}

The authors of \cite{Garthe16} compare the marine and terrestrial foraging behaviors in lesser black-backed gulls breedind along the coast of the southern North Sea, more precisely at Spiekeroog. For this purpose, they attached GPS data loggers to eight incubating birds, recording GPS data from $10$ to $19$ days, every $3$ minutes approximatively, allowing for flight-path reconstruction. The authors have made the data from this experiment openly available at the Movebank Data Repository \cite{Garthe16data}. A GPS data is made of multiple entries. As long as we are concerned, a GPS data is composed of
\begin{itemize}
\item a timestamp: the date and time at which a sensor measurement was taken, in Coordinate Universal Time (UTC);
\item a latitude: the geographic latitude, in decimal degrees, of the location of the animal;
\item a longitude: the geographic longitude, in decimal degrees, of the location of the animal;
\item a ground speed: the estimated ground speed, in meters per second;
\item a heading: the direction in which the animal moved, in decimal degrees clockwise from north.
\end{itemize}
Some other entries are available, but will not be used in the present study, we refer the interested reader to the \verb+README.txt+ file associated to the dataset \cite{Garthe16data}.

\smallskip

\noindent
For each of the eight birds, we sliced the recording by days and retained only the days with from $440$ to $467$ data. There is a hundred of such days. We standardized the data on a common time grid with time step $\dd t=24\times 3600/467$ seconds, replacing the possible missing data by interpolation. The hundred resulting trajectories are displayed in Figure \ref{fig:gull_traj}.

\begin{figure}[ht]
\centering
\includegraphics[width=8cm]{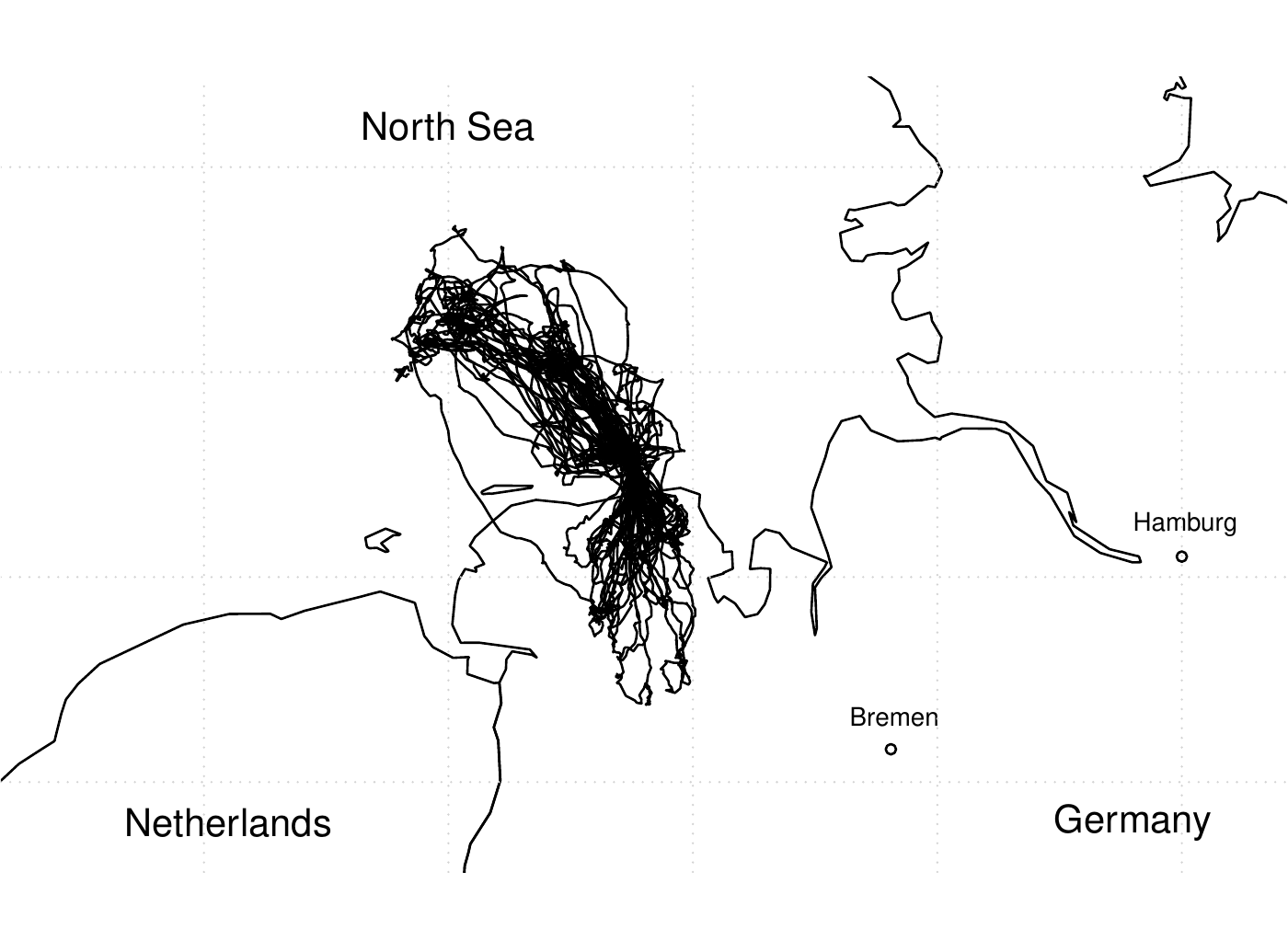}
\caption{The hundred lesser black-backed gulls' daily trajectories starting from the island of Spiekeroog.}
\label{fig:gull_traj}
\end{figure}

\subsection{Average depth of terrestrial and marine trips}

We propose to analyze the marine and terrestrial trip behaviors of the considered colony of gulls by means of the averaged number of crossings of some segments distant from the colony nests localized at Spiekeroog. Even if the simulation experiments of Section \ref{s:simu} clearly show the pertinence of the use of the Kac-Rice estimator for the estimation of the average number of crossings, for comparison purpose, we implement both the Kac-Rice and Monte Carlo methods for studying this real data set. We begin with the presentation of the implementation of the non-stationary Kac-Rice method and then proceed briefly to the presentation of the implementation of the Monte Carlo method.

\paragraph{Implementation of the non-stationary Kac-Rice estimator} Our time unit is the hour. We model the one day trajectory of a gull by a PSP $(X(t))_{t\in[0,24]}$ as defined in Subsection \ref{psp}. The velocity vector $r$ associated to $X$ is unfortunately unknown and has to be inferred from data. More precisely, in formula \eqref{KacRice:dd}, only the scalar product of the velocity vector with the normal to the considered hypersurface is needed to compute the average number of crossings. For a segment $[AB]$, we propose to estimate the scalar product (referred to as the speed projection in what follows) $(r(\rho),\nu_{[AB]})$, where $\rho\in[AB]$ and $\nu_{[AB]}$ is the normal to the segment $[AB]$, in the following way. We browse the segment $[AB]$ at points
$$
\rho_k=A+k\,\dd x(B-A)
$$
with $\dd x$ a space step size and $0\leq k\leq \lfloor1/ \dd x\rfloor$. Let $\e>0$ be an exploration distance. For each point $\rho_k$, we consider the set ${\cal C}_{\e,k}$ of points of the dataset that are $\e$-close to $\rho_k$. We then define the two sets of scalar products
$$
{\cal V}^+_{\e,k}=\{(r(\rho),\nu_{[AB]})_+~:~\rho\in {\cal C}_{\e,k}\}\quad\text{and}\quad {\cal V}^-_{\e,k}=\{(r(\rho),\nu_{[AB]})_-~:~\rho\in {\cal C}_{\e,k}\}
$$
of positive and negative speed projections where, for a point $\rho$ of the dataset,
$$
r(\rho)=v_{\rho}\begin{pmatrix}\cos\,\theta_\rho\\\sin\,\theta_\rho\end{pmatrix},
$$
with $v_\rho$ the recorded ground speed (converted in decimal degrees per hour) and $\theta_\rho$ the recorded heading (converted in radian, clockwise from East). To the point $\rho_k$ of the segment is then associated the positive speed projection ${\text{mean}}({\cal V}^+_{\e,k})$ and the negative speed projection ${\text{mean}}({\cal V}^-_{\e,k})$ (the mean of the empty set being zero). The sets of positive and negative speed projections
$$
\{(r(\rho),\nu_{[AB]})_+~:~\rho\in [AB]\}\quad\text{and}\quad \{(r(\rho),\nu_{[AB]})_-~:~\rho\in [AB]\}
$$
on the whole segment $[AB]$ are then approximated by locally-weighted polynomial regression (performed with \verb+R+ function \verb+lowess+ in our data experiments) on the sets
$$
\{{\text{mean}}({\cal V}^+_{\e,k})~:~0\leq k\leq \lfloor1/ \dd x\rfloor\}\quad\text{and}\quad \{{\text{mean}}({\cal V}^-_{\e,k})~:~0\leq k\leq \lfloor1/ \dd x\rfloor\},
$$
as illustrated in Figure \ref{gull_velocity}. This method allows us to consider the positive and negative speed projection functions
$$
s^{\pm}_{\e,\dd x}:\rho\in[AB]\mapsto (r(\rho),\nu_{[AB]})_{\pm}
$$
associated to $X$ for any segment $[AB]$ (and actually for any one-dimensional space endowed with a normal vector field).

\begin{figure}[ht]
\centering
\includegraphics[width=6.5cm]{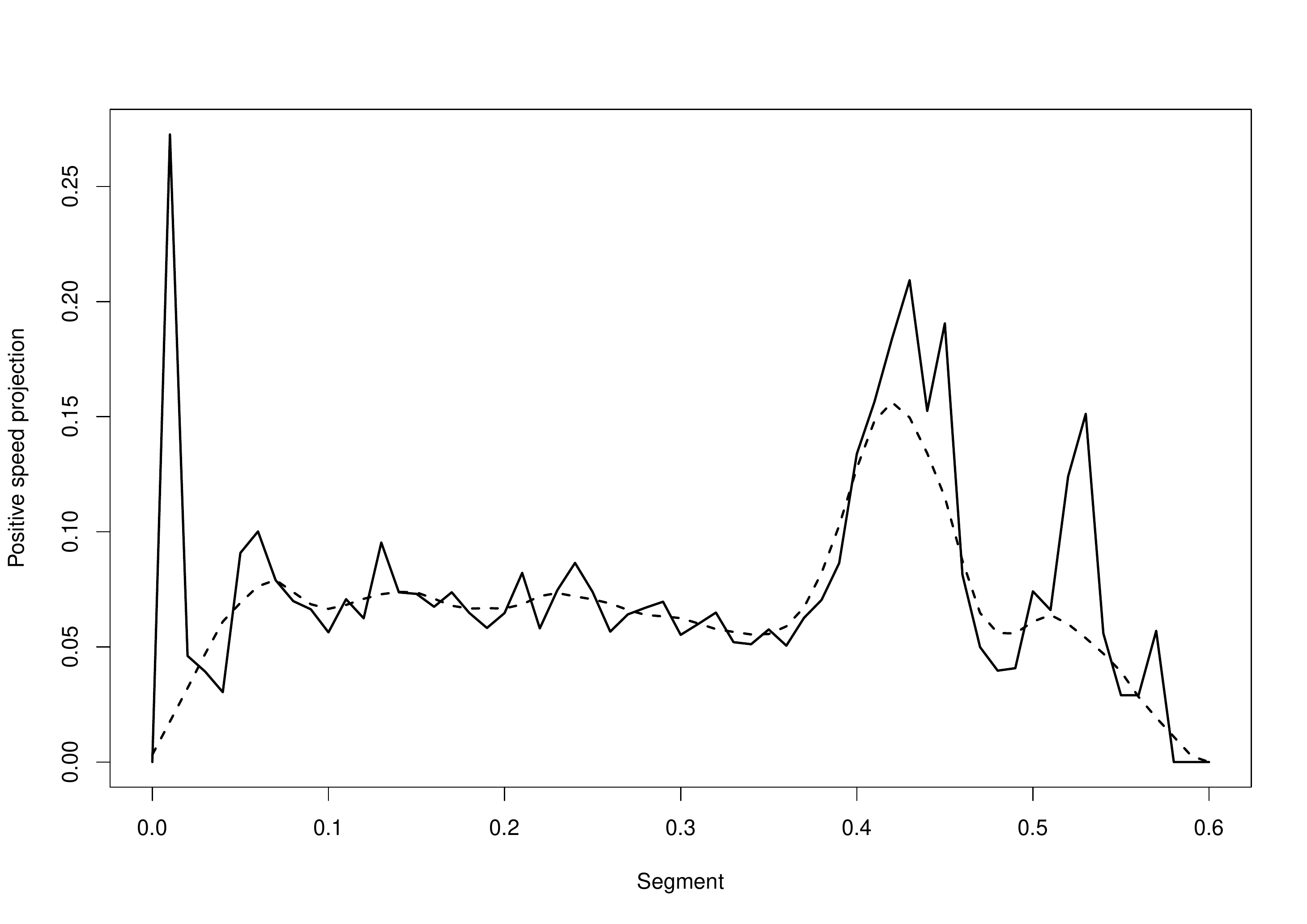}
\qquad\qquad\includegraphics[width=6.5cm]{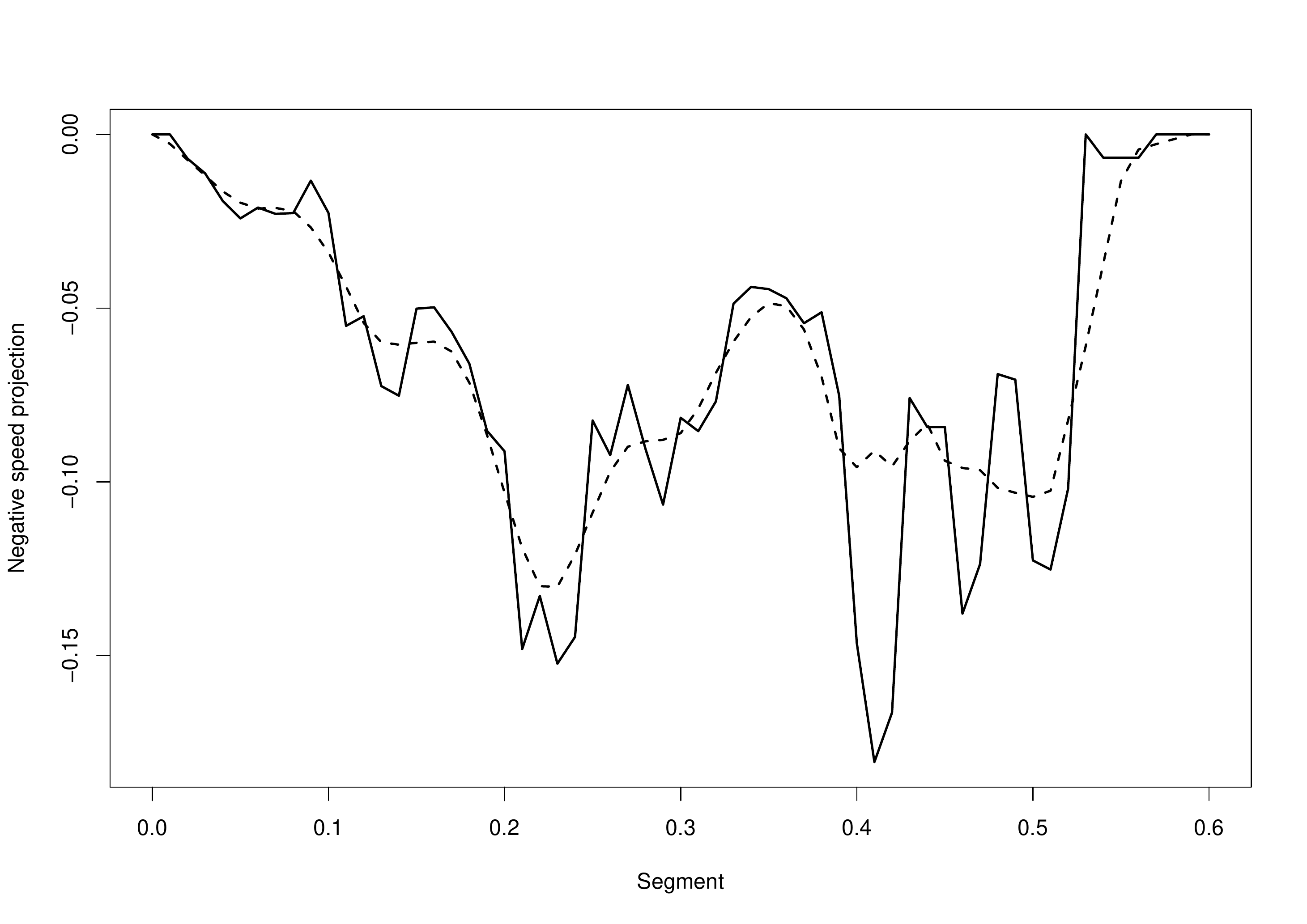}
\caption{Positive (left) and negative (right) speed projections over the normal to the segment $[AB]$ with $A=(7.45,53.54)^T$ and $B=(8.05,53.54)^T$ (solid curve) together with their locally-weighted polynomial regression over the whole segment (dashed curve).}
\label{gull_velocity}
\end{figure}

\begin{remark}
We would like to emphasize that the procedure developed to estimate the scalar products $(r(\rho),\nu(\rho))_{\rho\in S}$ on a crossing surface $S$ is quite general and does not rely on the application context considered in this section. The ingredients required to compute an estimation of $(r(\rho),\nu(\rho))$ at a specific point $\rho\in S$ are the existence and knowledge of the normal $\nu(\rho)$ and the existence of points $\e$-close to $\rho$ with observed velocity vectors. 
\end{remark}

\smallskip

\noindent
Since a bird always goes back to the colony location (at least in our dataset), there is theoretically as many negative speed projections as positive ones and the average number of crossings of the segment $[AB]$ by the process $X$ is then approximated, using the non-stationary Kac-Rice method, by
$$
\int_{[AB]} \frac{s^+_{\e,\dd x}(\rho)+s^{-}_{\e,\dd x}(\rho)}{2}\int_0^{24} p_{X(t)}(\rho)\dd t\,\dd \rho
$$
which is then evaluated by plugging a non-parametric estimator of the distribution, assuming that the hundred available gull's trajectories are independent.

\paragraph{Implementation of the Monte Carlo estimator} For the Monte Carlo method, we have to count, for a given segment [AB], the number of times a trajectory actually crosses this segment. For a fragment of trajectory $[ X(t_i)X(t_{i+1})]$ with $1\leq i\leq n_H-1$, this involves the computation of the following determinants:
\begin{align*}
d_{1,i}=\rm{det}(B-A,X(t_i)-A),\quad &d_{2,i}=\rm{det}(B-A,X(t_{i+1})-A),\\
d_{3,i}=\rm{det}(X(t_{i+1})-X(t_i),A-X(t_i)),\quad &d_{4,i}=\rm{det}(X(t_{i+1})-X(t_i),B-X(t_i)).
\end{align*}
If $d_{1,i}d_{2,i}<0$ and $d_{3,i}d_{4,i}<0$, then there is indeed one crossing between $[AB]$ and $[ X(t_i)X(t_{i+1})]$. We then average on the total number of trajectories to obtain the average number of crossings according to the Monte Carlo method.

\paragraph{The crossing segments} For both methods, we compute the average number of crossings for two sets of segments: one set being more and more distant to Spiekeroog in the sea direction
$$
{S}_{\text{sea}} = \left\{[A_i B_i]~:~0\leq i\leq 60,~A_i=A+i\,\dd x\,R^i\nu_{[AB]},~B_i=B+R^i\left(B+i\,\dd x\,\nu_{[AB]}-A\right)\right\},$$
with $A=(6.9,53.7)^T$, $B=(7.5,53.77)^T$ and $R$ the rotation matrix of angle $\theta=\pi/247$, and the other set of segments being also more and more distant to Spiekeroog but in the inland direction
$$
{S}_{\text{inland}}=\left\{[A'_i B'_i]~:~0\leq i\leq 60,~A'_i=A'+i\,\dd x\,\nu_{[A'B']},~B'_i=B'+i\,\dd x\,\nu_{[A'B']}\right\},
$$
with $A'=(7.45 , 53.7)^T$ and $B'=(8.05,53.7)^T$, as presented in Figure \ref{traj_seg}.

\begin{figure}[ht]
\centering
\includegraphics[width=6.5cm]{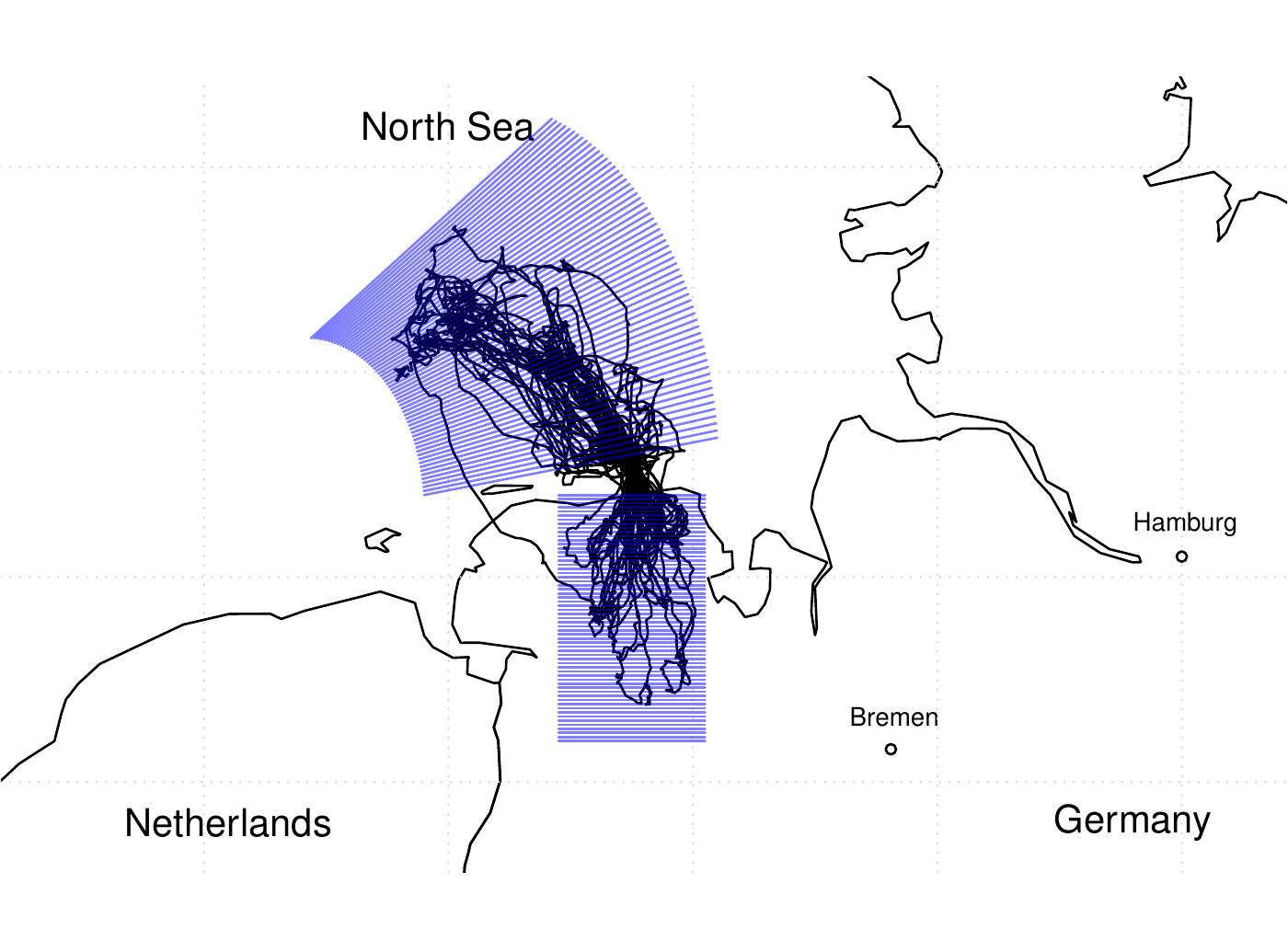}\qquad\qquad\includegraphics[width=6.5cm]{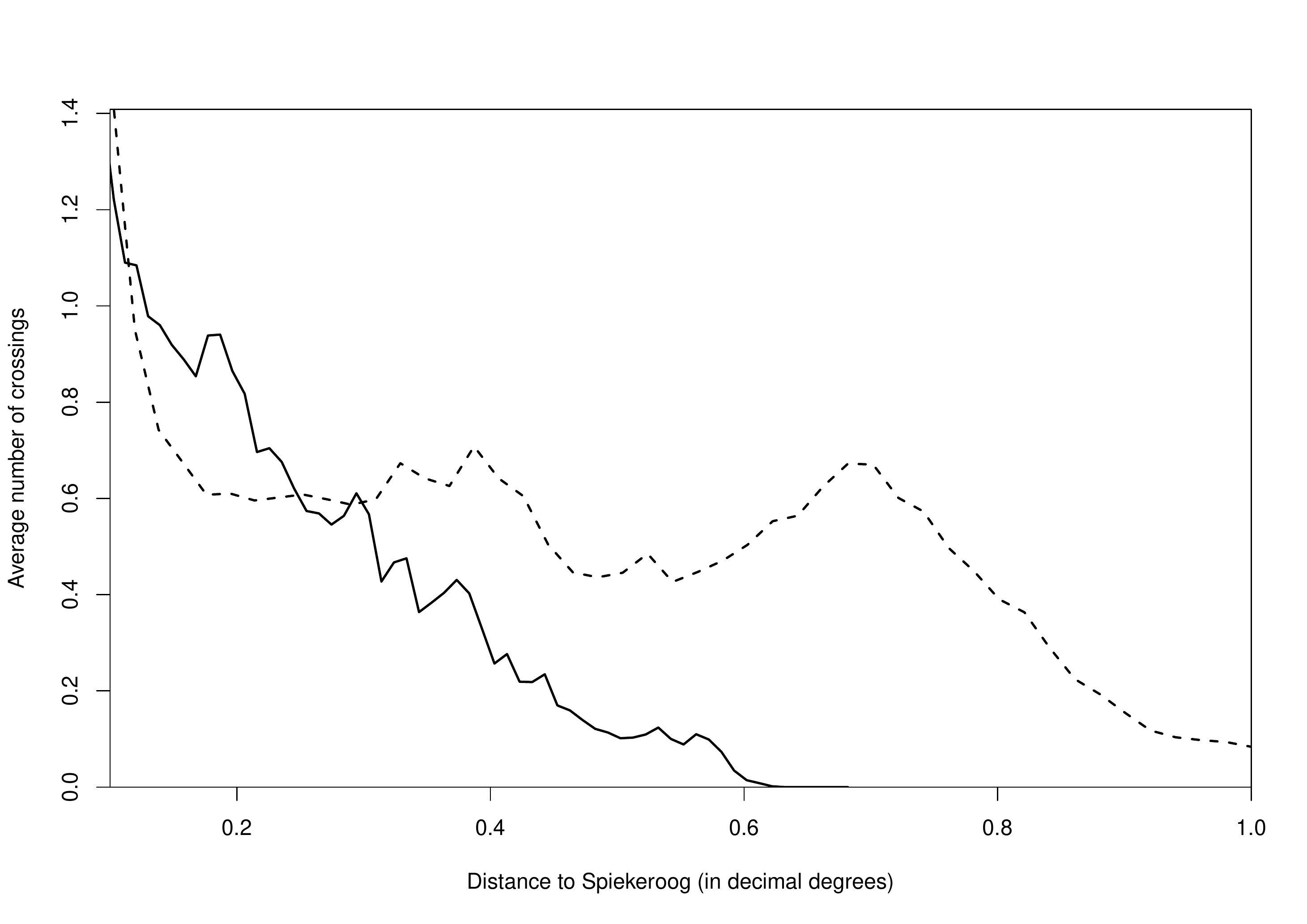}
\caption{Lesser black-backed gulls' trajectories with the two sets of crossings segments $S_{\text{sea}}$ and $S_{\text{inland}}$ (left) and average number of crossings of a daily bird's trajectory in the sea direction (dashed line) and inland direction (solid line) with respect to the distance to Spiekeroog in decimal degrees (right).}
\label{traj_seg}
\end{figure}

\paragraph{Numerical results} In Figure \ref{res_realdata} are displayed the average numbers of crossings of a daily lesser black-backed gull's trajectory for the $S_{\text{sea}}$ and $S_{\text{inland}}$ sets of segments, ordered with respect to the distance of the segment to Spiekeroog for the non-stationary Kac-Rice (left) and the Monte Carlo (right) methods. First, we notice that the average numbers of crossings towards sea and inland follow two distinct trends. The average number of crossings towards inland decreases when the distance to Spiekeroog grows whereas the mean number of crossings towards open sea begins with a decreasing but then shows two episodes of growing before going down to zero.  We also observe that there are crossings towards sea farther and more often than towards inland: the average number of crossings towards sea is above $0.4$ until distance $0.8$ (in decimal degrees) whereas this average number of crossings goes below $0.4$ from half this distance, that is $0.4$ (in decimal degrees).
These phenomena can be explained in light of \cite{Garthe16}. Indeed, the two episodes of crossing growing for marine trips correspond to areas where beam trawlers discard fish \cite[Figure 6]{Garthe16}. These discardings are highly attended by the gulls that travel longer distances than towards inland to reach the discarding areas. Moreover, the gulls certainly stay a while above these locations to feed upon these discards, increasing their average number of crossings at these places. The simpler behavior of gulls on land \cite[p.\,15]{Garthe16} explains the monotone decreasing of the average number of crossings towards inland.

\begin{figure}[ht]
\centering
\includegraphics[width=6.5cm]{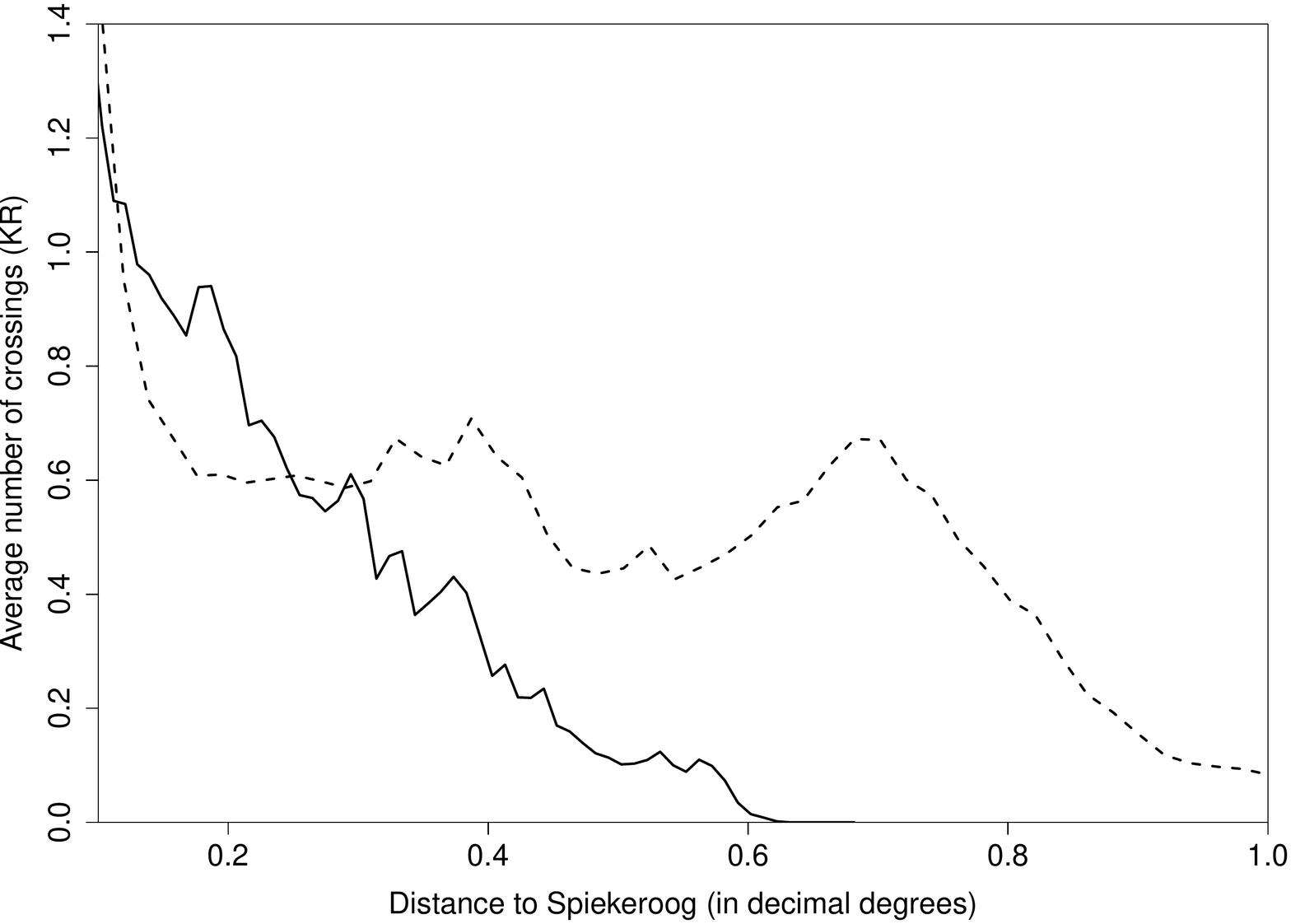}\qquad\qquad\includegraphics[width=6.5cm]{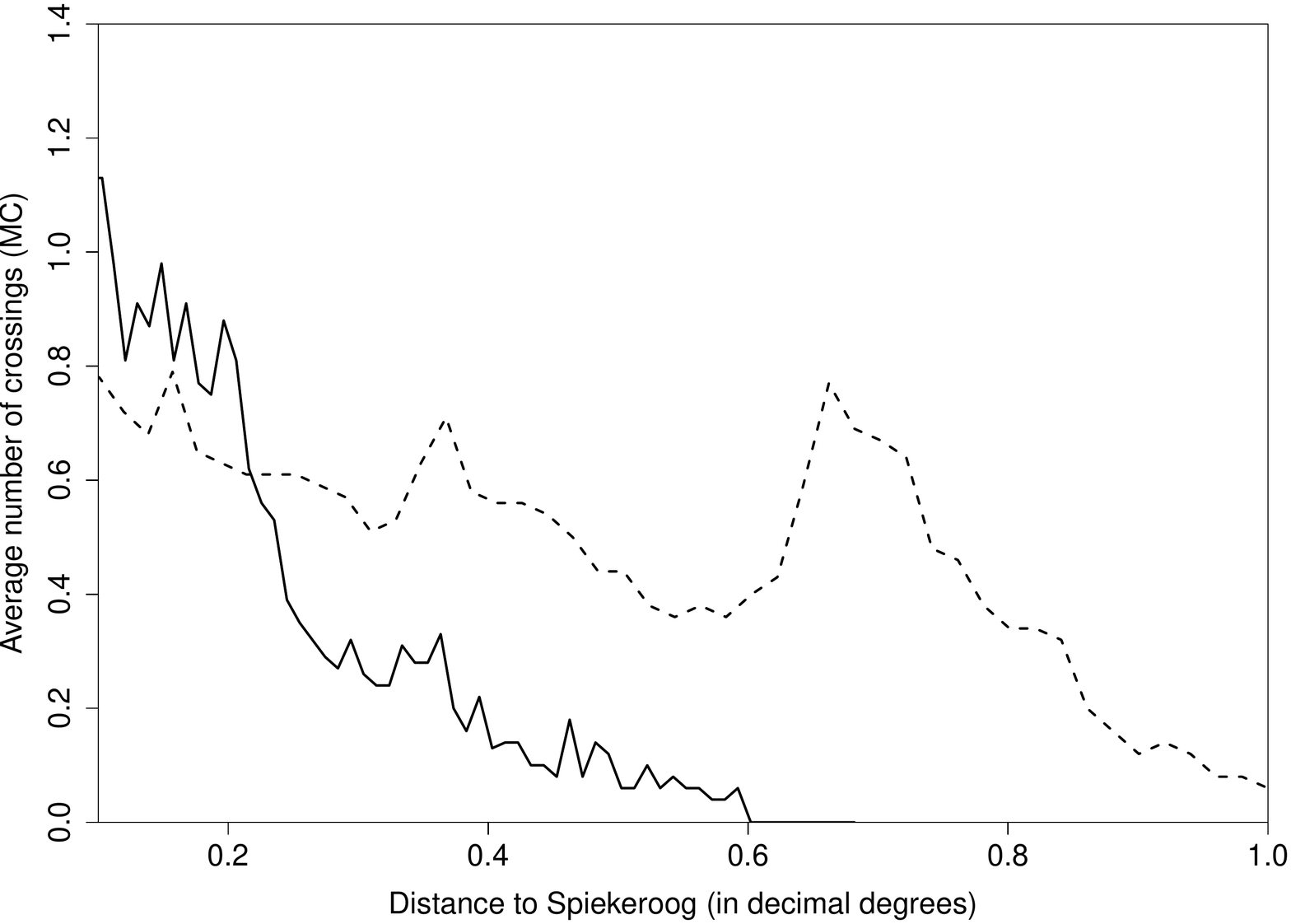}
\caption{Average number of crossings of a daily bird's trajectory in the sea direction (dashed line) and inland direction (solid line) with respect to the distance to Spiekeroog in decimal degrees, for the non-stationary Kac-Rice estimator (left) and the Monte Carlo estimator (right).}
\label{res_realdata}
\end{figure}

\noindent
Figure \ref{res_realdata2} allows for comparisons between the non-stationary Kac-Rice and the Monte Carlo estimators towards earth (left) and towards sea (right). Towards sea, the behaviors are similar except near the net, where the average number of crossings for the non-stationary Kac-Rice estimator is bigger than for the Monte Carlo estimator. Towards earth, we observe that the Monte Carlo method always gives lower estimation results than the non-stationary Kac-Rice method does. This phenomenon is expected in light of the simulation results of Section \ref{s:simu}: in most cases, the Monte Carlo estimator underestimates the average number of crossings. Moreover, we also notice that the Kac-Rice estimation is smoother than the Monte Carlo one. This roughness in the average number of crossing given by the Monte Carlo method may let think to the existence of singularities in the law of the position of the birds, a fact that, far from the net, is highly unlikely. The approximation of the density for the Kac-Rice estimator allows for estimating the average number of crossings also when few data are available, giving smoother and better estimations.

\begin{figure}[ht]
\centering
\includegraphics[width=6.5cm]{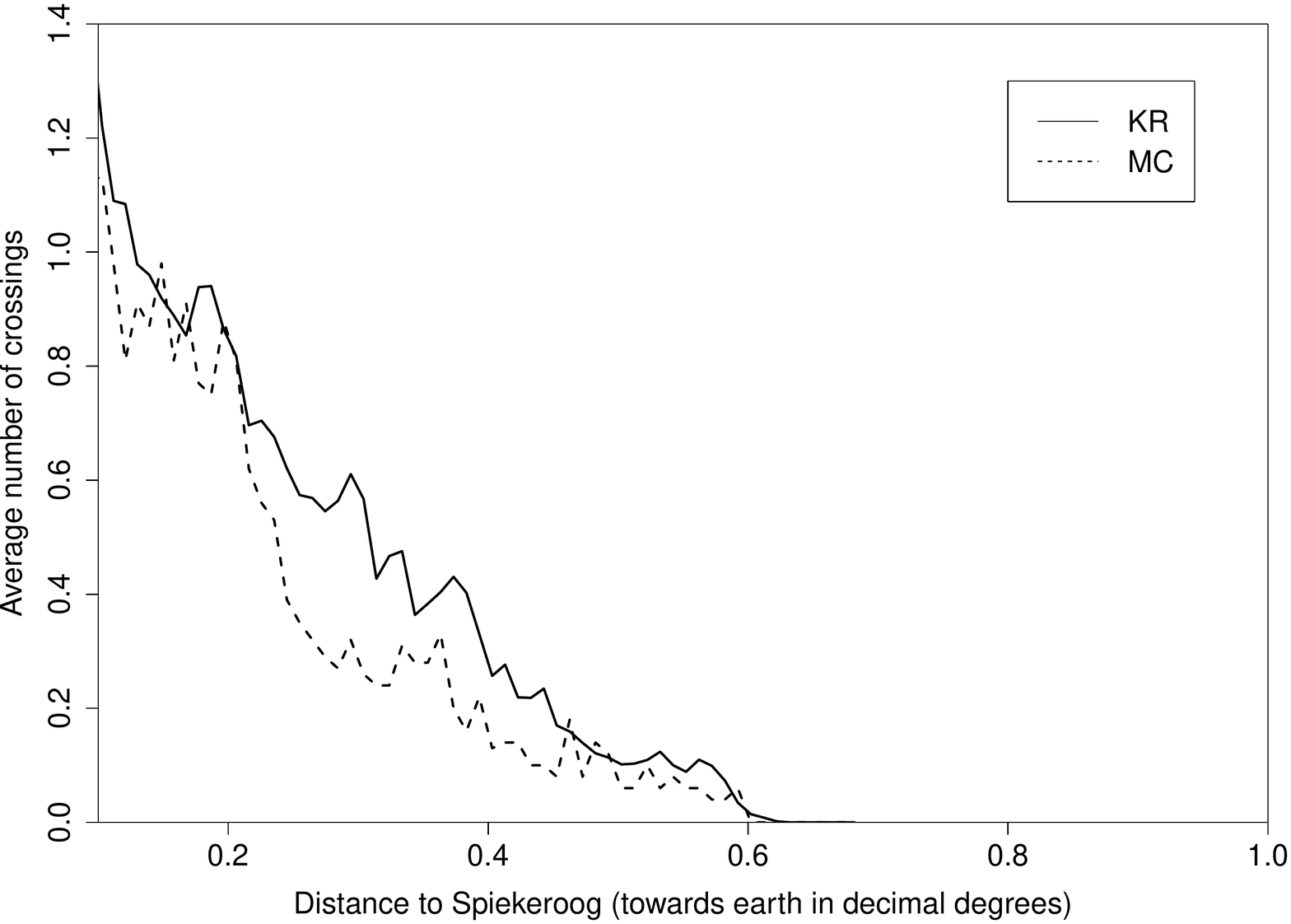}\qquad\qquad\includegraphics[width=6.5cm]{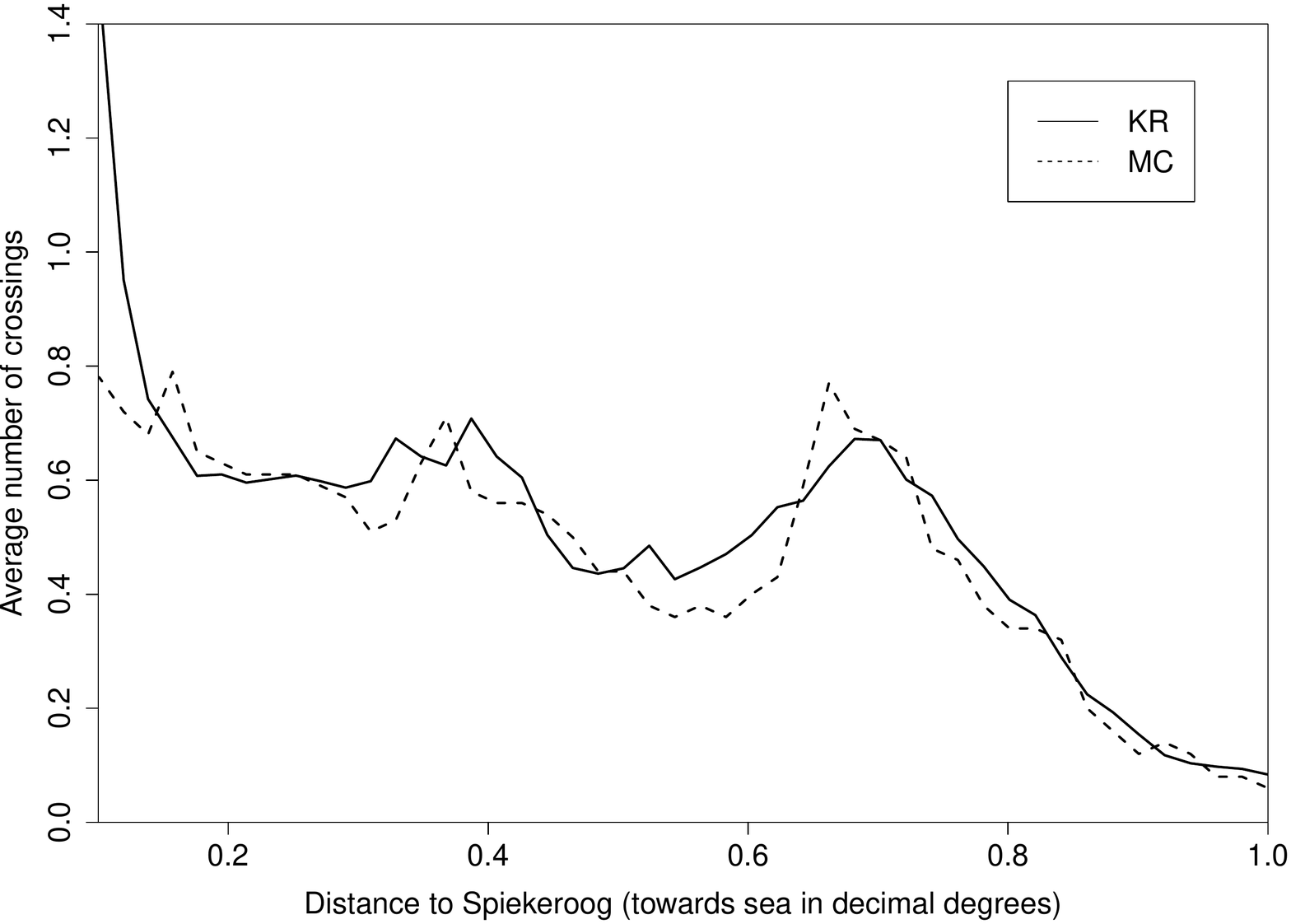}
\caption{Comparison between the two methods for estimating the average number of crossings towards earth (left) and towards sea (right).}
\label{res_realdata2}
\end{figure}

%%%%%%%%%%%%%%%%%%%%%%%%%%%%%%%%%%%%%%%%%%%%%%%%%%%%%%%%%%%%%%%%%%%%%%%%%%%%%%%%

\bibliographystyle{acm}
\bibliography{biblio}

%%%%%%%%%%%%%%%%%%%%%%%%%%%%%%%%%%%%%%%%%%%%%%%%%%%%%%%%%%%%%%%%%%%%%%%%%%%%%%%%

\appendix

\section{Proofs of Section \ref{s:kr}}
\label{app:proof2}

\subsection{Proof of Theorem \ref{crois:d1}}
\label{app:ss:crois:d1}

We start the proof with the same arguments as in the proof of \cite[Theorem 12]{DM15}. Since $r(x)\neq0$, we can apply Kac's counting formula to almost all trajectories of $X$. Indeed, the continuity of $r$ implies that between two successive jumps,  the trajectories of $X$ are $\mathcal{C}^1$ and such that $\dd X(t)/\dd t= r(X(t))$ for almost every $t\in[0,H]$. Moreover, under Assumption \ref{assumption:density}, the level $x$ is almost-surely not reached at $H$ nor at the jump times of the process. Moreover, Assumptions \ref{assump:dd:tangent} and \ref{assumption:density} imply that the trajectories of $X$ have no tangencies at the level $x$. Let us  also remark that the number of crossings of the level $x$ is bounded by the number of jumps of $X$ plus one, and has therefore finite expectation. Thus, we can apply Kac's counting formula \eqref{kac:formula} and have almost-surely
\begin{equation*}
c_x(H)=\lim_{\delta\to0}\frac{1}{2\delta}\int_0^H|r(X(t))|\mathbb{1}_{|X(t)-x|\leq \delta}\dd t.
\end{equation*}
Let us now take another way than in the proof of \cite[Theorem 12]{DM15}. For any $\delta>0$ we have
\begin{align*}
&\frac{1}{2\delta}\int_0^H|r(X(t))|\mathbb{1}_{|X(t)-x|\leq \delta}\dd t\\
=\,&\frac{1}{2\delta}\int_0^H(|r(X(t))|-|r(x)|)\mathbb{1}_{|X(t)-x|\leq \delta}\dd t~+~|r(x)|\frac{1}{2\delta}\int_0^H\mathbb{1}_{|X(t)-x|\leq \delta}\dd t.
\end{align*}
The left-hand-side converges towards $c_x(H)$ by Kac's counting formula. The first term of the right hand side is almost-surely, in absolute value, less or equal to
$$
\frac{1}{2\delta}\int_0^H|r'(\xi_{t,x})||X(t)-x|\mathbb{1}_{|X(t)-x|\leq \delta}\dd t\leq\frac12\int_0^H|r'(\xi_{t,x})|\mathbb{1}_{|X(t)-x|\leq \delta}\dd t,
$$
with $\xi_{t,x}\in(X(t),x)$. The term
$$
\int_0^H|r'(\xi_{t,x})|\mathbb{1}_{|X(t)-x|\leq \delta}\dd t
$$
converges almost-surely towards zero since $X(t)$ has a density for all time $t$, $r'$ is continuous and $\xi_{t,x}$ converges towards $x$ when $\delta$ goes to zero. Finally we have almost surely
$$
c_x(H)=\lim_{\delta\to0}|r(x)|\frac{1}{2\delta}\int_0^H\mathbb{1}_{|X(t)-x|\leq \delta}\dd t,
$$
which states the expected result.

\subsection{Proof of Corollary \ref{cor:dmf}}
\label{app:ss:cor:dmf}

First, let us remark that the fact that $l_x(H)$ has finite expectation is just a consequence of Fatou's lemma and Fubini-Tonelli theorem,
\begin{align*}
\EE(l_x(H))\leq\liminf_{\delta\to0}\frac{1}{2\delta}\EE\bigg(\int_0^H\mathbb{1}_{|X(t)-x|\leq \delta}\dd t\bigg)=\liminf_{\delta\to0}\frac{1}{2\delta}\int_{x-\delta}^{x+\delta}\int_0^Hp_{X(t)}(y)\dd t\,\dd y=\int_0^H p_{X(t)}(x)\dd t.
\end{align*}
Taking expectation on both sides of \eqref{crois:dim:1:formula}, we obtain the first equality. We are left to show that
$$
\EE(l_x(H))=\int_0^H p_{X(s)}(x)\dd s.
$$
For this purpose we first use the dominated convergence theorem
\begin{align*}
\EE(l_x(H))&=\EE\bigg(\lim_{\delta\to0}\frac{1}{2\delta}\int_0^H\mathbb{1}_{|X(t)-x|\leq \delta}\dd t\bigg)\\
&=\lim_{\delta\to0}\frac{1}{2\delta}\int_0^H\PP(|X(t)-x|\leq \delta)\dd t .
\end{align*}
The domination comes from the fact that the term $\frac{1}{2\delta}\int_0^H\mathbb{1}_{|X(t)-x|\leq \delta}\,\dd t$ is bounded, for some $\delta_0$ small enough, by the number of jumps of the process between times $0$ and $H$, which is integrable, times a quantity of order
$$
\left(\inf_{\{x'\,:\,|x'-x|\leq \delta_0\}}|r(x')|\right)^{-1}.
$$
The existence of $\delta_0$ such that $\inf_{\{x';|x'-x|\leq \delta_0\}}|r(x')|>0$ is ensured by the fact that $r(x)\neq0$ and $r$ is continuous. Then, from Fubini-Tonelli theorem, using the fact that $z\mapsto \int_0^Hp_{X(t)}(z)\dd t$ is continuous on a neighborhood of $x$, we have
\begin{align*}
\EE(l_x(H))=\lim_{\delta\to0}\frac{1}{2\delta}\int_0^H\PP(|X(t)-x|\leq \delta)\dd t&=\lim_{\delta\to0}\frac{1}{2\delta}\int_0^H\int_{x-\delta}^{x+\delta}p_{X(t)}(y)\dd y\dd t\\
&=\lim_{\delta\to0}\frac{1}{2\delta}\int_{x-\delta}^{x+\delta}\int_0^Hp_{X(t)}(y)\dd t\dd y\\
&=\int_0^H p_{X(t)}(x)\dd t,
\end{align*}
which ends the proof.

\subsection{Proof of Theorem \ref{th:kr:multidim}}
\label{app:ss:th:kr:multidim}

According to Assumption \ref{assumption:density}, the number of continuous crossings of $S$ by $X$ is almost-surely equal to the number of continuous crossings of $0$ by the process $(\rho(X(t))_{t\in[0,H]}$. Under Assumption \ref{assumption:time:mark} on the integrability of the number of jump times and Assumption \ref{assump:dd:tangent}, the number of continuous crossings $c_S(H)$ has finite expectation. Applying Kac's counting formula (\ref{kac:formula}) to this process, we almost-surely have
$$
c_S(H)=\lim_{\delta\to0}\frac{1}{2\delta}\int_0^H |(r(X(s)),\nu(X(s)))|\mathbb{1}_{(-\delta,\delta)}(\rho(X(s)))\|\nabla \rho(X(s))\|\dd s,
$$
where we recall that $\nu$ is the field of outward normals to $S$, extended continuously on a neighborhood of $S$. Then, by the same arguments as in dimension one (considering the application $x\mapsto (r(x),\nu(x))$ instead of $x\mapsto r(x)$), the exchange of expectation and limit is justified and gives
\begin{align*}
C_S(H)&=\lim_{\delta\to0}\frac{1}{2\delta}\int_0^H \int_{\R^d}|(r(x),\nu(x))|\mathbb{1}_{(-\delta,\delta)}(\rho(x)) p_{X(s)}(x)\dd x\,\|\nabla \rho(x)\|\dd s\\
&=\lim_{\delta\to0}\frac{1}{2\delta} \int_{\cal X}|(r(x),\nu(x))|\left[\int_0^H p_{X(s)}(x)\dd s\right]\,\|\nabla \rho(x)\|\mathbb{1}_{(-\delta,\delta)}(\rho(x))\dd x\\
&=\int_S|(r(x),\nu(x))|\int_0^H p_{X(s)}(x)\dd s\,\sigma_{d-1}(\dd x),
\end{align*}
the last equality being obtained as in \cite[Example 3.7.2, p.\,109]{K12}, using the fact that the map $x\mapsto |(r(x),\nu(x))|\int_0^H p_{X(s)}(x)\dd s$ is continuous (see Assumption \ref{assumption:density}) and thus measurable and bounded on a neighborhood of $S$.

%%%%%%%%%%%%%%%%%%%%%%%%%%%%%%%%%%%%%%%%%%%%%%%%%%%%%%%%%%%%%%%%%%%%%%%%%%%%%%%%

\section{Proofs of Section \ref{sec:sf}}
\label{app:proof3}

\subsection{Proof of Theorem \ref{thm:nonstatkr:consis}}
\label{app:ss:nonstat}

We split the difference $\widehat{C}^{(n,n_H)}_x(H)-{C}_S(H)$ as follows,
$$ \widehat{C}^{(n,n_H)}_S(H)-{C}_S(H) = T_1 + T_2 ,$$
with
\begin{eqnarray*}
T_1 &=& \int_S|(r(x),\nu(x))|\,\frac{H}{n_H-1}\sum_{j=1}^{n_H}[\widehat{p}^{(n)}_{X(h_j)}(x)-{p}_{X(h_j)}(x)]\,\sigma_{d-1}(\dd x) ,\\
T_2 &=& \int_S|(r(x),\nu(x))|\,\left[\frac{H}{n_H-1}\sum_{j=1}^{n_H}p_{X(h_j)}(x)-\int_0^H p_{X(s)}(x)\dd s\right] \,\sigma_{d-1}(\dd x).
\end{eqnarray*}
The second term $T_2$ is deterministic and independent of $n$. According to Assumption \ref{assumption:density}, the map $(x,t)\mapsto p_{X(t)}(x)$ is continuous on the compact set $S\times[0,H]$, thus, the convergence of the rectangle method implies that for any $x\in S$,
$$
\lim_{n_H\to\infty}\frac{H}{n_H-1}\sum_{j=1}^{n_H}p_{X(h_j)}(x)=\int_0^H p_{X(s)}(x)\dd s.
$$
Since $S$ is a compact hyper surface, thus bounded with finite hyper-volume, and the map $x\mapsto|(r(x),n(x))|$ is continuous on $S$, the second term $T_2$ goes to zero when $n_H$ goes to infinity. The almost-sure convergence of the first term to zero is a consequence of \cite[Theorem 1]{DW80}. Indeed, since the map $(x,t)\mapsto p_{X(t)}(x)$ is continuous on the compact set $S\times[0,H]$, we have, with probability one, for any $1\leq j\leq n_H$,
$$
\lim_{n\to\infty}\sup_{x\in S}|\widehat{p}^{(n)}_{X(h_j)}(x)-{p}_{X(h_j)}(x)|=0.
$$
Then, again, since $S$ is a compact hyper surface and the map $x\mapsto|(r(x),\nu(x))|$ is continuous on $S$, the first term $T_1$ goes to zero with probability one.

\subsection{Proof of Theorem \ref{thm:statkr:consis}}
\label{app:ss:stat}

We can write the difference $\widetilde{C}^{(n,n_H)}_S(H)-C_S(H)$ as follows,
$$
\widetilde{C}^{(n,n_H)}_S(H)-C_S(H)=\frac{H}{n}\sum_{i=1}^n \int_S|(r(x),\nu(x))|\left(\widehat{\mu}_{i,n_H}(x)-\mu(x)\right)\sigma_{d-1}(\dd x).
$$
Therefore, for any $\e>0$, we have
$$
\PP_\mu\left(|\widetilde{C}^{(n,n_H)}_S(H)-C_S(H)|\geq\e\right)\leq \sum_{i=1}^n\PP_\mu\left(\sup_{x\in S}|\widehat{\mu}_{i,n_H}(x)-\mu(x)|\geq \frac{\e}{\sigma_{d-1}(S)\sup_{x\in S}|(r(x),\nu(x))|}\right).
$$
By virtue of \cite[Theorem 6]{ADP16}, we have, for any $1\leq i\leq n$,
$$
\lim_{n_H\to\infty}|\widehat{\mu}_{i,n_H}(x)-\mu(x)|=0
$$
in $\PP_\mu$-probability, which states the expected result.

\end{document}